\documentclass[12pt]{article}
\usepackage{lmodern}
\usepackage{amssymb,amsmath,amsthm}
\usepackage{bbm}
\usepackage{ifxetex,ifluatex}
\usepackage{adjustbox} 
\usepackage{mathtools}
\usepackage{threeparttable}
\usepackage{fixltx2e} 
\ifnum 0\ifxetex 1\fi\ifluatex 1\fi=0 
  \usepackage[T1]{fontenc}
  \usepackage[utf8]{inputenc}
\else 
  \ifxetex
    \usepackage{mathspec}
  \else
    \usepackage{fontspec}
  \fi
  \defaultfontfeatures{Ligatures=TeX,Scale=MatchLowercase}
\fi
\IfFileExists{upquote.sty}{\usepackage{upquote}}{}
\IfFileExists{microtype.sty}{%
\usepackage[]{microtype}
\UseMicrotypeSet[protrusion]{basicmath} 
}{}
\usepackage[colorlinks = true,
            linkcolor = blue,
            urlcolor  = blue,
            citecolor = blue,
            anchorcolor = blue]{hyperref}
\PassOptionsToPackage{hyphens}{url} 
\hypersetup{
            pdftitle={Group Interaction Experiments},
            pdfauthor={Jiawei Fu},
            pdfborder={0 0 0},
            breaklinks=true}
\usepackage[margin=1in]{geometry}
\usepackage{graphicx,grffile} 
\makeatletter
\def\maxwidth{\ifdim\Gin@nat@width>\linewidth\linewidth\else\Gin@nat@width\fi}
\def\maxheight{\ifdim\Gin@nat@height>\textheight\textheight\else\Gin@nat@height\fi}
\makeatother
\setkeys{Gin}{width=\maxwidth,height=\maxheight,keepaspectratio}
\ifx\paragraph\undefined\else
\let\oldparagraph\paragraph
\renewcommand{\paragraph}[1]{\oldparagraph{#1}\mbox{}}
\fi
\ifx\subparagraph\undefined\else
\let\oldsubparagraph\subparagraph
\renewcommand{\subparagraph}[1]{\oldsubparagraph{#1}\mbox{}}
\fi

\newcommand\independent{\protect\mathpalette{\protect\independenT}{\perp}}
\def\independenT#1#2{\mathrel{\rlap{$#1#2$}\mkern2mu{#1#2}}}
\def\Var{\text{Var}\,}
\def\E{\text{E}\,}

\def\Cov{\text{Cov}\,}

\def\N{\text{N}\,}

\renewcommand{\a}{\mathbf{a}}

\makeatletter
\def\fps@figure{htbp}
\makeatother

\usepackage{setspace}
\usepackage{ragged2e,subcaption,caption}
\usepackage[linesnumbered,ruled,vlined]{algorithm2e}
\usepackage{longtable,booktabs}

\usepackage{tikz}
\usetikzlibrary{positioning,shapes.geometric,chains,calc,shapes}

\newtheorem{theorem}{Theorem}
\newtheorem{lemma}{Lemma}
\newtheorem{prop}{Proposition}
\newtheorem{corollary}{Corollary}
\newtheorem{assn}{Assumption}

\usepackage{natbib}

\usepackage{comment}

\title{Inference for Group Interaction Experiments\thanks{Jiawei Fu is Assistant Professor of Political Science, Duke University (Email: \url{jiawei.fu@duke.edu});
 Cyrus Samii is Professor of Politics, New York University (Email: \url{cds2083@nyu.edu}); 
 Ye Wang is Assistant Professor of Political Science, University of North Carolina at Chapel Hill (Email: \url{yewang@unc.edu}). 
 PM Aronow offered crucial suggestions for the analysis presented in this paper. 
 For helpful comments, we thank Peng Ding, Donald Green, Christopher Harshaw, Lihua Lei, Xinran Li, Fredrik S\"avje, Matthew Tyler, Yiqing Xu, Anqi Zhao, and seminar participants at EPSA, ACIC, APSA, Columbia University, Stanford University, University of California-Berkeley, and the Isaac Newton Institute. An AI usage disclosure appears at the end of the main text.}}
\author{Jiawei Fu \and Cyrus Samii \and Ye Wang}
\providecommand{\institute}[1]{}
\date{\today}
\begin{document}
\maketitle


\medskip
\begin{abstract}

A common experimental research design is one in which individuals are randomly allocated into groups that then interact under different group-level treatment conditions. We develop design-based inference for such ``group interaction'' experiments, covering scenarios in which groups are either fixed or randomly formed and in which potential outcomes are either fixed relative to others' group assignments or subject to interference. For each scenario, we characterize the causal estimand that the design targets and the inferential strategy appropriate to it. Working in a sparse-sampling asymptotic regime, we show that cluster-robust inference remains consistent and accounts for dependencies from various sources when interference is present, delivering valid inference on marginalized exposure effects. When interference is absent and groups are formed randomly, the design reduces to an individually randomized experiment, and individual-level heteroskedasticity-robust inference suffices for the average treatment effect. Our results on the asymptotic distribution of commonly used estimators rely on a novel coupling strategy that may be useful for design-based inference in other complex experiments.
\end{abstract}
\textbf{Keywords:} \emph{Causal Inference, Interference, Marginalized Exposure Effects, Group Interaction Experiments, Design-Based Inference, Cluster-Robust Inference}

\clearpage
\doublespacing

\clearpage



\section{Introduction}
A common experimental design in social science has individuals assigned into groups where they interact with their group mates under different group-level treatment assignments. We call these ``group interaction'' experiments.\footnote{\citet{lohr2014partially} use ``partially nested'' design and \citet{pals2008individually} use ``individually randomized group treatment trials'' for the same design, although these phrasings does not evoke group interaction as explicitly.} 
Prominent examples include \citet{mendelberg2014does}, who assigned individuals to deliberation groups varying by gender composition and decision rule, and \citet{iacovone2022improving}, who assigned business owners to individual or group-based consulting.

Despite similar designs, analytical approaches often differ. \citet{mendelberg2014does} used cluster-robust standard errors, appealing to the intuition that group interaction creates dependence, as others have on similar grounds \citep{pals2008individually, candel2025efficient}, sometimes via multilevel models \citep{lohr2014partially}. \citet{iacovone2022improving}, by contrast, relied on individual-level randomization to justify inference without clustering \citep{abadie2023should}. This discrepancy reflects a broader confusion---spanning education, psychology, and economics---about whether individual assignment to groups warrants clustering when treatments involve interaction.\footnote{A recent Bluesky exchange with prominent applied microeconomists exemplifies issues that this paper tries to clarify: \url{https://bsky.app/profile/seema.bsky.social/post/3lbi3yzytfk2g}}

To our knowledge, this paper provides the first design-based analysis of inference for group interaction experiments. Group interaction implies the potential for {\it interference} such that an individual's potential outcomes may depend on {\it how others are assigned} to groups and treatments \citep{aronow2021spillover}. This represents a departure from the standard ``stable unit treatment value assumption'' (SUTVA) upon which standard inferential results are based \citep{imbens2015causal}.
To build up insight on the factors that affect inference in such experiments, we consider scenarios in which interference is present or not, and when groups are randomly assigned or fixed. 
We define valid estimands in each of these scenarios, using the concept of average marginalized exposure effects when interference is present. Marginalized exposure effects are a now-common concept in the literature on causal inference with interference \citep{aronow2021spillover, hudgens2008toward, li2019randomization, hu2022average, savje2021average}. They refer to effects defined in terms of ``averages of averages''---that is, the population average of individual-level averages over different potential exposures. 

Next, we derive convergence results for commonly used treatment effect estimators---namely, difference-in-means, difference in inverse probability weighted (IPW) means, and covariate-adjusted treatment effects regression---in relation to the estimands in each scenario. These results rest on a sparse-sampling asymptotic regime, in which the group sizes remain fixed and the individual-level sampling proportion goes to zero at a fast rate. Generalization to covariate-controlled regression estimators and generalized linear model predictions follows naturally. This regime operates similarly to the superpopulation regime of \citet{bai2022optimality} and \citet{bai2024primer} for analyzing experimental designs under SUTVA.

Building up to analyzing the implications of interference and random group formation, we first consider {\it fixed} groups (rather than randomly constructed ones) in the {\it absence} of interference. In this case, the results are essentially identical to what \citet{abadie2023should} and \citet{su2021model} have derived: the treatment effect estimators given above and cluster-robust standard errors are consistent for inference on the average treatment effect (ATE) in cluster-randomized trials. When we instead introduce {\it random group formation} while maintaining no interference, the distribution of the treatment effect estimators match what one obtains from an individual-level randomized trial targeting the ATE. This case corresponds to the intuition, invoked by some authors \citep{li2019randomization, iacovone2022improving}, that the data can be analyzed on the basis of ``individual random assignment.'' The two cases differ in their variance: it is larger under fixed groups than under random group formation and, in the former case, increases with the group-level intraclass correlation induced by outcome homophily.

However, interference changes things. With {\it fixed} groups and interference, the treatment effect estimators are consistent for an {\it average marginalized exposure effect}, analogous to the ``average total effect'' of \citet{hudgens2008toward} under full group-level treatment saturation. Interference adds a second source of intraclass correlation, on top of any already induced by outcome homophily in the pre-existing groups. Because the cluster-robust standard error captures within-group dependence whatever its source, it remains consistent for inference on this effect under our sparse-sampling regime. Finally, with random allocation into groups, the homophily-induced intraclass correlation is absent by design, but the one generated by interference still presents. Treatment effect estimators now target an average marginalized exposure effect, averaging over potential outcomes under different groupings of individuals. The cluster-robust standard error is consistent for inference on this estimand, while the heteroskedasticity-robust one is not. A key take-away from our analysis is that, when interference is present, the advice to ``cluster at the level of randomization'' can be misleading. Rather, what is required is to cluster at the level of exposure variation, and this depends on the full list of potential outcome inputs. 

We further derive the asymptotic distribution of the treatment effect estimators under random group formation, both with and without interference, through a new coupling argument \citep{hajek1960limiting, polyanskiy2025information}. We view random group formation as sampling from the collection of all possible groups that could be formed from the population, subject to the restriction that overlapping groups cannot be sampled together. We then show that this restricted sampling process can be approximated by one that draws groups independently, so that a standard central limit theorem applies at the group level \citep{ohlsson1989asymptotic}. This approach may be of independent interest for design-based inference in other complex designs. Taken together, our analysis unifies the four design--outcome cases within a single framework and provides valid inference for each.

The sections that follow explicate these results formally. Section~\ref{sec:setting} lays out the inferential setting, defining the design and outcome cases. Section~\ref{sec:est} introduces the estimands, along with the treatment effect estimators and their variance estimators. Section~\ref{sec:inference} presents our main result, characterizing the limiting distribution of these estimators and the validity of each variance estimator across the design and outcome cases, as well as a test for interference based on the within-group intraclass correlation. Section~\ref{sec:simulation} presents evidence from a simulation study to validate the results. Finally, Section~\ref{sec:app} applies these methods to the \citet{mendelberg2014does} and \citet{iacovone2022improving} experiments, illustrating their practical importance, and Section~\ref{sec:con} concludes.

\section{Inferential settings}\label{sec:setting}
When treatments are administered in group settings, units may interact within their groups. Such interaction means that a unit's potential outcomes may depend on who else is assigned to its group---a form of interference \citep{cox1958planning, sobel2006randomized}. Group formation may lie outside the experimenter's control and be homophily-based, in which case potential outcomes typically exhibit within-group (intraclass) correlation even before any interaction; interaction can then add further correlation after treatment. Alternatively, the experiment may assign group membership at random. This latter case is our main focus, but we also treat (i) fixed groups and (ii) settings without interference, to clarify the distinct implications of random group formation, interference, and their combination. Throughout the paper, we use upper-case letters for random variables, lower-case letters for their realizations, and calligraphic letters for sets. 

\subsection{Experimental designs}
Let $\mathcal{U}$ be a reference population of $n = |\mathcal{U}|$ units. We consider two research designs, distinguished by whether groups are fixed or randomly formed.
\begin{description}
\item[Design Case 0: Fixed groups.] Units in the population belong to pre-existing groups. Using simple random sampling without replacement, we draw $G_N$ groups in sequence from $\mathcal{U}$; call this sample $\mathcal{N}$, with $|\mathcal{N}| = N$. We index the sampled groups $g=1,...,G_N$, assigning $g=1,...,G_{N1}$ to ``treatment'' and $g=G_{N1}+1,...,G_N$ to ``control.'' Treatment status is captured by a group treatment indicator $Z_g \in \{0,1\}$, equal to $1$ for treated groups and $0$ for control groups. Each group $g$ has size $M_g$, so the sample contains $N \equiv \sum_{g=1}^{G_N} M_g$ units, of which $N_1$ are under treatment and $N_0$ under control.

\item[Design Case 1: Random group formation.] The design is intended to construct $G_N$ randomly formed groups, indexed by $g=1,...,G_N$. Each group $g$ has a pre-specified target size $M_g \in \mathcal{M} = \{m_1,...,m_K\}$, where $\mathcal{M}$ is the set of $K$ distinct admissible group sizes and $m_k$ is the $k$-th such size. The simplest case is $K=1$ with $M_g = M$ for all $g$ (homogeneous group sizes).\footnote{With a slight abuse of notation, we also use $M$ to denote the single admissible group size in the homogeneous case.} Using simple random sampling without replacement, we draw $N \equiv \sum_{g=1}^{G_N} M_g$ {\it individual} units from $\mathcal{U}$, call this sample $\mathcal{N}$, and randomly partition them into the $G_N$ groups. The partition is created through a random permutation of an index vector of length $N$ of the form $\underbrace{1,...,1}_{M_1},...,\underbrace{g,...,g}_{M_g},...,\underbrace{G_N,...,G_N}_{M_{G_N}}$, in which each index value $g$ appears $M_g$ times, so that each unit is randomly assigned to a group. Groups $g=1,...,G_{N1}$ are then assigned to ``treatment'' and $g=G_{N1}+1,...,G_N$ to ``control,'' with group treatment indicator $Z_g \in \{0,1\}$ equal to $1$ for treated groups and $0$ for control groups. We write $G_N(m)$ for the number of groups with $M_g = m \in \mathcal{M}$, $G_{N1}(m)$ for the number of these assigned to treatment, and $G_{N0}(m) = G_N(m) - G_{N1}(m)$ for the number assigned to control. Because group sizes can differ across arms, we let $\mathcal{M}_1$ and $\mathcal{M}_0$ denote the distinct group sizes appearing in the treatment and control arms, respectively; under homogeneous group sizes, both reduce to the same singleton. $N_1$ and $N_0$ denote the numbers of units under treatment and control, respectively, while $N_z(m)$ denotes the number of units in size-$m$ groups assigned to treatment status $z$.
\end{description}

Given random sampling, Design Case 0 is equivalent in distribution (by exchangeability of the sampling sequence) to a design in which we first randomly sample $G_N$ groups and then randomly assign them to treatment or control. The same would hold for Design Case 1 (by exchangeability of the sampling and permutation orders) if group sizes were homogeneous ($M_g = M$ for all $g$); but our description allows group sizes to vary, as in both of the motivating examples from \citet{mendelberg2014does} and \citet{iacovone2022improving}. The treatment and control arms may contain different group sizes, as in \citet{iacovone2022improving}, where control units are left ungrouped, each forming a singleton ($\mathcal{M}_0 = \{1\}$).

In Design Case 1, the number of groups of each size, $G_N(m)$, is itself a design choice that researchers can set for different purposes---for instance, to equalize the number of groups across sizes ($G_N(m)$ constant), or to equalize the number of units across sizes ($G_N(m) \propto 1/m$). We have described complete randomization of groups to treatment and control for illustration; the framework also accommodates more sophisticated assignment mechanisms, such as randomizing treatment separately within each group size (i.e., blocking on group size), which the varying-size results below handle through size-specific treatment probabilities.

For both designs, let $\mathcal{A}_g$ denote the set of units in group $g$, with $|\mathcal{A}_g| = M_g$. For any unit $i$ in the sample, we let $A_i$ denote the index of the group to which unit $i$ belongs, so that $A_i \in \{1,...,G_N\}$ and $A_i = g$ if and only if $i \in \mathcal{A}_g$, or equivalently,
$$
A_i = \sum_{g=1}^{G_N} g \cdot \mathbf{1}\{i \in \mathcal{A}_g\}.
$$
The variable $A_i$ thus captures the random group assignment induced by the sampling and, in Design Case 1, the partitioning. When groups are fixed (Design Case 0), $\mathcal{A}_g$ is the pre-existing group that receives index $g$ based on the order in which groups were sampled; when groups are formed randomly (Design Case 1), $\mathcal{A}_g$ is the set of units assigned to position $g$ by the partitioning process. Unit $i$'s treatment status is that of its group, $Z_{A_i} \in \{0,1\}$.

\subsection{Potential outcomes}
For each $i \in \mathcal{U}$ define the potential outcome $Y_i\left(z, \mathcal{A}\right) \in \mathbb{R}$, where $z \in \{0,1\}$ is the treatment status of $i$'s group and $\mathcal{A} \subseteq \mathcal{U}$ with $i \in \mathcal{A}$ is the group to which $i$ belongs. This is a general definition of potential outcomes that allows for arbitrary heterogeneity on the basis of treatment and group membership. Below, we will impose restrictions on the potential outcomes to define the cases with and without interference. First, we assume the following high level assumptions:
\begin{assn}\label{assn:po}
For all $i$ in $\mathcal{U}$ and all assignment conditions, we have 
\begin{enumerate}
\item (bounded potential outcomes) $|Y_i(z, \mathcal{A})| < C$ for all $z \in \{0,1\}$ and $\mathcal{A} \ni i$, where $C\in \mathbb{R_+}$, and
\item (lower bound on variance) The distribution of $Y_i(z, \mathcal{A})$ in $\mathcal{U}$ is non-degenerate such that sample variances of potential outcomes are bounded away from zero.
\end{enumerate}
\end{assn}
Bounded potential outcomes ensures fast convergence rates of sample statistics. The non-degeneracy condition ensures that true sampling and randomization variances are bounded away from zero.  

We consider two types of assumptions on interference, which have implications for the potential outcomes and the observed outcomes:
\begin{description}
    \item[Outcome Case 0: No interference.] For every $i \in \mathcal{U}$, potential outcomes depend only on the realized group-level treatment assignment for unit $i$; that is, $Y_i(z, \mathcal{A}) = Y_i(z)$ for all $\mathcal{A} \ni i$, and for each individual in the sample, we observe the outcome $Y_i = Y_i(Z_{A_i})$.
    \item[Outcome Case 1: Group interference.] For every $i \in \mathcal{U}$, potential outcomes depend on both the group-level treatment assignment and which other individuals are assigned to the same group as unit $i$; that is, $Y_i(z, \mathcal{A})$ may vary across $\mathcal{A} \ni i$, and for each individual in the sample, we observe the outcome $Y_i = Y_i(Z_{A_i}, \mathcal{A}_{A_i})$. 
\end{description}

The dependence of $Y_i(z, \mathcal{A})$ on $\mathcal{A}$ lets unit $i$'s potential outcomes vary with which other units share its group---a form of interference \citep{cox1958planning, sobel2006randomized} restricted to within groups. This restriction is similar to partial interference of \citet{hudgens2008toward}, but the source of interference differs: Hudgens and Halloran fix the group structure and vary the within-group treatment allocation, whereas we fix treatment within groups and let group composition vary randomly, allowing for group interaction, outcome-based contagion, or other composition-driven heterogeneity. Table~\ref{tab:case-table} summarizes the four cases that arise under the different design and outcome assumptions, which our analysis treats in turn.


\begin{table}[]
    \centering
\begin{center}
\begin{tabular}{cc|l|l|}
& & \multicolumn{2}{c|}{Random Group Formation}\\
& & No & Yes\\
\hline
Interference & No & Design Case 0 & Design Case 1 \\
& & Outcome Case 0 & Outcome Case 0\\
\hline
& Yes & Design Case 0 & Design Case 1 \\
&  & Outcome Case 1 & Outcome Case 1 \\
\hline
\end{tabular}
\end{center}
    \caption{\it Cases that emerge under the different design and outcome assumptions.}
    \label{tab:case-table}
\end{table}

\subsection{Illustrations}
Figure~\ref{fig:toy-design} provides a toy illustration of Design Case 1 with $N=4$ sampled units, $G_N = 2$ groups, and $M_1 = M_0 = 2$ units per group. Panel~A shows how the group index vector $(1,1,2,2)$ is randomly permuted across the four unit positions, yielding $\binom{4}{2}=6$ equally likely partitions. Panel~B shows, for unit $i=1$, the potential outcome inputs under each permutation: without interference, the potential outcome depends only on the group-level treatment status $z_{a_1}$; with group interference, it also depends on the identity of unit 1's groupmate. The combination of group-level treatment status and groupmate assignments constitutes the ``exposures'' that the design generates \citep{aronow2017estimating}.


\begin{figure}[!t]
\centering

\begin{tabular}{cl}
\multicolumn{2}{c}{\bf Panel A: Random permutation of group index vector}\\
\hline
& Index vector to permute: $(1,\; 1,\; 2,\; 2)$ \\
& Unit labels in fixed order: $i=1,2,3,4$ \\
\hline
& \\[-6pt]
& All distinct permutations $\a = (a_1, a_2, a_3, a_4)$: \\[4pt]
$\a^{(1)}$: & $(1,\; 1,\; 2,\; 2)$ \quad $\Rightarrow$ \quad Group 1: $\{1,2\}$, \; Group 2: $\{3,4\}$ \\[2pt]
$\a^{(2)}$: & $(1,\; 2,\; 1,\; 2)$ \quad $\Rightarrow$ \quad Group 1: $\{1,3\}$, \; Group 2: $\{2,4\}$ \\[2pt]
$\a^{(3)}$: & $(1,\; 2,\; 2,\; 1)$ \quad $\Rightarrow$ \quad Group 1: $\{1,4\}$, \; Group 2: $\{2,3\}$ \\[2pt]
$\a^{(4)}$: & $(2,\; 1,\; 1,\; 2)$ \quad $\Rightarrow$ \quad Group 1: $\{2,3\}$, \; Group 2: $\{1,4\}$ \\[2pt]
$\a^{(5)}$: & $(2,\; 1,\; 2,\; 1)$ \quad $\Rightarrow$ \quad Group 1: $\{2,4\}$, \; Group 2: $\{1,3\}$ \\[2pt]
$\a^{(6)}$: & $(2,\; 2,\; 1,\; 1)$ \quad $\Rightarrow$ \quad Group 1: $\{3,4\}$, \; Group 2: $\{1,2\}$ \\[2pt]
\hline
\end{tabular}

\vspace{12pt}

\begin{tabular}{ccccc}
\multicolumn{5}{c}{\bf Panel B: Potential outcome inputs for unit $i=1$}\\
\hline
 & & & \multicolumn{2}{c}{Potential outcome} \\
\cline{4-5}
Permutation & Treatment $(z_{a_1})$ & $\mathcal{A}_{A_1}$ & No interference & Group interference \\
\hline
$\a^{(1)}$ & $1$ & $\{1,2\}$ & $Y_1(1)$ & $Y_1(1,\{1,2\})$ \\[2pt]
$\a^{(2)}$ & $1$ & $\{1,3\}$ & $Y_1(1)$ & $Y_1(1,\{1,3\})$ \\[2pt]
$\a^{(3)}$ & $1$ & $\{1,4\}$ & $Y_1(1)$ & $Y_1(1,\{1,4\})$ \\[2pt]
$\a^{(4)}$ & $0$ & $\{1,4\}$ & $Y_1(0)$ & $Y_1(0,\{1,4\})$ \\[2pt]
$\a^{(5)}$ & $0$ & $\{1,3\}$ & $Y_1(0)$ & $Y_1(0,\{1,3\})$ \\[2pt]
$\a^{(6)}$ & $0$ & $\{1,2\}$ & $Y_1(0)$ & $Y_1(0,\{1,2\})$ \\[2pt]
\hline
\end{tabular}

\caption{\it Illustration of Design Case 1 with $G_N = 2$ groups, $M_1 = M_0 = 2$ units per group, and $N=4$ sampled units. Group 1 is treated ($Z_1=1$) and Group 2 is control ($Z_2 = 0$). Panel~A shows how the group index vector $(1,1,2,2)$ is randomly permuted across the four unit positions, yielding $\binom{4}{2}=6$ equally likely partitions. $\a^{(k)}$ represents the $k$-th permutation. Panel~B shows the resulting potential outcome inputs for unit $i=1$ under each permutation: without interference (Outcome Case 0), the potential outcome depends only on treatment status $z_{a_1}$; with group interference (Outcome Case 1), it also depends on the group set $\mathcal{A}_{A_1}$.}
\label{fig:toy-design}
\end{figure}

Table~\ref{table:karp-design1} shows the experimental design table from \citet{mendelberg2014does}, which illustrates several features that our setup covers. Because the outcomes of interest concern women's deliberative behavior, we take the population of analysis to be all women and view the design as one that partitions these women into groups of varying size and then assigns each group a treatment---an instance of random group formation (Design Case 1). The group set $\mathcal{A}$ for a woman is the set of women in her deliberation group, and its size is the number of women that group contains. Since a five-person group may hold anywhere from zero to five women, there are six possible sizes; the all-male groups contain no member of the population of analysis and drop out, leaving five admissible group sizes, $\mathcal{M} = \{1, 2, 3, 4, 5\}$. Each row of Table~\ref{table:karp-design1} collects the groups of one size, while the decision rule---the group-level treatment status $z$---is given by the first two columns. A cell of the table therefore corresponds to a size-by-treatment combination, each restricting the set of $(z, \mathcal{A})$ conditions to which a woman could be assigned.

Whether a woman's outcome is sensitive to the group set $\mathcal{A}$ is precisely the distinction between our two outcome cases. Under no interference (Outcome Case 0), her potential outcome depends only on the decision rule assigned to her group, so which other women share her group---and how many---is immaterial. Under group interference (Outcome Case 1), the potential outcome $Y_i(z, \mathcal{A})$ depends on the group set as well, so two women assigned the same decision rule but with different group partners (whether in terms of the identities of the group members or the number of group members) occupy different exposure conditions.

\begin{table}[!t]
\centering
\begin{tabular}{lcccc}
\toprule
 & \# Unanimous & \# Majority & Total \# & \# of \\
 & Groups & Groups & Groups & Individuals \\
\midrule
0 Females & 8  & 7 & 15 & 75 \\
1 Female  & 10 & 9 & 19 & 95 \\
2 Females & 6  & 7 & 13 & 65 \\
3 Females & 9  & 7 & 16 & 80 \\
4 Females & 8  & 8 & 16 & 80 \\
5 Females & 7  & 8 & 15 & 75 \\
\midrule
Total \# of Groups & 48 & 46 & 94 & \\
\# of Individuals  & 240 & 230 & & 470 \\
\bottomrule
\end{tabular}
\caption{\it Experimental design table from \citet{mendelberg2014does}, showing experimental conditions and sample sizes in each condition}
\label{table:karp-design1}
\end{table}


\section{Estimands and estimators}\label{sec:est}
We consider a variety of estimands motivated by the design and outcome cases, all defined with respect to the reference population $\mathcal{U}$. For any set of units $\mathcal{S}$, let $\mathcal{G}_{\mathcal{S}}(m) = \{\mathcal{A} \subseteq \mathcal{S} : |\mathcal{A}| = m\}$ denote the collection of all $m$-member groups that can be formed from units in $\mathcal{S}$, and let $\mathcal{G}_{\mathcal{S},i}(m) = \{\mathcal{A} \subseteq \mathcal{S} : i \in \mathcal{A},\ |\mathcal{A}| = m\}$ denote those that can be formed from $\mathcal{S}$ and contain unit $i$. We index a generic group from the population by $\omega$, whether fixed or randomly formed---the population-level counterpart of the index $g$ used for groups in the sample---and write $\mathcal{A}_\omega$ for its set of members.

\subsection{Estimands under fixed groups}
When SUTVA \citep{imbens2015causal} holds (no interference), as in Outcome Case 0, we target the population average treatment effect,
$$
\mathrm{ATE} := \frac{1}{n} \sum_{i \in \mathcal{U}}\left[Y_i(1)-Y_i(0)\right].
$$
Under group interference (Outcome Case 1), this expression is no longer well-defined: it is written in terms of $Y_i(1)$ and $Y_i(0)$, potential outcomes indexed by treatment alone, whereas interference makes a unit's outcome depend on its group composition as well. Recall that the general potential outcome is $Y_i(z, \mathcal{A})$, indexed by both the group's treatment status $z$ and its membership $\mathcal{A}$. Holding $i$'s group fixed at $\mathcal{A}_{\omega_i}$ while switching the group's treatment status defines the within-group contrast
$$
Y_i(1, \mathcal{A}_{\omega_i}) - Y_i(0, \mathcal{A}_{\omega_i}).
$$
This contrast is analogous to what \citet{hudgens2008toward} call the individual-level ``total effect'' under full ($100\%$) versus zero ($0\%$) group-level treatment saturation, given the group $\mathcal{A}_{\omega_i}$. The difference accounts for arbitrary interference between $i$ and the other units in $\mathcal{A}_{\omega_i}$. Averaging over the population gives the population average total effect,
$$
\mathrm{TOT} := \frac{1}{n}\sum_{i \in \mathcal{U}} \left[Y_i(1, \mathcal{A}_{\omega_i}) - Y_i(0, \mathcal{A}_{\omega_i})\right].
$$

\subsection{Estimands under random group formation}\label{estimands-rg}
Allowing for random group formation requires that we consider potential outcomes for any group to which unit $i$ might belong. We first define the marginalized potential outcome under homogeneous group sizes and then describe how it extends to varying sizes.

\paragraph{Homogeneous group sizes.} When all groups have a common size $M$, each group in $\mathcal{G}_{\mathcal{U},i}(M)$ is determined by its $M-1$ members besides $i$, chosen from the $n-1$ other units in $\mathcal{U}$, so $|\mathcal{G}_{\mathcal{U},i}(M)| = \binom{n-1}{M-1}$. The individualistic marginalized potential outcome for unit $i$ averages its potential outcome uniformly over these groups:
$$
\mu_i(z) = \frac{1}{\binom{n-1}{M-1}} \sum_{\mathcal{A} \in \mathcal{G}_{\mathcal{U},i}(M)} Y_i(z, \mathcal{A}).
$$
The individual marginalized effect is $\tau_i = \mu_i(1) - \mu_i(0)$, and the population average marginalized effect (PAME) is
$$
\tau_{\mathrm{PAME}} = \frac{1}{n} \sum_{i \in \mathcal{U}} \tau_i.
$$
The PAME aggregates these unit-level effects, each marginalized over the variation in group composition that random group formation induces. Because that design makes $\mathcal{A}_{A_i}$ uniform over $\mathcal{G}_{\mathcal{U},i}(M)$ and independent of $Z_i$, the marginalized potential outcome coincides with the design-conditional expectation, $\mu_i(z) = \E[Y_i(z, \mathcal{A}_{A_i}) \mid Z_i = z]$.

\paragraph{Varying group sizes.} When group sizes vary within an arm, the two characterizations of $\mu_i(z)$---as a uniform average over candidate groupings and as the design-conditional expectation $\E[Y_i \mid Z_i = z]$---do not always coincide. Generalizing the PAME therefore requires a choice about how to weight potential outcomes across group sizes, and the right choice depends on substantive considerations.

We first define the within-size-$m_k$ marginalized potential outcome under treatment status $z$:
$$
\mu_i(z; m_k) = \frac{1}{\binom{n-1}{m_k-1}}\sum_{\mathcal{A} \in \mathcal{G}_{\mathcal{U},i}(m_k)} Y_i(z, \mathcal{A}),
$$
which reduces to $\mu_i(z)$ when group sizes are homogeneous. Recall that $\mathcal{M}_z$ is the set of admissible group sizes in arm $z$; for instance, if groups of size 2, 3, or 4 are admissible under treatment while only singletons are admissible under control, then $\mathcal{M}_1 = \{2,3,4\}$ and $\mathcal{M}_0 = \{1\}$. We combine the within-size outcomes through a weighted average,
$$
\bar\mu^{(\phi)}_i(z) = \sum_{m_k \in \mathcal{M}_z}\phi(z;m_k)\, \mu_i(z;m_k),
$$
where the weights $\phi(z;m_k) \ge 0$ sum to one over $m_k \in \mathcal{M}_z$ and encode how much each admissible size counts toward the estimand. The unit-level marginalized exposure effect under these weights is $\tau^{(\phi)}_i = \bar\mu^{(\phi)}_i(1) - \bar\mu^{(\phi)}_i(0)$, and the population average marginalized effect under $\phi$ is 
$$
\begin{aligned}
\tau^{(\phi)}_{PAME} = & \frac{1}{n}\sum_{i \in \mathcal{U}}\tau^{(\phi)}_i = \frac{1}{n}\sum_{i \in \mathcal{U}}\left(\bar\mu^{(\phi)}_i(1) - \bar\mu^{(\phi)}_i(0)\right) \\
= & \sum_{m_k \in \mathcal{M}_1}\phi(1;m_k) \mu(1;m_k) - \sum_{m_k \in \mathcal{M}_0}\phi(0;m_k) \mu(0;m_k),
\end{aligned}
$$
where $\mu(z;m_k) = \frac{1}{n}\sum_{i \in \mathcal{U}} \mu_i(z;m_k)$. Different choices of $\phi$ give different forms of this single estimand; we highlight two of them.

The first weights each size by the design's own assignment probability, $\phi(z;m_k) = \Pr(|\mathcal{A}_{A_i}| = m_k \mid Z_i = z) = N_z(m_k) / N_z$, the share of arm-$z$ units the design places in size-$m_k$ groups. Under this choice $\bar\mu^{(\phi)}_i(z) = \E[Y_i(z, \mathcal{A}_{A_i}) \mid Z_i = z]$ is exactly the design-conditional expectation from above, and $\tau^{(\phi)}_{PAME}$ is the probability limit of the unweighted difference in means. Its drawback is that this form is not a stable population quantity: it depends on the experimental assignment probabilities, so adjusting the design to change how often individuals land in groups of different sizes changes the target.

The second weights all admissible sizes evenly, $\phi(z;m_k) = 1/|\mathcal{M}_z|$. This form is stable: since the weights ignore the design's allocation across sizes, $\tau^{(\phi)}_{PAME}$ is unchanged by that allocation. It is natural when the researcher regards each group size as a substantively meaningful condition deserving equal weight. The two forms coincide under homogeneous group sizes, and more generally whenever the design assigns the same number of units to each admissible size within each arm, so that the design shares are themselves uniform. Other weightings are possible and may be preferable in particular applications; Appendix~A.2 develops the results for a general $\phi$.

\subsection{Estimators}
We begin with the difference-in-means estimator, which we use across most of the design and outcome cases, and then generalize it to target the weighted PAME. Let $\bar Y_z = N_z^{-1}\sum_{i \in \mathcal{N}}\mathbf{1}\{Z_i = z\}Y_i$ be the mean outcome of arm-$z$ units. The difference in means is $\bar Y_1 - \bar Y_0$. When the treatment probability varies across groups, we replace it with the more general IPW or H\'ajek estimator. Details are deferred to the Appendix.

To target $\tau^{(\phi)}_{PAME}$ under weights $\phi$, we first estimate, for each arm $z$ and admissible size $m_k \in \mathcal{M}_z$, the within-size mean outcome
$$
\hat\mu(z;m_k) = \frac{\sum_{i \in \mathcal{N}}\mathbf{1}\{|\mathcal{A}_{A_i}| = m_k\}\,\mathbf{1}\{Z_i = z\}\,Y_i}{G_{Nz}(m_k)\,m_k}.
$$
The plug-in estimator for $\tau^{(\phi)}_{PAME}$ is then
$$
\hat\tau_N = \sum_{m_k \in \mathcal{M}_1}\phi(1;m_k)\,\hat\mu(1;m_k) \;-\; \sum_{m_k \in \mathcal{M}_0}\phi(0;m_k)\,\hat\mu(0;m_k).
$$
Setting $\phi$ to the realized design shares $\hat\phi(z;m_k) = N_z(m_k)/N_z$ makes the within-size means aggregate back to the arm means, so $\hat\tau_N = \bar Y_1 - \bar Y_0$ recovers the difference in means and targets the design-weighted form. Setting $\phi(z;m_k) = 1/|\mathcal{M}_z|$ instead targets the evenly-weighted form from the previous subsection. 


The estimator $\hat\tau_N$ also has a regression representation: it can be computed as a (weighted) least squares regression of $Y_i$ on treatment, and covariate adjustment enters as in \citet{lin2013agnostic}, with the regression estimator interpretable as a (weighted) difference in adjusted means.

\paragraph{Variance estimators.} We also need variance estimators for $\hat\tau_N$. We give them first for the difference in means $\bar Y_1 - \bar Y_0$, which is $\hat\tau_N$ under homogeneous group sizes, and then generalize. The heteroskedasticity-robust (``HC2'' or Neyman) estimator \citep{splawa1990application, samii2012equivalencies, abadie2023should} and the cluster-robust (``CR2'') estimator \citep{pustejovsky2018small} are, respectively,
$$
\begin{aligned}
\hat V^{HR}_N &= \frac{\sum_{i \in \mathcal{N}}Z_i(Y_i - \bar Y_1)^2}{N_1(N_1-1)} + \frac{\sum_{i \in \mathcal{N}}(1-Z_i)(Y_i - \bar Y_0)^2}{N_0(N_0-1)}, \\
\hat V^{CR}_N &= \sum_{g: Z_g=1}\frac{M_g^2(\bar Y_g - \bar Y_1)^2}{N_1(N_1 - M_g)} + \sum_{g: Z_g=0}\frac{M_g^2(\bar Y_g - \bar Y_0)^2}{N_0(N_0 - M_g)},
\end{aligned}
$$
with $\bar Y_g = M_g^{-1}\sum_{i \in \mathcal{A}_g}Y_i$ the sample mean of group $g$. When clusters are of equal size within each arm, $\hat V^{CR}_N$ reduces further to the cluster-level HC2 analogue \citep{su2021model}, $\hat V^{CR}_N = G_{N1}^{-1}(G_{N1}-1)^{-1}\sum_{g:Z_g=1}(\bar Y_g - \bar Y_1)^2 + G_{N0}^{-1}(G_{N0}-1)^{-1}\sum_{g:Z_g=0}(\bar Y_g - \bar Y_0)^2$. The two are also the sandwich variance estimators from regressing $Y_i$ on $Z_i$, at the unit and group level respectively, and differ only in whether the within-group cross-products of outcome residuals are retained: keeping them is exactly how the cluster-robust estimator accounts for group-level \emph{intra-cluster correlation}, a point to which we return below.

With varying group sizes the same construction applies cell by cell. Random sampling and random assignment render the within-arm, within-size means $\hat\mu(z;m_k)$ independent across cells as the population goes to infinity. Thus, the variance of $\hat\tau_N$ obeys the following as the population size goes to infinity:
$$
\frac{\sum_{m_k \in \mathcal{M}_1}\phi(1;m_k)^2\,\Var[\hat\mu(1;m_k)] + \sum_{m_k \in \mathcal{M}_0}\phi(0;m_k)^2\,\Var[\hat\mu(0;m_k)]}{\Var[\hat\tau_N]} \to 1 \text{ as } n \to \infty,
$$
with cells defined separately within each arm, so the two arms may differ in their size structure. We estimate each cell variance $\Var[\hat\mu(z;m_k)]$ by the within-cell analogue of the heteroskedasticity-robust or cluster-robust estimator above, aggregating with the weights $\phi(z;m_k)^2$ that define $\hat\tau_N$. Which of the two is appropriate---in particular, when the heteroskedasticity-robust form remains valid---is the subject of the next section. We give the exact per-cell forms, together with the weighted-least-squares sandwich representation that yields both, in Appendix~A.2.

\section{Theory}\label{sec:inference}
Our inferential conclusions are based on a sparse-sampling asymptotic regime in which the admissible group sizes $\mathcal{M}$ are fixed, $n, N, G_N \to \infty$, and $\frac{N^2}{n} \to 0$. This regime yields results similar to those of the design-based superpopulation sampling analysis of \citet{bai2022optimality} and \citet{bai2024primer}. In either case, the goal is to characterize the relationship between what transpired in a particular experiment and a more general population.

We begin by stating our two main results. Theorem~\ref{thm:normality} establishes the asymptotic normality of $\hat\tau_N$ in all four cases, and Theorem~\ref{thm:variance} characterizes the two variance estimators. We then discuss how to construct confidence intervals based on these theorems and conduct inference in practice.

\begin{theorem}[Asymptotic normality]\label{thm:normality}
Given Assumption~\ref{assn:po}, the Design Cases indexed by $d \in \{0,1\}$ and Outcome Cases indexed by $o \in \{0,1\}$, and fixed admissible group sizes $\mathcal{M}$, with $n, N, G_N \to \infty$ and $N^2/n \to 0$, the following holds in all four Design--Outcome cases:
\[
\frac{\hat\tau_N - \tau_{do}}{\sqrt{\mathrm{Var}[\hat\tau_N]}}
\rightsquigarrow \mathcal{N}(0,1),
\]
where $\tau_{00} = \tau_{10} = \mathrm{ATE}$, $\tau_{01} = \mathrm{TOT}$, and $\tau_{11} = \tau_{\mathrm{PAME}}$ (homogeneous group sizes) or $\tau^{(\phi)}_{\mathrm{PAME}}$ (varying sizes).
\end{theorem}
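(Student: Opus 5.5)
The proof proceeds case by case. In each case the goal is to represent $\hat\tau_N - \tau_{do}$, up to an $o_p(\sqrt{\Var[\hat\tau_N]})$ remainder, as a normalized sum of independent group-level terms, and then to apply a Lindeberg--Feller (or \citet{ohlsson1989asymptotic}) central limit theorem. The uniform bound $|Y_i|<C$ from Assumption~\ref{assn:po} makes the Lindeberg condition automatic once the variance is of order $1/G_N$. The non-degeneracy part of that assumption, together with fixed $\mathcal{M}$, guarantees $\Var[\hat\tau_N]\gtrsim 1/G_N$.

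\textbf{Design Case 0 (both outcome cases).} The sampled groups are a simple random sample without replacement of $G_N$ population groups, followed by complete randomization into arms. For each population group $\omega$, the relevant quantities are the group totals $\sum_{i\in\mathcal{A}_\omega}Y_i(z,\mathcal{A}_\omega)$, which reduce to $\sum_{i\in\mathcal{A}_\omega}Y_i(z)$ under Outcome Case 0. Because groups are fixed, these are deterministic population attributes. The estimator $\bar Y_1-\bar Y_0$ is therefore a ratio of group-level sample means, exactly as in the cluster-randomized setting of \citet{abadie2023should} and \citet{su2021model}. I linearize the ratio: the denominators $N_z/G_{Nz}$ concentrate at rate $G_N^{-1/2}$ by bounded group sizes. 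The linearization leaves a difference of sample means of residualized group totals. Finite-population CLTs for sampling without replacement then give normality. This includes the Hájek CLT and the sparse regime $G_N/n\to0$, which lets sampling without replacement be compared to sampling with replacement. The centering is $\mathrm{ATE}$ or $\mathrm{TOT}$ because the population average of the group-level contrasts, weighted by size, equals these estimands.

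\textbf{Design Case 1.} Here the groups are random $m$-subsets of $\mathcal{U}$. I recast the design as follows. First draw an ordered sequence of $G_N$ disjoint subsets of the prescribed sizes, each uniform over $\mathcal{G}_{\mathcal{U}}(m)$ conditional on disjointness from the earlier ones. Then assign the first $G_{N1}$ to treatment. This is equal in law to sampling $N$ units and permuting the index vector. The coupling step compares this to an i.i.d. design in which each group $g$ is drawn independently and uniformly from $\mathcal{G}_{\mathcal{U}}(M_g)$, ignoring disjointness. Under the independent design, any two groups overlap with probability $O(m^2/n)$. A union bound over pairs shows the probability of any overlap is $O(N^2/n)\to0$. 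Conditional on no overlap, the independent design has exactly the law of the actual design. Hence the total variation distance between the two laws of $(\mathcal{A}_1,\dots,\mathcal{A}_{G_N})$ tends to zero, and so does the distance between the induced laws of $\hat\tau_N$ and its standardization. Variances also match up to $o(1/G_N)$, because the statistic is bounded and the coupling failure probability $O(N^2/n)$ is $o(1)$. This part needs care: $o(1)$ by itself is not $o(1/G_N)$. I plan to handle it by noting that the convergence in distribution of the standardized statistic transfers under vanishing total variation. Under the independent design, each group mean $\bar Y_g$ is an i.i.d. draw within each cell $(z,m_k)$. Its mean is $\frac1{\binom nm}\sum_{\mathcal{A}}\frac1m\sum_{i\in\mathcal{A}}Y_i(z,\mathcal{A})$, which equals $\mu(z;m_k)$ by double counting. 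So $\hat\mu(z;m_k)$ is a sample mean of $G_{Nz}(m_k)$ i.i.d. bounded variables. The classical CLT applies cell by cell, the cells are independent, and the $\phi$-weighted combination is asymptotically normal centered at $\tau^{(\phi)}_{\mathrm{PAME}}$. The homogeneous-size case is covered by $K=1$. Outcome Case 0 is the special case where $\mu_i(z;m)=Y_i(z)$, so the centering is $\mathrm{ATE}$.

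\textbf{Main obstacle.} The main difficulty is making the coupling rigorous at the level of the standardized statistic. Total variation controls distributions, but the theorem normalizes by the true $\Var[\hat\tau_N]$ under the actual design. I must show that this variance is asymptotically equivalent to the independent-design variance. I would first try a direct computation of covariances between distinct groups' means under the actual design. Conditional on the other groups, each group is uniform on the remaining units, so these covariances are $O(N/n)$. Summed over the $G_N^2$ pairs and divided by $G_N^2$, this gives $O(N/n)=o(1/G_N)$ precisely because $N^2/n\to0$. This yields the variance equivalence, and combining it with the total-variation coupling and Slutsky's lemma completes the proof.
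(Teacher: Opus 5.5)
Your proposal is correct. It shares the paper's central device: coupling the sequential disjoint sampling of groups with independent uniform draws from $\mathcal{G}_{\mathcal{U}}(m)$, where the total variation distance equals the overlap probability $O(N^2/n)$ under independent draws. You deploy that device differently, though.

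The paper couples only the sampling stage. It splits $\sqrt{G_N}(\hat\tau_N-\tau_{PAME})=S_1+S_2$ with $S_2=\sqrt{G_N}(\tau_{SAME}-\tau_{PAME})$ and gets normality of $S_2$ from the coupling. It handles the assignment term $S_1$ with Ohlsson's two-stage martingale CLT under a Bernoulli surrogate, plus H\'ajek's coupling back to complete randomization. It obtains the variance from a law-of-total-variance computation using the explicit covariances of the inclusion indicators $W_\omega$.

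You instead use the fact that in Design Case~1 treatment is a deterministic function of the group ordering, so a single coupling of the ordered tuple covers sampling and assignment at once. Under the independent design the cell means $\hat\mu(z;m_k)$ are independent averages of i.i.d.\ bounded draws, and plain Lindeberg--Feller suffices. Case 1-0 then falls out as the special case $\mu_i(z;m)=Y_i(z)$.

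Your covariance bound $|\mathrm{Cov}[f(\mathcal{A}_g),h(\mathcal{A}_{g'})]|=O(N/n)$ comes from exact marginal uniformity plus conditional uniformity on the remaining units. It gives $\mathrm{Var}_Q[\hat\tau_N]-\mathrm{Var}_{Q^*}[\hat\tau_N]=O(N/n)=o(1/G_N)$, since $NG_N\le N^2=o(n)$. That is exactly what licenses normalizing by the true variance; the paper gets the same identification from its separate variance lemma, not from the coupling.

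What your route buys: it is shorter, and it shows directly why the limiting variance is $\sum\phi(z;m_k)^2S_z^2(m_k)/G_{Nz}(m_k)$. In the paper, the effect-heterogeneity term $-\sigma_\tau^2$ of the conditional Neyman variance cancels against the sampling variance of $\tau_{SAME}$. In your route that term never arises, because treated and control groups are distinct independent draws.

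What the paper's route buys: its SAME/PAME decomposition is more modular. It isolates the conditional Neyman variance, which the paper reuses for the consistency of $\hat V^{CR}_N$.

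For Design Case~0, your appeal to finite-population results for cluster-randomized designs matches the paper, which also defers those cases to existing results.
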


\begin{theorem}[Variance estimation]\label{thm:variance}
Under the conditions of Theorem~\ref{thm:normality}, the cluster-robust estimator is ratio-consistent for $\mathrm{Var}[\hat\tau_N]$ in all four cases,
\[
\frac{\hat V^{CR}_N}{\mathrm{Var}[\hat\tau_N]} \xrightarrow{P} 1,
\]
while the heteroskedasticity-robust estimator $\hat V^{HR}_N$ is ratio-consistent in case 1-0 only.
\end{theorem}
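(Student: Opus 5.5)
The plan is to reduce every case to a statement about independent group-level summaries. For each arm $z$, write $\hat\tau_N$ as a difference of group-level averages. In the homogeneous case this is $\bar Y_1 - \bar Y_0 = G_{N1}^{-1}\sum_{g:Z_g=1}\bar Y_g - G_{N0}^{-1}\sum_{g:Z_g=0}\bar Y_g$. With varying sizes the same holds cell by cell: each $\hat\mu(z;m_k)$ is an average of the $\bar Y_g$ over the $G_{Nz}(m_k)$ groups of that size and arm. Both variance estimators are built from the same group-level summaries.

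\textbf{Design Case 0.} The sampled groups form a simple random sample without replacement of population groups. Under $N^2/n\to0$ the sampling fraction vanishes, so sampling with and without replacement are asymptotically equivalent. The pairs $(Z_g,\bar Y_g)$, with $\bar Y_g = M_g^{-1}\sum_{i\in\mathcal{A}_g} Y_i(Z_g,\mathcal{A}_g)$, then behave like i.i.d.\ draws from the finite population of groups, followed by complete randomization. A finite-population CLT (H\'ajek / \citet{ohlsson1989asymptotic}) applied to bounded group means gives normality. The centering is the group-size-weighted population mean. This equals the ATE under Outcome Case 0 and the TOT under Outcome Case 1, since $\mathcal{A}_g$ is the unit's own fixed group.

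Ratio-consistency of $\hat V^{CR}_N$ follows from a weak law for the squared deviations of the group means. Boundedness under Assumption~\ref{assn:po} gives the moments, and the variance lower bound turns consistency into ratio consistency. $\hat V^{HR}_N$ fails because it discards within-group covariances. These covariances are nonzero generically, from homophily or interference, and are not of smaller order because $M_g$ is fixed. This is essentially the argument of \citet{abadie2023should} and \citet{su2021model}.

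\textbf{Design Case 1 (main obstacle): coupling.} Random partitioning of an SRS of individuals is equivalent in law to drawing $G_N(m)$ disjoint $m$-subsets of $\mathcal{U}$ sequentially and uniformly, conditionally on disjointness. Couple this with $G_N$ independent uniform draws $\tilde{\mathcal{A}}_g\in\mathcal{G}_{\mathcal{U}}(M_g)$. The two processes agree unless some pair of independent draws overlaps. Summing over pairs, this probability is at most $\sum_{g<g'}M_gM_{g'}/n = O(N^2/n)\to0$.

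Under the independent process, the $\bar Y_g$ within each arm--size cell are i.i.d. Each has mean
\[
\binom{n}{m}^{-1}\sum_{\mathcal{A}\in\mathcal{G}_{\mathcal{U}}(m)} m^{-1}\sum_{i\in\mathcal{A}}Y_i(z,\mathcal{A}) = \mu(z;m).
\]
This uses the double-counting identity $\binom{n}{m}^{-1}m^{-1}\binom{n-1}{m-1}=1/n$. The Lindeberg CLT gives normality of each $\hat\mu(z;m_k)$, and the cells are independent, which yields normality of $\hat\tau_N$ centred at $\tau^{(\phi)}_{\mathrm{PAME}}$. Under no interference this centre is $\mathrm{ATE}$.

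Because the coupling holds with probability tending to one, the distributional limit transfers. One must also check that $\mathrm{Var}[\hat\tau_N]$ is asymptotically equal under the two processes. The difference lives on the vanishing event, is bounded by $O(1)$ times its probability, and by the variance lower bound this is negligible relative to $\mathrm{Var}[\hat\tau_N]\asymp 1/G_N$. This needs care: we need $N^2/n = o(1/G_N)$-type control, or else a direct comparison of the second moments, since both are explicit functions of pairwise overlap probabilities. This variance-transfer step is where I expect the real work.

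\textbf{Variance estimators in Design Case 1.} $\hat V^{CR}_N$ is consistent by the same weak law applied under the coupled i.i.d.\ process.

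For $\hat V^{HR}_N$ in case 1-0, the within-group covariance $\mathrm{Cov}(Y_i,Y_j)$ for $i,j$ in the same group is an overlap-order quantity. Under independent draws of distinct units it is $O(1/n)$, because $Y_i(z)$ does not depend on $\mathcal{A}$. Hence $\mathrm{Var}[\bar Y_g]=\sigma_z^2/m + o(1)$, so the cluster and unit-level variances coincide asymptotically and $\hat V^{HR}_N$ is consistent.

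In case 1-1, $Y_i$ and $Y_j$ share the common random input $\mathcal{A}_g$. Their covariance equals the population covariance of $Y_i(z,\mathcal{A})$ and $Y_j(z,\mathcal{A})$ across groups, which is generically nonzero. So $\hat V^{HR}_N$ is inconsistent, as it is in both Design Case 0 cases.
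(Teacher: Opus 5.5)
\paragraph{Comparison with the paper's route.} For Design Case~1 your route differs from the paper's, which never uses the coupling for Theorem~\ref{thm:variance}. The paper works under the restricted design directly. It uses the exact joint inclusion probabilities of Lemma~\ref{lemma:group-sampling} (in particular $\mathrm{E}[W_\omega W_{\omega'}]=0$ for overlapping potential groups) to show, in Lemmas~\ref{lemma:limits}--\ref{lemma:var-g}, that overlap terms contribute $O(G_NM^2/n)$. It then gets $G_N\mathrm{Var}[\hat\tau_N]\simeq S_1^2/p+S_0^2/(1-p)$ by total variance, shows $G_N\mathrm{E}[\hat V^{CR}_N]$ has the same limit, and concludes by Chebyshev. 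Design Case~0 is deferred to the cluster-randomization literature, as you do. Transferring a weak law for $G_N\hat V^{CR}_N$ from i.i.d.\ draws through the coupling is a valid, and somewhat cleaner, way to handle the numerator. The price is that it uses $N^2/n\to0$, whereas the paper's direct moment route needs only $N/n\to 0$.

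\paragraph{The gap: transferring the variance.} The denominator $\mathrm{Var}[\hat\tau_N]$ is a moment under the restricted law $Q$, so the step you flag is a genuine gap as written. Your first option does not close it. Total variation only gives $|\mathrm{Var}_Q[\hat\tau_N]-\mathrm{Var}_{Q^*}[\hat\tau_N]|=O(d_{TV}(Q,Q^*))=O(N^2/n)$. That is $o(1/G_N)$ only if $N^3/n\to0$, which is stronger than the theorem assumes.

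Your second option works once it is applied to bivariate marginals rather than to the whole tuple. Conditional on the treatment vector, which is independent of the partition, $\hat\tau_N=\sum_g c_g h_g(\mathcal{A}_g)$ with $h_g=\bar Y_g(Z_g)$, $|h_g|\le C$, and $\sum_g|c_g|=2$. Under both $Q$ and $Q^*$ each $\mathcal{A}_g$ is uniform on $\mathcal{G}_{\mathcal{U}}(M_g)$. Hence the conditional mean and every diagonal variance term agree exactly. Under $Q$ each pair $(\mathcal{A}_g,\mathcal{A}_{g'})$ is uniform over disjoint pairs, which lies within total variation $M_gM_{g'}/(n-M_{g'}+1)$ of an independent pair. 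So $|\mathrm{Cov}_Q(h_g,h_{g'})|\le 2C^2M_gM_{g'}/(n-M_{g'}+1)$, while this covariance is zero under $Q^*$.

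It follows that $G_N|\mathrm{Var}_Q[\hat\tau_N]-\mathrm{Var}_{Q^*}[\hat\tau_N]|=O(G_N/n)\to0$. Moreover, $G_N\mathrm{Var}_{Q^*}[\hat\tau_N]$ is exactly the cell-weighted group-level variance that your weak law targets. This pairwise bound is what the paper's $(1-G_N)M^2/n\to0$ terms encode. With it, your argument closes.

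\paragraph{The variance lower bound.} Both your ratio step and the variance transfer need $G_N\mathrm{Var}[\hat\tau_N]\ge c>0$. Outside case 1-0 this does not follow from the unit-level non-degeneracy in Assumption~\ref{assn:po}, because the scaled variance is a between-group variance of group means. For example, $Y_i(z,\mathcal{A})=c_z+X_i-|\mathcal{A}|^{-1}\sum_{j\in\mathcal{A}}X_j$ makes every group mean equal to $c_z$, so $\mathrm{Var}[\hat\tau_N]=0$ even though unit outcomes vary. The paper adds this nondegeneracy explicitly in the proof of its Proposition~2, and you should state it as an assumption as well.
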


In every Design--Outcome case except the random-groups, no-interference one (1-0), $\mathrm{Var}[\hat\tau_N]$ carries a within-group (intra-cluster) correlation component. The cluster-robust estimator retains the within-group cross-products and so captures this component whatever its source---homophily in fixed groups (case 0-0) or interference (cases 0-1 and 1-1)---and is therefore consistent throughout. The heteroskedasticity-robust estimator discards those cross-products and is inconsistent for $\mathrm{Var}[\hat\tau_N]$ whenever the correlation is present. Only in case 1-0 does it vanish: there the design reduces to unit-level randomization and the variance collapses to the corresponding unit-level Neyman variance, so the two estimators share a probability limit and both are consistent, with the heteroskedasticity-robust estimator the more efficient of the two. Explicit forms for all cases appear in Appendix~A.1 and~A.2.

Table~\ref{tab:summary-results} collects the implications of the two theorems. The estimand depends on the cell: any commonly used estimator targets the $\mathrm{ATE}$ in both no-interference cells, the $\mathrm{TOT}$ under fixed groups with interference, and $\tau_{\mathrm{PAME}}$ under random groups with interference (its weighted form $\tau^{(\phi)}_{\mathrm{PAME}}$ when group sizes vary).

\begin{table}[]
    \centering
\begin{center}
\begin{tabular}{cc|c|c|}
& & \multicolumn{2}{c|}{Group formation}\\
& & No & Yes\\
\hline
Interference & No & $\hat \tau_N$ Targets $ATE$ &$\hat \tau_N$ Targets $ATE$ \\ 
& & CR Consistent & HR and CR Consistent \\ 
\hline
& Yes & $\hat \tau_N$ Targets $TOT$ & $\hat \tau_N$ Targets $PAME$\\
& & CR Consistent & CR Consistent \\
\hline
\end{tabular}
\end{center}
    \caption{Summary of population inference results.}
    \label{tab:summary-results}
\end{table}

Combining the two theorems through Slutsky's theorem yields the studentized statements used in practice, which we present as a corollary: 
\begin{corollary}[Asymptotic normality of the t-statistic]\label{thm:cor}
Under the conditions of Theorem~\ref{thm:normality}, the following holds in all four Design--Outcome cases:
\[
\frac{\hat\tau_N - \tau_{do}}{\sqrt{\hat V^{CR}_N}}
\rightsquigarrow \mathcal{N}(0,1),
\]
where $\tau_{00} = \tau_{10} = \mathrm{ATE}$, $\tau_{01} = \mathrm{TOT}$, and $\tau_{11} = \tau_{\mathrm{PAME}}$ (homogeneous group sizes) or $\tau^{(\phi)}_{\mathrm{PAME}}$ (varying sizes). In case 1-0 only, the same limit holds with $\hat V^{HR}_N$ in place of $\hat V^{CR}_N$. 
\end{corollary}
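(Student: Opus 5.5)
The corollary follows from Theorems~\ref{thm:normality} and~\ref{thm:variance} by Slutsky's theorem, so the plan is short. For each Design--Outcome case $(d,o)$ I would write the studentized statistic as
\[
\frac{\hat\tau_N - \tau_{do}}{\sqrt{\hat V^{CR}_N}}
= \frac{\hat\tau_N - \tau_{do}}{\sqrt{\mathrm{Var}[\hat\tau_N]}} \cdot \left(\frac{\hat V^{CR}_N}{\mathrm{Var}[\hat\tau_N]}\right)^{-1/2}.
\]
By Theorem~\ref{thm:normality}, the first factor converges in distribution to $\mathcal{N}(0,1)$, with $\tau_{do}$ the case-specific estimand listed there. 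By Theorem~\ref{thm:variance}, the ratio inside the second factor converges in probability to $1$ in all four cases.

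The map $x\mapsto x^{-1/2}$ is continuous at $1$. By the continuous mapping theorem, the second factor therefore converges in probability to $1$. One technical point needs attention: $\hat V^{CR}_N$ could be zero or undefined with small probability in finite samples. I would handle this by noting that the event $\{\hat V^{CR}_N/\mathrm{Var}[\hat\tau_N] > 1/2\}$ has probability tending to one, and defining the statistic arbitrarily off this event, which does not affect convergence in distribution. Also, $\mathrm{Var}[\hat\tau_N]>0$ for all large $N$, by the non-degeneracy part of Assumption~\ref{assn:po}, which the theorems already use.

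Slutsky's theorem, applied to the product of a sequence converging in distribution and one converging in probability to a constant, then gives the stated $\mathcal{N}(0,1)$ limit in all four cases. This holds for both homogeneous group sizes ($\tau_{11}=\tau_{\mathrm{PAME}}$) and varying sizes ($\tau_{11}=\tau^{(\phi)}_{\mathrm{PAME}}$), since Theorems~\ref{thm:normality} and~\ref{thm:variance} cover both.

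For case 1-0, Theorem~\ref{thm:variance} also gives $\hat V^{HR}_N/\mathrm{Var}[\hat\tau_N]\xrightarrow{P}1$, and the identical argument applies with $\hat V^{HR}_N$ in place of $\hat V^{CR}_N$. There is no real obstacle beyond the bookkeeping above, because all the substantive work (the coupling CLT and the variance limits) sits in the two theorems being invoked. I would not claim anything for $\hat V^{HR}_N$ outside case 1-0. There, Theorem~\ref{thm:variance} says the ratio does not converge to $1$, so the studentized statistic has a non-standard normal limit, and the corollary correctly restricts the $\hat V^{HR}_N$ statement to case 1-0.
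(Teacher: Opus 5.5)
Your proposal is correct and follows the paper's own route: the paper obtains the corollary by combining Theorem~\ref{thm:normality} and Theorem~\ref{thm:variance} through Slutsky's theorem, and restricts the $\hat V^{HR}_N$ version to case 1-0 for the reason you give. One small note: the positivity of $\mathrm{Var}[\hat\tau_N]$ that you attribute to Assumption~\ref{assn:po} is imposed separately in the appendix, as $G_N\,\mathrm{Var}(\hat\tau_N)\ge c>0$ for large $N$, to obtain ratio-consistency. Since you take Theorem~\ref{thm:variance} as given, this does not affect your argument.
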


The recommended $(1-\alpha)$ confidence interval is
\[
CI^{BM} = \left(\hat\tau_N + t^{K_{BM}}_{\alpha/2}\sqrt{\hat V_N},\;
\hat\tau_N + t^{K_{BM}}_{1-\alpha/2}\sqrt{\hat V_N}\right),
\]
where $K_{BM}$ is the Bell--McCaffrey degrees-of-freedom adjustment of \citet{imbens2016robust}, and $\hat V_N = \hat V^{CR}_N$ in all four cases, with $\hat V_N = \hat V^{HR}_N$ available in case 1-0. 

\subsection{Sketch of the argument with homogeneous group sizes}
We provide a sketch of the proofs for Theorems~\ref{thm:normality} and~\ref{thm:variance} in the simplest case, in which all groups share a common size $M = N/G_N$. We focus on the novel combination of Design Case 1 and Outcome Case 1, for which the difference in means is the appropriate estimator, and indicate how the remaining design--outcome combinations follow as special cases. The full arguments appear in Appendix~A.1, and the generalization to varying group sizes is taken up in the next subsection. The analysis proceeds in five steps: (i) a group-level representation of the design, (ii) unbiasedness of the difference in means, (iii) the group-level variance and its estimator, (iv) the collapse to unit-level randomization under no interference, and (v) asymptotic normality via a coupling argument. We write $a_N \simeq b_N$ to mean that $a_N$ and $b_N$ share the same limit.

\textit{Step 1: a group-level representation of the design.} The key analytical move is to recast a random group formation experiment as a sampling design over a population of \textit{potential groups} that could be formed by the individuals in the population. Recall that $\mathcal{G}_{\mathcal{U}}(M)$ denotes the collection of all size-$M$ subsets of $\mathcal{U}$, with $|\mathcal{G}_{\mathcal{U}}(M)| = \binom{n}{M}$. Sequential random sampling of units followed by random group formation (Design Case 1) is equivalent in distribution to a restricted group-level design in which $G_N$ \textit{non-overlapping} members of $\mathcal{G}_{\mathcal{U}}(M)$ are drawn without replacement and then assigned to treatment and control as in a classical cluster-randomized experiment. Writing $W_\omega = 1$ if potential group $\omega$ is realized, each group enters with equal marginal probability $\E[W_\omega] = \pi_N = G_N/|\mathcal{G}_{\mathcal{U}}(M)|$, and the only departure from independent (with-replacement) sampling is that two overlapping groups can never be drawn together. That restriction is immaterial as the population size goes to infinity: the share of groups overlapping any fixed $\mathcal{A}_\omega$ is $1 - \binom{n-M}{M}/\binom{n}{M} \simeq M^2/n $ and for the complete $G_N$-tuple is $O(M^2 G^2_N/n)=O(N^@2n)$, which vanishes under the sparse-sampling condition, so realized groups are sampled \textit{as if} uniformly and independently from $\mathcal{G}_{\mathcal{U}}(M)$. This large-population independence is the feature that drives every result below. Because all members of a realized group share a treatment status and composition, the estimand reduces to a simple average of potential-group effects,
\[
\tau_{PAME} = \frac{1}{|\mathcal{G}_{\mathcal{U}}(M)|}\sum_{\omega=1}^{|\mathcal{G}_{\mathcal{U}}(M)|}\bigl[\bar{Y}_\omega(1) - \bar{Y}_\omega(0)\bigr], \qquad \bar{Y}_\omega(z) = \frac{1}{M}\sum_{i\in\mathcal{A}_\omega}Y_i(z,\mathcal{A}_\omega).
\]

\textit{Step 2: unbiasedness.} In this representation, unbiasedness of $\hat \tau_N$ is almost immediate. Conditional on the realized groups, $\hat{\tau}_N$ is the classical difference-in-means estimator applied to group-level outcomes, hence unbiased for the sample average treatment effect at the group level. Averaging over the sampling of potential groups then gives $\E[\hat{\tau}_N] = \tau_{PAME}$. The same conclusion applies at the unit level, which explains why the difference in means targets a marginalized estimand: under Design Case 1 the group realized for unit $i$ is uniform over its $\binom{n-1}{M-1}$ candidate size-$M$ groups, so $\E[Y_i \mid Z_i = z] = \mu_i(z)$, the individualistic marginalized potential outcome, and the difference in means therefore targets $n^{-1}\sum_i[\mu_i(1)-\mu_i(0)] = \tau_{PAME}$. The other cases follow by the similar sampling logic: under Design Case 0 with interference $\hat{\tau}_N$ is unbiased for the $TOT$ given random sampling of the fixed groups, and under either design with no interference it is unbiased for the $ATE$---for fixed groups because the group sizes are common\footnote{With varying sizes, the statement weakens to consistency as $n, N \to \infty$\citep{su2021model}.}, and for random groups because random formation makes the group-level assignment independent of potential outcomes, mimicking an individual-level trial.

\textit{Step 3: the group-level variance and its estimator.} Working at the group level yields a clean variance and a natural estimator for it. Decomposing $\Var[\hat{\tau}]$ by the law of total variance into a treatment-assignment component and a group-sampling component, the cross term and the sampling-overlap contribution are negligible under the sparse-sampling regime, leaving
\[
G_N \Var[\hat{\tau}] \simeq \frac{1}{p}\,\frac{1}{|\mathcal{G}_{\mathcal{U}}(M)|}\sum_{\omega}\bigl(\bar{Y}_\omega(1)-\bar{Y}(1)\bigr)^2 + \frac{1}{1-p}\,\frac{1}{|\mathcal{G}_{\mathcal{U}}(M)|}\sum_{\omega}\bigl(\bar{Y}_\omega(0)-\bar{Y}(0)\bigr)^2.
\]
This is exactly the limit of the Neyman variance of a completely randomized experiment run on group-level outcomes \citep{imbens2015causal}. The group-level HC2 estimator $\hat{V}^{CR}_N$ retains the within-group cross-products and is therefore consistent for the true variance. At the unit level, the cluster-robust estimator is equal in value to the group-level HC2 estimator, and is therefore consistent, whereas the heteroskedasticity-robust $\hat{V}^{HR}_N$ discards those cross-products and is not consistent.

\textit{Step 4: collapse to unit-level randomization under no interference.} Under Outcome Case 0, groupmates do not affect potential outcomes, and Design Case 1 becomes a completely randomized experiment at the unit level, so the group-level variance collapses to the unit-level Neyman variance. To see this, write the group-level sampling variance through an intra-cluster decomposition, $G_{Nz}\Var[\bar{Y}_z] \simeq M^{-1} S^2_{\mathcal{G}}\,(1 + \rho_{\mathcal{G}})$, where $S^2_{\mathcal{G}}$ is the potential-group analogue of the population variance of potential outcomes and $\rho_{\mathcal{G}}$ is the potential-group intra-cluster correlation. Under no interference, sampling two distinct group members without
replacement gives
\[
\rho_{\mathcal{G}} = -\frac{1}{n-1} \to 0 \text{ as } n \to \infty.
\]
There is thus no substantive within-group clustering of outcomes under no interference, only a negligible $O(1/n)$ negative correlation induced by the finite population. As a result, the heteroskedasticity-robust estimator becomes consistent and is the more efficient of the two; together with the cluster-robust consistency established in Step~3, this completes Theorem~\ref{thm:variance}. The vanishing of $\rho_{\mathcal{G}}$ is also what the ICC-based interference test developed below exploits.

\textit{Step 5: asymptotic normality via coupling.} We now turn to the limiting distribution of $\hat \tau_N$. The difficulty is that the restricted design's inclusion indicators $W_\omega$ are dependent, so a central limit theorem for independent draws does not apply directly. We instead \textit{couple} the restricted design to an auxiliary one that draws $G_N$ groups independently and with replacement from $\mathcal{G}_{\mathcal{U}}(M)$. Writing $Q$ and $Q^*$ for the laws of the ordered $G_N$-tuple of realized groups under the restricted and independent designs, we have $\mathrm{supp}(Q) \subseteq \mathrm{supp}(Q^*)$, and the total variation distance between them vanishes under $N^2/n \to 0$:
\[
d_{TV}(Q, Q^*) = 1 - \prod_{r=0}^{G_N-1}\frac{\binom{n-rM}{M}}{\binom{n}{M}} \simeq \frac{M^2 G_N^2}{2n} \to 0,
\]
again because the overlapping share is negligible, so the two designs are asymptotically indistinguishable. By the data-processing inequality the same convergence holds for the law of any function of the tuple, in particular the standardized group-level average; since that average is asymptotically normal under the independent design (Lindeberg--Feller, the summands being i.i.d.\ and bounded), the restricted design inherits the same normal limit. Combining this with the \citet{ohlsson1989asymptotic} two-stage martingale central limit theorem for the assignment stage delivers the conclusion of Theorem~\ref{thm:normality}.

\subsection{Generalization to variable group sizes}
The five-step argument extends to varying group sizes with two modifications, both developed at length in Appendix~A.2.

First, the group-level sampling representation extends to a per-size-class version: the population of potential groups becomes $\bigcup_{k=1}^{K} \mathcal{G}_{\mathcal{U}}(m_k)$, and the design realizes $G_N(m_k)$ groups of size $m_k$ with within-class sampling probability $\pi_N(m_k) = G_N(m_k)/|\mathcal{G}_{\mathcal{U}}(m_k)|$. The variance, central limit theorem, and consistency arguments carry through within each size class, and the simple-average aggregation across classes inherits the within-class asymptotic distributions since sampling across classes is asymptotically independent. The cluster-robust estimator $\hat{V}^{CR}_N$ (in its varying-sizes form above) remains consistent for $\Var[\hat\tau_N]$, while the heteroskedasticity-robust estimator $\hat V^{HR}_N$ is consistent only under no interference and random group formation.

Second, which $\tau^{(\phi)}_{PAME}$ an estimator is unbiased for depends on the weighting it places across size classes. The unweighted difference in means $\bar Y_1 - \bar Y_0$ is unbiased for $\tau^{(\phi)}_{PAME}$ at the design-share weights $\phi(z; m_k) = N_z(m_k)/N_z$, while the IPW estimator $\hat\tau_N$ reweights each size class evenly, targeting $\tau^{(\phi)}_{PAME}$ at $\phi(z; m_k) = 1/|\mathcal{M}_z|$ (the second weighting highlighted in Section~\ref{estimands-rg}). In general, within a size class $m_k$, the permutation-based group allocation in Design Case 1 assigns $\mathcal{A}_{A_i}$ uniformly over the $\binom{n-1}{m_k-1}$ size-$m_k$ groups in the population containing $i$, so $\E[Y_i \mid Z_i = z, |\mathcal{A}_{A_i}| = m_k] = \mu_i(z; m_k)$. Different estimators weight these conditional expectations differently and so converge to different members of the $\tau^{(\phi)}_{PAME}$ family.

\subsection{A diagnostic for interference}\label{sec:test}
When groups are randomly formed but interference is absent, both cluster-robust and heteroskedasticity-robust inference are asymptotically valid, but the heteroskedasticity-robust one is more efficient. It is therefore useful to determine which outcome regime we are in (Outcome Case 0 or 1), since this affects both the validity and the efficiency of inference.

One could simply test for a difference in the variance estimates, but a more direct test works with the intra-cluster correlations (ICCs) of the treated and control outcomes. Under Design Case 1 and Outcome Case 0, randomly formed groups do not generate persistent within-group dependence: within any treatment-by-size cell, the intraclass correlation is asymptotically zero, as in the homogeneous-size calculation yielding $\rho_{\mathcal{G}}=-(M-1)/(n-1)\to 0$. Under Outcome Case 1, by contrast, interaction among members of the realized group $\mathcal{A}_g$ may induce within-group dependence in observed outcomes. We therefore compute ICC statistics within treatment-by-size cells and combine the resulting evidence across cells.\footnote{We cannot pool the treatment and control groups, because the ICC is sensitive to the mean difference induced by the treatment effect. To see this, consider a balanced design with a constant effect and homogenous potential outcomes, such that $Y_i(1) = 10$ and $Y_i(0) = 5$ for all $i$. Within each treatment arm, for any group partition, the between group variation is zero, and so the arm-specific ICCs are zero. However, when we pool the two arms, the between group variation is no longer zero, and so the ICC is greater than zero.} This is not a direct test of interference, but it provides suggestive evidence.

For $z\in\{0,1\}$ and $m_k\in\mathcal{M}_z$, define the treatment-by-size cell
\[
\mathcal{C}_{z,m_k}
=
\left\{
g\in\{1,\ldots,G_N\}: Z_g=z,\; M_g=m_k
\right\},
\]
so that $G_{Nz}(m_k)=|\mathcal{C}_{z,m_k}|$ and $N_z(m_k)=m_kG_{Nz}(m_k)$.  We restrict attention to informative cells
\[
\mathcal{C}_{\mathrm{ICC}}
=
\left\{
(z,m_k): z\in\{0,1\},\; m_k\in\mathcal{M}_z,\; m_k\ge 2,\; G_{Nz}(m_k)\ge 2
\right\},
\]
since singleton groups do not contain within-group variation and cells with only one group do not identify between-group variation. For any $(z,m_k)\in\mathcal{C}_{\mathrm{ICC}}$, let
$
\bar Y_g = \frac{1}{m_k}\sum_{i\in\mathcal{A}_g}Y_i
$
denote the mean outcome in group $g\in\mathcal{C}_{z,m_k}$, and recall that the corresponding cell mean is
$
\hat\mu(z;m_k)
=
\frac{1}{N_z(m_k)}
\sum_{g\in\mathcal{C}_{z,m_k}}
\sum_{i\in\mathcal{A}_g}Y_i
=
\frac{1}{G_{Nz}(m_k)}
\sum_{g\in\mathcal{C}_{z,m_k}}\bar Y_g .
$
Define the between-group and within-group mean squares in cell $(z,m_k)$ as
\[
MSB_{z,m_k}
=
\frac{
	m_k\sum_{g\in\mathcal{C}_{z,m_k}}
	\left(\bar Y_g-\hat\mu(z;m_k)\right)^2
}{
	G_{Nz}(m_k)-1
},
\]
and
\[
MSW_{z,m_k}
=
\frac{
	\sum_{g\in\mathcal{C}_{z,m_k}}
	\sum_{i\in\mathcal{A}_g}
	\left(Y_i-\bar Y_g\right)^2
}{
	N_z(m_k)-G_{Nz}(m_k)
}.
\]
When $MSW_{z,m_k}>0$, the cell-level ANOVA statistic and ICC estimate are
$
F_{z,m_k}
=
\frac{MSB_{z,m_k}}{MSW_{z,m_k}},
$ and $
\hat\rho_{z,m_k}
=
\frac{MSB_{z,m_k}-MSW_{z,m_k}}
{MSB_{z,m_k}+(m_k-1)MSW_{z,m_k}}
=
\frac{F_{z,m_k}-1}{F_{z,m_k}+m_k-1} .
$

Under the traditional balanced one-way ANOVA model within cell $(z,m_k)$ and the null of zero within-group dependence,
$
F_{z,m_k}
\sim
F_{G_{Nz}(m_k)-1,\;N_z(m_k)-G_{Nz}(m_k)}.
$
Without the normality assumptions of the ANOVA model, and with fixed $m_k$ and $G_{Nz}(m_k)\to\infty$, we use the asymptotic approximation
$$
T_{z,m_k}
=
\frac{
	\sqrt{G_{Nz}(m_k)}\left(F_{z,m_k}-1\right)
}{
	\sqrt{2m_k/(m_k-1)}
}
\rightsquigarrow
\mathcal{N}(0,1),
$$
or equivalently,
$$
\sqrt{\frac{G_{Nz}(m_k)m_k(m_k-1)}{2}}\;
\hat\rho_{z,m_k}
\rightsquigarrow
\mathcal{N}(0,1).
$$
Let $p_{z,m_k}$ denote the corresponding upper-tail $p$-value for positive within-group dependence, using either the finite-sample $F$ reference distribution or the normal approximation, $p_{z,m_k}=1-\Phi(T_{z,m_k})$.

To combine evidence across informative cells, let
$J=|\mathcal{C}_{\mathrm{ICC}}|$ and define Fisher's combined statistic
$$
S_{\mathrm{ICC}}
=
-2
\sum_{(z,m_k)\in\mathcal{C}_{\mathrm{ICC}}}
\log p_{z,m_k}.
$$
Asymptotically, 
$
S_{\mathrm{ICC}}
\rightsquigarrow
\chi^2_{2J}
$
under the null of zero within-cell ICCs. We reject at level $\alpha$ when
$
S_{\mathrm{ICC}}
\ge
\chi^2_{2J;\,1-\alpha}.
$
In the homogeneous group-size case, $\mathcal{M}_0=\mathcal{M}_1=\{M\}$ and, provided $M\ge2$ with at least two groups in each arm, $J=2$, so the procedure reduces to combining the two treatment-stratum tests with a $\chi^2_4$ reference distribution.

\section{Simulation study}\label{sec:simulation}

We use a Monte Carlo simulation study to demonstrate that our convergence results provide reasonable approximations for inference even with a finite sample. To align the simulations with the asymptotic condition $N^2/n \to 0$, we let the population size $n$ increase faster than $N^2$.  The simulation setting is as follows (see the Appendix for a full description of the Monte Carlo specifications):

\begin{itemize}
\item $N \in \{80, 200, 300, 400\}$ and $M=4$. For these four values of $N$, respectively, we set $n \in \{404{,}772, 4{,}000{,}000, 11{,}022{,}704, 22{,}627{,}420\}$ according to $n \propto N^{5/2}$, calibrated so that $N^2/n=0.01$ at $N=200$. The corresponding values of $N^2/n$ are $0.0158$, $0.0100$, $0.0082$, and $0.0071$. Each unit $i$ carries a uniformly drawn scalar covariate $X_i$.
\item Individuals in the {\it population} are already partitioned into groups of size $M$ prior to sampling.  The potential outcomes contain a group-level random effect, representing homophily in the population.
\item Under the fixed-group design, we sample $G_N=N/M$ pre-existing population groups without replacement and retain all of their members. The sampled population groups therefore become the experimental groups, and the population-level within-group dependence carries into the experiment.
\item Under the random-group design, we instead draw $N$ individual units by simple random sampling without replacement and randomly partition them into $G_N=N/M$ experimental groups of size $M$. The pre-existing population groups are not used to select or form the experimental groups so the population grouping becomes irrelevant. 
\item When interference is present, {\it individual level treatment effects} are a function of other group members' covariates.
\item We compute averages over 2,000 Monte Carlo simulation runs.
\end{itemize}
Table~\ref{tab:sim-outcomes} shows the true estimand, the Monte Carlo mean and variance of the difference in means estimator ($\hat \tau$), and then intraclass correlation of potential outcomes for units in the control groups and treated groups.  The $N^2/n$ column reports the realized sparsity ratio for each design. We see that the estimand ($\tau$) differs across scenarios: $\tau$ corresponds to the PATE in the cases with no interference, the TOT with interference but fixed groups, and the PAME with interference and random groups. 
The TOT averages over potential groups drawn from the population, whereas PAME averages over potential groups that can be formed from individuals in the population.
Our simulation is such that these two averages are nearly the same, although this does not have to be the case with regard to actual populations.
The difference in means is unbiased for these estimands. 

\begin{table}[htbp]
\centering
\begingroup\small
\scalebox{0.9}{
\begin{tabular}{lrrrrrr}
  \hline
 & $N^2/n$ & $\tau$ & $\mathbb{E}[\hat{\tau}]$ & True Var. & ICC $Y(0)$ & ICC $Y(1)$ \\ 
  \hline
N=80 &  &  &  &  &  &  \\ 
  No Interference, Fixed Groups & 0.0158 & 0.5002 & 0.5073 & 0.0795 & 0.9107 & 0.1479 \\ 
  Interference, Fixed Groups & 0.0158 & 1.7122 & 1.7189 & 0.1675 & 0.9107 & 0.4285 \\ 
  No Interference, Randomly Formed Groups & 0.0158 & 0.5008 & 0.4988 & 0.0418 & -0.0188 & -0.0167 \\ 
  Interference, Randomly Formed Groups & 0.0158 & 1.7050 & 1.6986 & 0.1306 & -0.0188 & 0.3245 \\ 
   \hline
N=200 &  &  &  &  &  &  \\ 
  No Interference, Fixed Groups & 0.0100 & 0.4986 & 0.4994 & 0.0322 & 0.9192 & 0.1648 \\ 
  Interference, Fixed Groups & 0.0100 & 1.7063 & 1.7055 & 0.0676 & 0.9192 & 0.4487 \\ 
  No Interference, Randomly Formed Groups & 0.0100 & 0.4988 & 0.4963 & 0.0178 & -0.0044 & -0.0060 \\ 
  Interference, Randomly Formed Groups & 0.0100 & 1.7027 & 1.6970 & 0.0549 & -0.0044 & 0.3438 \\ 
   \hline
N=300 &  &  &  &  &  &  \\ 
  No Interference, Fixed Groups & 0.0082 & 0.4998 & 0.5003 & 0.0207 & 0.9199 & 0.1686 \\ 
  Interference, Fixed Groups & 0.0082 & 1.7097 & 1.7141 & 0.0450 & 0.9199 & 0.4546 \\ 
  No Interference, Randomly Formed Groups & 0.0082 & 0.4992 & 0.4998 & 0.0111 & -0.0040 & -0.0031 \\ 
  Interference, Randomly Formed Groups & 0.0082 & 1.7042 & 1.7043 & 0.0350 & -0.0040 & 0.3484 \\ 
   \hline
N=400 &  &  &  &  &  &  \\ 
  No Interference, Fixed Groups & 0.0071 & 0.4979 & 0.4953 & 0.0158 & 0.9211 & 0.1693 \\ 
  Interference, Fixed Groups & 0.0071 & 1.7038 & 1.7008 & 0.0315 & 0.9211 & 0.4537 \\ 
  No Interference, Randomly Formed Groups & 0.0071 & 0.5000 & 0.4976 & 0.0083 & -0.0022 & -0.0028 \\ 
  Interference, Randomly Formed Groups & 0.0071 & 1.7088 & 1.7063 & 0.0258 & -0.0022 & 0.3529 \\ 
   \hline
\end{tabular}
}
\endgroup
\caption{True treatment effect estimands ($\tau$), mean and variance of the difference-in-means estimator ($\hat{\tau}$), and intra-class correlations for outcomes in control $Y(0)$ and treatment $Y(1)$ groups.} 
\label{tab:sim-outcomes}
\end{table}

For untreated control groups, Table~\ref{tab:sim-outcomes} shows substantial intraclass outcome correlation when groups are fixed.
This is due to the homophily in the groups that exist in the population.
The control group intraclass correlation is zero when groups are randomly formed.
Recall that interference in the simulation comes only through treatment effects, which do not pertain to control group observations. 
This shows how, in the absence of interference, random group formation yields what is essentially an individually-randomized experiment.
For the treated groups, when groups are fixed and there is no interference, unit-level heterogeneity in treatment effects makes the intra-class correlation smaller than for the control groups.
The introduction of interference with fixed groups brings the intraclass correlation back up. 
The difference between the intraclass correlation in the first two scenarios is due to the within-group interference that translates into treatment effects.
When we have random group formation and no interference, again, there is no group-level intraclass correlation due to homophily, showing that the scenario is equivalent to unit-level randomization without interference.
But when interference is reintroduced (the last scenario), this creates a novel source of group-level intraclass correlation.

\begin{figure}[!t]
	\begin{center}
		\includegraphics[width=.75\textwidth]{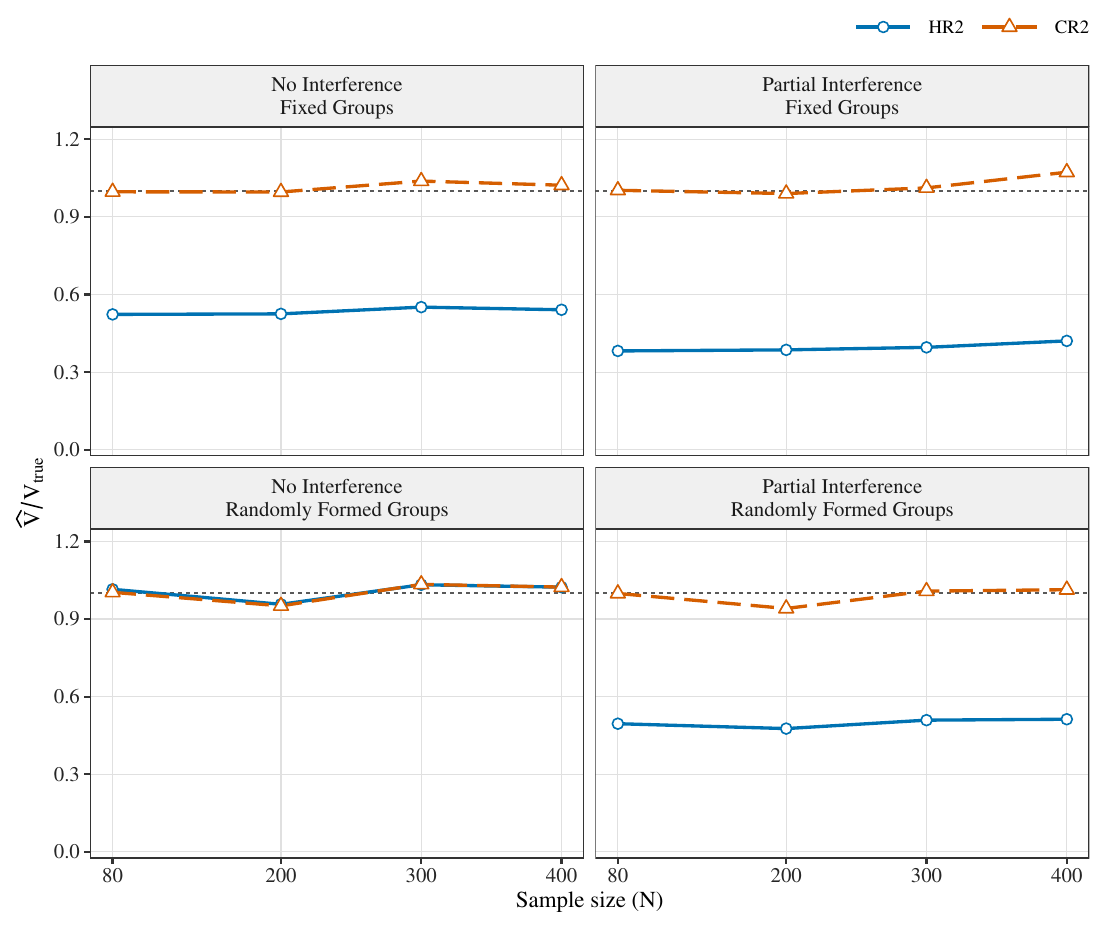}
		\caption{Simulation results for ratio of variance estimators relative to the true variance of the difference in means estimator. ``HR2'' denotes the individual-level HR2 heteroskedasticity robust variance estimator, and ``CR2'' denotes the individual-level cluster-robust variance estimator.}
		\label{fig:sim-vratio}
	\end{center}
\end{figure}

Figure~\ref{fig:sim-vratio} and Table~\ref{tab:sim-sdv} present simulation results for the variance estimators. Figure~\ref{fig:sim-vratio} compares the performance of the heteroskedasticity-robust (HR2) and cluster-robust (CR2) variance estimators. We observe convergence of the cluster-robust variance estimator to the true variance across all scenarios. By contrast, the heteroskedasticity-robust variance estimator is accurate only in the case with no interference and randomly formed groups. Table~\ref{tab:sim-sdv} reports the empirical coverage rates of confidence intervals based on HR2 and CR2. We find that the coverage rates using CR2 are very close to the nominal 95\% level across all four cases. However, when examining the standard deviations of the variance estimators, we observe that under no interference and randomly formed groups, HR2 is more precise than CR2, as expected.

\begin{table}[htbp]
\centering
\begingroup\small
\scalebox{0.9}{
\begin{tabular}{lrrrrr}
  \hline
 & $N^2/n$ & Coverage.HR2 & Coverage.CR2 & sd(HR2) & sd(CR2) \\ 
  \hline
N=80 &  &  &  &  &  \\ 
  No Interference, Fixed Groups & 0.0158 & 0.8330 & 0.9330 & 0.0088 & 0.0281 \\ 
  Partial Interference, Fixed Groups & 0.0158 & 0.7500 & 0.9230 & 0.0175 & 0.0718 \\ 
  No Interference, Randomly Formed Groups & 0.0158 & 0.9445 & 0.9280 & 0.0081 & 0.0168 \\ 
  Partial Interference, Randomly Formed Groups & 0.0158 & 0.8295 & 0.9215 & 0.0158 & 0.0612 \\ 
   \hline
N=200 &  &  &  &  &  \\ 
  No Interference, Fixed Groups & 0.0100 & 0.8390 & 0.9415 & 0.0022 & 0.0070 \\ 
  Partial Interference, Fixed Groups & 0.0100 & 0.7705 & 0.9340 & 0.0044 & 0.0172 \\ 
  No Interference, Randomly Formed Groups & 0.0100 & 0.9395 & 0.9345 & 0.0021 & 0.0042 \\ 
  Partial Interference, Randomly Formed Groups & 0.0100 & 0.8220 & 0.9335 & 0.0040 & 0.0149 \\ 
   \hline
N=300 &  &  &  &  &  \\ 
  No Interference, Fixed Groups & 0.0082 & 0.8530 & 0.9500 & 0.0012 & 0.0038 \\ 
  Partial Interference, Fixed Groups & 0.0082 & 0.7820 & 0.9405 & 0.0025 & 0.0098 \\ 
  No Interference, Randomly Formed Groups & 0.0082 & 0.9600 & 0.9550 & 0.0011 & 0.0023 \\ 
  Partial Interference, Randomly Formed Groups & 0.0082 & 0.8275 & 0.9485 & 0.0023 & 0.0082 \\ 
   \hline
N=400 &  &  &  &  &  \\ 
  No Interference, Fixed Groups & 0.0071 & 0.8510 & 0.9510 & 0.0008 & 0.0024 \\ 
  Partial Interference, Fixed Groups & 0.0071 & 0.7920 & 0.9530 & 0.0016 & 0.0062 \\ 
  No Interference, Randomly Formed Groups & 0.0071 & 0.9495 & 0.9455 & 0.0007 & 0.0015 \\ 
  Partial Interference, Randomly Formed Groups & 0.0071 & 0.8445 & 0.9450 & 0.0015 & 0.0052 \\ 
   \hline
\end{tabular}
}
\endgroup
\caption{Monte Carlo coverage and standard deviations for the variance estimators. Coverage.HR2 and Coverage.CR2 denote empirical coverage of nominal 95\% confidence intervals based on the HR2 and CR2 variance estimators.} 
\label{tab:sim-sdv}
\end{table}

In the group interaction experiment, the treatment effect is determined by treatment assignment, individual covariates, group members’ covariates, and group-level covariates. As in other experimental analyses, we can collect and control for these covariates in the regression. Figure~\ref{fig:sim-cov} presents the results from regression estimators with control variables. We consider several specifications. We begin with no covariates, then add only individual-level covariates, only group-level covariates, and finally both. We also estimate the treatment effect using Lin's regression adjustment \citep{lin2013agnostic}. The vertical black line marks the estimand, $\tau_{PAME}$, while the colored density curves show the distributions of the estimates. All estimates are centered around the true PAME, and the biases are negligible across specifications. However, the distributions become visibly tighter as more relevant covariates are included, especially under Lin's regression adjustment, where the RMSE is reduced from 0.65 to 0.45. 


\begin{figure}[!t]
	\begin{center}
		\includegraphics[width=.9\textwidth]{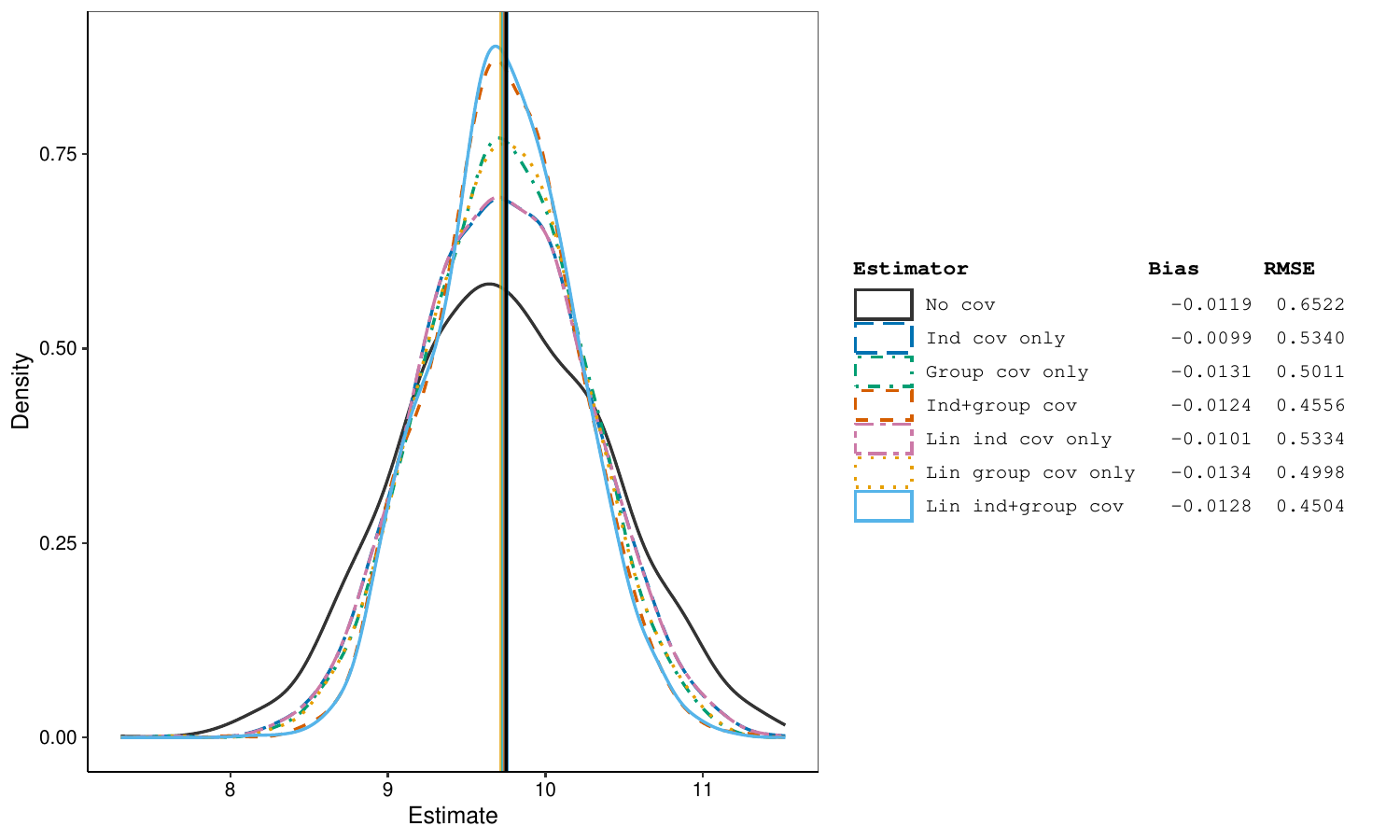}
		\caption{Distribution, Bias and RMSE of regression estimates with and without control variables in the group interaction experiment (N=400).}
		\label{fig:sim-cov}
	\end{center}
\end{figure}

Table \ref{tab:int_test} shows the simulation results for our proposed interference test. The column $\mathbb{P}(p \le 0.05)$ reports the proportion of p-values from the test that are smaller than 0.05. When there is no interference, we expect this value to be close to the nominal level of 0.05, indicating correct Type I error control. When interference is present, we expect this value to be larger and close to 1, indicating high power to detect interference and reject the null hypothesis. Overall, even when the sample size is 80, the test exhibits correct Type I error (0.051) and acceptable power (0.952).

\begin{table}[htbp]
\centering
\begingroup\small
\scalebox{0.9}{
\begin{tabular}{lr}
  \hline
 & $\mathbb{P}(p \leq 0.05)$ \\ 
  \hline
N=80 &  \\ 
  No Interference, Randomly Formed Groups & 0.06 \\ 
  Interference, Randomly Formed Groups & 0.59 \\ 
   \hline
N=200 &  \\ 
  No Interference, Randomly Formed Groups & 0.05 \\ 
  Interference, Randomly Formed Groups & 0.92 \\ 
   \hline
N=300 &  \\ 
  No Interference, Randomly Formed Groups & 0.05 \\ 
  Interference, Randomly Formed Groups & 0.98 \\ 
   \hline
N=400 &  \\ 
  No Interference, Randomly Formed Groups & 0.05 \\ 
  Interference, Randomly Formed Groups & 1.00 \\ 
   \hline
\end{tabular}
}
\endgroup
\caption{Monte Carlo simulation results for the interference test. The second column reports the proportion of p-values no larger than 0.05.} 
\label{tab:int_test}
\end{table}

\section{Application}\label{sec:app}


\subsection{Application I: gender composition and decision rule on deliberation}

We revisit the experiment presented in \citet{mendelberg2014does}.  
The experiment randomly assigned individual men and women to 5-person deliberation groups who were instructed to ``decide collectively on the `most just' principle of redistribution and set a poverty line in dollars'' (p. 4). 
The groups varied in the share of women versus men (from 0 to 5 women).  The groups also varied in whether their decision would be made via majority vote or consensus. 
The experimental design was shown in Table~\ref{table:karp-design1} above.
The main outcome in the paper was the extent to which women raised ``care issues'' (e.g., concerns about raising children) in the context of the deliberations.

\begin{table}[!t]
    \centering
    \begin{threeparttable}
        \caption{Reanalyses of \citet{mendelberg2014does}}\label{tab:mendrep}
            \footnotesize
            \begin{tabular}{llllll}
  \hline
 Outcome: & \multicolumn{2}{c}{Care issues} & \multicolumn{3}{c}{ Taking Time (Minutes) } \\ 
   & (1) & (2) & (3) & (4) & (5) \\
  \hline
  \hline
  Coefficients: \\
  \\
Maj.Rule & -1.035 & -0.786 & -0.875 & -1.026 & -1.580 \\ 
   Clust.Rob.SE & (0.446)** & (0.333)** & (0.432)** & (0.429)** & (1.295) \\ 
   Het.Rob.SE & (0.395)*** & (0.332)** & (0.376)** & (0.400)** & (1.289) \\ 
   &  &  &  &  &  \\ 
  Num.Women & -0.001 &  &  &  &  \\ 
   Clust.Rob.SE & (0.086) &  &  &  &  \\ 
   Het.Rob.SE & (0.084) &  &  &  &  \\ 
   &  &  &  &  &  \\ 
  Maj.Rule * Num.Women & 0.367 &  &  &  &  \\ 
   Clust.Rob.SE & (0.158)** &  &  &  &  \\ 
   Het.Rob.SE & (0.128)*** &  &  &  &  \\ 
   &  &  &  &  &  \\ 
  Maj.Rule X 2 women &  & 0.419 &  &  & 0.520 \\ 
   Clust.Rob.SE &  & (0.522) &  &  & (1.545) \\ 
   Het.Rob.SE &  & (0.486) &  &  & (1.554) \\ 
   &  &  &  &  &  \\ 
  Maj.Rule X 3 women &  & 1.092 &  &  & 0.427 \\ 
   Clust.Rob.SE &  & (0.556)* &  &  & (1.522) \\ 
   Het.Rob.SE &  & (0.487)** &  &  & (1.440) \\ 
   &  &  &  &  &  \\ 
  Maj.Rule X 4 women &  & 1.119 &  &  & 1.162 \\ 
   Clust.Rob.SE &  & (0.476)** &  &  & (1.488) \\ 
   Het.Rob.SE &  & (0.409)*** &  &  & (1.469) \\ 
   &  &  &  &  &  \\ 
  ICC Y(0) &  0.271*** &  &  0.314*** &  &  \\ 
  ICC Y(1) &  0.183*** &  &  0.388*** &  &  \\ 
  $N$ & 157 & 157 & 157 & 157 & 157 \\ 
IPW weights: & N & N & N & Y & N\\
   \hline
\end{tabular}
        \begin{tablenotes}
            \item\textit{Notes:} Entries are OLS coefficient estimates for women in mixed-gender groups. 
  The dependent variables are the frequency of care issues raised and talking time in minutes. 
  Columns 1 and 2 report models for care issues: column 1 uses a linear count of the number of women and its interaction with majority rule, while column 2 uses indicators for groups with two, three, or four women and their interactions with majority rule. 
  Columns 3 and 4 report models for speaking time without and with inverse-probability weights, respectively, and column 5 reports the interaction effects for speaking time. 
  All regressions include controls for liberalism, the number of liberals, and the Princeton indicator. 
  Clustered robust standard errors are clustered by group; heteroskedasticity-robust standard errors are reported separately. 
  Significance stars are attached to the corresponding standard-error row and are calculated using that row's standard error: * p $<$ 0.10, ** p $<$ 0.05, *** p $<$ 0.01. 
  ICC rows report intraclass correlations for the outcome under non-majority-rule and majority-rule conditions, respectively, and Significance stars on the ICC rows denote results from the interference test in Section \ref{sec:test}. 
  $N$ is the number of individual observations..
        \end{tablenotes}
    \end{threeparttable}
\end{table}

The main analysis in the paper studied the interaction effect between the decision rule (unanimity versus majority vote) and the number of women in the group. It was implemented, for women in mixed groups (those with at least one woman and one man), by regressing the frequency with which a woman raised care issues on an indicator for being in a majority-rule group, the number of women in the group, and their interaction. Recognizing that the group structure might be relevant, \citet{mendelberg2014does} used cluster-robust standard errors.

The first column in Table~\ref{tab:mendrep} presents the results from this analysis, with standard errors from both the CR2 and HR2 estimators. The difference is pronounced---for the interaction term, the cluster-robust standard error is 23\% larger---and tracks the scale of the intraclass correlations shown at the bottom of the table. Because individuals were assigned to groups randomly, this intraclass correlation is attributable to within-group interference. We also conduct the interference test of Section~\ref{sec:test}, and the null of no interference is rejected at the 0.001 level. In the second column, we further interact majority rule with indicators for each number of women in the group. The gap between the two is again much larger for the interaction effects: for the group with three women, we may even reach different conclusions about the treatment effect's significance depending on which estimator is used.

The third and fourth results columns illustrate the role of weighting in targeting different marginal PAMEs under varying group sizes. Here we look at the effect of the majority-vote decision rule on another outcome, and one likely to be sensitive to the decision rule: the average time each woman spent in the deliberation. (As is apparent from the coefficients in columns 1 and 2, the marginal PAME is close to zero for the ``care issues'' outcome.)  We marginalize over the different gender compositions. From Table~\ref{table:karp-design1}, the number of groups is roughly uniform across the five gender compositions (sizes $m_k = k$ for $k = 1, \dots, 4$ women). The number of \emph{women} is not, because a composition-$k$ group contributes $k$ women: the 19 single-woman groups supply 19 women, while the 16 four-woman groups supply 64. The assignment is also not perfectly balanced across the unanimity and majority-vote arms. We consider the two estimands discussed in Subsection~\ref{estimands-rg}: the design-weighted PAME ($\phi(z; m_k) = \Pr(|\mathcal{A}_{A_i}| = m_k \mid Z_i = z) = N_z(m_k)/N_z$), which one obtains with an {\it unweighted} regression and the PAME that gives uniform weight to each size class ($\phi(z; m_k) = 1/|\mathcal{M}_z|$), which requires that one weight the regression by the inverse of the probability of appearing in any given size class.  Columns 3 and 4 of Table~\ref{tab:mendrep} show the consequences. The estimate for the PAME that gives equal weight to each size class is 17\% larger in absolute value than that for the design-weighted PAME. This reflects that treatment effects are of greater magnitude for women in groups with fewer other women (apparent from the interaction effect estimates in column 5), and that groups with fewer women proportionally less weight in the design-weighted estimand.

\subsection{Application II: improving management with individual and group-based consulting}
\citet{iacovone2022improving} conduct an experiment to test two approaches to improving poor management in the Colombian auto parts manufacturing sector. The first uses intensive, individualized consulting; the second is a group-based approach that delivers consulting at roughly a third of the cost, inspired by agricultural extension, and aims to leverage group-learning dynamics.

In the experiment, a sample of 159 firms was sorted into 53 matched triplets to improve balance on observables, and within each triplet one firm was randomly assigned to each of three arms: a control group, an individual-consulting treatment group, and a group-consulting treatment group, yielding 53 firms per arm. Firms assigned to the individual-consulting treatment received individual support over a period of six months. Treatment began with training in five areas (logistics, human resources, finance, marketing and sales, and production) delivered by five different consultants. This involved a theoretical part aimed at familiarizing the firm's management with modern management concepts and methods, complemented with practical exercises to apply these concepts to their firm, and was then followed by individual consulting to help the firm implement the improvement plan developed during the diagnostic phase.

In the group-consulting treatment, groups were formed with 2 to 8 firms. Leaders from the firms in a group signed an agreement to work together and help each other improve. As in the individual treatment, the group treatment began with training classes covering theoretical aspects of management, but delivered to the group in a classroom setting rather than one-on-one in the firm, and was then followed by group consulting sessions.

\begin{table}[!t]
    \centering
    \begin{threeparttable}
        \caption{Reanalyses of \citet{iacovone2022improving}}\label{tab:iacorep}
        {\footnotesize
\begin{tabular}{llllllll}
  \hline
   & Overall & Finance  & HR  & Logistics  & Marketing  & Production  \\ 
    &  Score &  Practices &  Practices &  Practices &  Practices &  Practices \\ 
  \hline
  Ind Treatment*During & 9.703 & 9.644 & 10.793 & 8.708 & 10.637 & 5.696 \\ 
    CR.SE Firm & (1.37)*** & (1.852)*** & (1.822)*** & (1.603)*** & (2.28)*** & (1.806)*** \\ 
  CR.SE Group & (1.38)*** & (1.848)*** & (1.814)*** & (1.66)*** & (2.292)*** & (1.814)*** \\ 
  Ind Treatment*Post & 9.62 & 9.712 & 8.974 & 8.585 & 9.451 & 8.488 \\ 
  CR.SE Firm & (1.83)*** & (2.413)*** & (2.508)*** & (2.457)*** & (2.466)*** & (1.993)*** \\ 
  CR.SE Group & (1.836)*** & (2.426)*** & (2.493)*** & (2.476)*** & (2.452)*** & (2.006)*** \\ 
  Group Treatment*During & 11.971 & 13.841 & 12.249 & 9.327 & 11.899 & 11.798 \\ 
  CR.SE Firm & (1.66)*** & (2.057)*** & (2.078)*** & (2.047)*** & (2.599)*** & (1.993)*** \\ 
  CR.SE Group & (1.941)*** & (2.231)*** & (2.109)*** & (2.284)*** & (2.89)*** & (2.403)*** \\ 
  Group Treatment*Post & 8.544 & 9.82 & 7.156 & 5.86 & 9.046 & 10.694 \\ 
  CR.SE Firm & (1.894)*** & (2.306)*** & (2.655)*** & (2.539)** & (2.637)*** & (2.048)*** \\ 
  CR.SE Group & (2.256)*** & (2.503)*** & (2.666)*** & (2.86)** & (2.939)*** & (2.577)*** \\ 
     \hline
    ICC(group) & 0.435*** & 0.405** & 0.358*** & 0.210** & 0.185  & 0.457*** \\ 
   \hline
\end{tabular}
}
        \begin{tablenotes}
            \footnotesize
            \item\textit{Notes:} Standard errors in parentheses. The first is clustered at the firm level as in the original paper, while the second is clustered at the groups induced by treatment. $^{***}$p $<$ 0.01; $^{**}$p $<$ 0.05; $^{*}$p $<$ 0.1. Significance stars on the ICC row denote results from the test for interference in Section~\ref{sec:test}.
        \end{tablenotes}
    \end{threeparttable}
\end{table}

The main outcome is what the original paper refers to as the ``Anexo K management score,'' the average across 141 management practices (each rated on a five-point scale), measured at three points: baseline, during the intervention (at the end of implementation), and post-intervention (at the one-year follow-up). Following the specification in the original paper, we replicate their main results using a saturated regression that interacts each of the two treatment indicators---individual and group consulting---with the during- and post-intervention periods, with control as the reference arm; the regression controls for the baseline management score, randomization-triplet fixed effects, and a post-period indicator. The estimates are reported in Table~\ref{tab:iacorep}. The four interaction coefficients---each treatment arm in each period---are the effects relative to control, and they show immediate and persistent positive effects on firm management.

For standard errors, \citet{iacovone2022improving} cluster at the firm level. Based on our theory, we instead compute CR2 standard errors at the group level: for firms in the group treatment we treat each consulting group as a single cluster, while for firms in the individual-treatment and control arms each firm forms its own cluster. These CR2 standard errors are shown in the second standard-error row of each cell. As expected, they are larger than the original ones, with a substantial increase in some cells.

We also calculate the ICC and test for interference in the group treatment, with results reported in the last row of Table~\ref{tab:iacorep}. The null of no interference is rejected at the 0.001 level, and the overall intraclass correlation exceeds 0.5. For nearly all submeasures (except Marketing Practices), the group treatment induces substantial intraclass correlation. This is intuitive, since firms within a group receive the same treatment and sign an agreement to work together and help each other.



\section{Conclusion}\label{sec:con}
We have developed design-based inference for group interaction experiments, in which individuals are allocated to groups that interact under group-level treatment conditions. Organizing the problem around two design cases (fixed versus randomly formed groups) and two outcome cases (no interference versus group interference), we characterized for each of the four cells the estimand that commonly used estimators target and the variance estimator that is valid for it. These estimators target the average treatment effect (ATE) when interference is absent, the total effect (TOT) under fixed groups with interference, and an average marginalized exposure effect (the PAME) under randomly formed groups with interference. The cluster-robust variance estimator (CR2) is consistent in all four cells; the heteroskedasticity-robust estimator (HR2) is consistent only when groups are randomly formed and interference is absent, where it is also the more efficient choice.

These results revise a familiar rule of thumb. Rather than clustering at the level of treatment assignment, as is commonly inferred from \citet{abadie2023should}, one should cluster at the level of the potential outcome inputs\footnote{We thank Jonne Kamphorst for suggesting this phrasing.}---the arguments that determine which potential outcome an assignment pattern reveals. Under SUTVA, a unit's own treatment is the only such input. Once interference is present, the inputs include other units' treatment or group assignments, and valid inference requires clustering at the level of the resulting interference neighborhood even when treatment is assigned individually. In group interaction experiments that neighborhood is the group, so clustering at the group level is required whenever interaction generates interference; because such experiments typically assign individuals to groups with the expectation of interaction between individuals, cluster-robust inference should be the default. The intraclass-correlation test we develop helps researchers determine which regime they face, and hence which variance estimator to use. Our reanalyses of \citet{mendelberg2014does} and \citet{iacovone2022improving} bear these points out: in both, the null of no interference is rejected and group-level cluster-robust standard errors are materially larger than the standard errors that do not account for group interaction.

Methodologically, our asymptotic results rest on a sparse-sampling regime and a coupling argument that recasts random group formation as sampling from a population of candidate groups, shows the no-overlap restriction to be asymptotically immaterial when $N^2/n \to 0$, and thereby transports a standard central limit theorem to this design. The same device may prove useful for design-based inference in other experiments whose realized units are sampled subsets of a larger population. 

Several extensions remain open for future research, and two seem particularly important. First, the analysis here suggests potential trade-offs that come from choosing a design with larger versus smaller groups. With a fixed sample size, one could have many small groups or a few larger groups. Having many smaller groups yields more independent clusters, however the group interaction may be more intense, and so the implications for estimator variance are ambiguous. A design with many small groups also targets a different estimand (marginalizing over small numbers of group-mates) than having fewer larger groups. Both statistical and substantive considerations are needed to determine which design is best.  Second, the analysis here takes groupmate composition to be a source of unmeasured heterogeneity, and the estimands we consider here marginalize over that heterogeneity. Another target of inference, and one that is the focus of papers like \citet{li2019randomization}, is the effect of measured variation in peer composition. The analysis here suggests ways that one might revisit the question of ``peer effects''.  

\section*{Statement of AI Usage} In writing this paper, the authors declare use of AI/LLM tools in the following: helping to develop asymptotic analyses and in particular helping with the generalization from the homogenous group size case to the varying group size case (Chat GPT 5 and Sol models), constructing tables and illustrative figures (Claude Opus models), identifying notational inconsistencies (Claude Opus and ChatGPT 5 and Sol models), drafting the Monte Carlo description (Claude Opus models), and in reviewing the full draft and noting areas for revision (Refine.Ink, Claude Opus, and ChatGPT Sol).  All outputs have been reviewed by the authors, although we are at present conducting continued verification. Updated drafts will contain any necessary corrections or revisions that we spot.

\clearpage

\begin{singlespace}      
\begin{appendix}

\section{Appendix}
\section*{Contents}
  \begin{description}
    \item[A.1\quad Homogeneous group sizes]\hfill
    \begin{description}
      \item[A.1.1] Set up
      \item[A.1.2] A group-level representation of the experimental design
    \end{description}

    \item[A.2\quad Varying group sizes]\hfill
    \begin{description}
      \item[A.2.1] Generalizing the estimands with varying group sizes
      \item[A.2.2] Translation to the group-level representation
      \item[A.2.3] Translating analytical results to the varying group size case
    \end{description}

    \item[A.3\quad Monte Carlo simulation design]\hfill
    \begin{description}
      \item[A.3.1] Simulation study 1: variance estimation and coverage
      \item[A.3.2] Simulation study 2: covariate adjustment
    \end{description}
\end{description}

\begin{table}[!h]
\centering
\caption{Notation used in the appendix}
\label{tab:notation}
{\small
\renewcommand{\arraystretch}{0.8}
\begin{tabular}{@{}ll@{}}
\hline
\textbf{Symbol} & \textbf{Definition} \\
\hline
\multicolumn{2}{@{}l@{}}{\it Population, sample, and groups} \\
$\mathcal{U}$, $\mathcal{N}$ & Population ($|\mathcal{U}|=n$) and sample ($|\mathcal{N}|=N$) \\
$R_i$, $r$ & Sampling indicator and probability: $r = N/n$ \\
$G_N$, $G_{N1}$, $G_{N0}$ & Total, treated, and control groups \\
$M$, $M_g$ & Group size (equal-size); group size for the $g$th group \\
$\mathcal{M}$, $\mathcal{M}_z$, $m_k$ & Collection of admissible group sizes; those in arm $z$; the $k$th such size \\
$\mathcal{A}_g \subseteq \mathcal{U}$ & Set of units in group $g$, $|\mathcal{A}_g| = M_g$ \\
$A_i$ & Group index: $A_i = g \iff i \in \mathcal{A}_g$ \\
$\mathcal{A}_{A_i}$ & Group set for unit $i$ (incl.\ $i$), $|\mathcal{A}_{A_i}| = M_{A_i}$ \\
$\mathcal{G}_{\mathcal{U}}(M)$ & All size-$M$ subsets of $\mathcal{U}$; $|\mathcal{G}_{\mathcal{U}}(M)| = \binom{n}{M}$ \\
$\mathcal{G}_{\mathcal{U},i}(M)$ & $\{\mathcal{S} \subseteq \mathcal{U} : i \in \mathcal{S},\, |\mathcal{S}| = M\}$; size $\binom{n-1}{M-1}$ \\
\hline
\multicolumn{2}{@{}l@{}}{\it Treatment} \\
$Z_g$, $Z_i = Z_{A_i}$ & Group and individual treatment status \\
$p$ & $P(Z_g = 1) = G_{N1}/G_N$ \\
\hline
\multicolumn{2}{@{}l@{}}{\it Potential outcomes and estimands} \\
$Y_i(z, \mathcal{A})$ & Potential outcome for $i$ under treatment $z$, group $\mathcal{A} \ni i$ \\
$\mu_i(z)$ & Marginalized potential outcome: $\binom{n-1}{M-1}^{-1} \!\sum_{\mathcal{A} \in \mathcal{G}_{\mathcal{U},i}} Y_i(z, \mathcal{A})$ \\
$\tau_i$, $\tau_{PAME}$ & $\mu_i(1) - \mu_i(0)$;\; $n^{-1}\sum_{i \in \mathcal{U}} \tau_i$ \\
$\bar{Y}_g(z)$, $\tau_g$ & Group mean potential outcome; group effect $\bar{Y}_g(1) - \bar{Y}_g(0)$ \\
$\tau_{SAME}$ & Sample average marginalized effect: $\frac{1}{G_N} \sum_{g=1}^{G_N} \tau_{g}$ \\
\hline
\multicolumn{2}{@{}l@{}}{\it Group-level design representation} \\
$\omega$, $\mathcal{A}_\omega$ & Index and membership of potential group $\omega$ \\
$W_\omega$ & $W_\omega = 1$ if group $\omega$ is realized \\
$\pi_N$ & Group sampling prob.: $G_N / |\mathcal{G}_{\mathcal{U}}(M)|$ \\
$\bar{Y}_\omega(z)$, $\tau_\omega$ & Mean potential outcome and effect for potential group $\omega$ \\
$\bar{Y}_N(z)$, $\bar{Y}_I(z)$ & $\tfrac{1}{N}\sum_{i=1}^N Y_i\left(z, \mathcal{A}_g\right)$; $\frac{1}{n} \sum_{i = 1}^n Y_i(z)$ (no interference)  \\
$\bar{Y}(z)$, $\widehat{\bar{Y}}(z)$ & Mean potential outcome under $z$ across potential groups; its estimate  \\
$\check{\tau}$ & $(|\mathcal{G}_{\mathcal{U}}(M)|\pi_N)^{-1} \sum_\omega W_\omega \tau_\omega$ \\
$\sigma_\tau^2$ & $|\mathcal{G}_{\mathcal{U}}(M)|^{-1} \sum_\omega (\tau_\omega - \tau_{PAME})^2$ \\
\hline
\multicolumn{2}{@{}l@{}}{\it Estimator and asymptotics} \\
$\hat{\tau}_N$ & A commonly used estimator \\
$a_N \simeq b_N$ & $\lim_{n,N \to \infty} \frac{a_N}{b_N} = 1$ under the given asymptotic regime \\
$a_N \asymp b_N$ & $a_N = O(b_N)$ and $b_N = O(a_N)$ under the given asymptotic regime \\
$\rightarrow$, $\xrightarrow{P}$, $\rightsquigarrow$ & Convergence of a sequence, in probability, and in distribution. \\
\hline
\end{tabular}}
\end{table}

\clearpage

This appendix is self-contained. We begin by restating the inferential setting and notation, then establish the results formally. We develop the analysis for the Design Case 1 (random group formation) and Outcome Case 1 (group interference) combination introduced in the main text. Results for the other design and outcome cases appear as special cases or with reference to existing results (e.g., for cluster-randomized designs without interference). 

We have a population, $\mathcal{U}$, with $n = |\mathcal{U}|$ units. From the population $\mathcal{U}$, we randomly draw a sample $\mathcal{N}$ with $N$ units using simple random sampling (SRS). The sampling indicator is denoted as $R_i$, where $R_i = 1$ if unit $i \in \mathcal{U}$ is in the sample and $R_i = 0$ otherwise. We use $r$ to denote the probability for unit $i$ to be included in the sample: $r = P(R_i = 1) = \frac{N}{n}$. Next, we randomly partition the $N$ units into $G_N$ groups based on the group index permutation approach described in the main text. 

\subsection{Homogeneous group sizes}
We start from the simple scenario where all groups have the same size, $M = \frac{N}{G_N}$. This case allows for a simpler and more intuitive presentation of key results. We generalize to varying group sizes in the subsection that follows. We construct $G_N$ randomly formed groups, indexed by $g=1,...,G_N$, by random partition. The partition is created through a random permutation of an index vector of length $N$ of the form $\underbrace{1,...,1}_{M},...,\underbrace{g,...,g}_{M},...,\underbrace{G_N,...,G_N}_{M}$, in which each index value $g$ appears $M$ times, so that each unit is randomly assigned to a group. Groups $g=1,...,G_{N1}$ are then assigned to ``treatment'' and $g=G_{N1}+1,...,G_N$ to ``control,'' with group treatment indicator $Z_g \in \{0,1\}$ equal to $1$ for treated groups and $0$ for control groups. This is Design Case 1 in the main text with homogeneous group sizes.

\subsubsection{Set up}
Let $\mathcal{A}_g \subseteq \mathcal{U}$ denote the set of units in group $g$, with $|\mathcal{A}_g| = M$ for all $g \in \{1, \dots, G_N\}$, and write $A_i = g$ if and only if $i \in \mathcal{A}_g$. The collection of all possible groups with $M$ members that can be formed from $\mathcal{U}$ is denoted as $\mathcal{G}_{\mathcal{U}}(M)$, with $|\mathcal{G}_{\mathcal{U}}(M)| = \binom{n}{M}$. As decribed in the main text, $G_{N1}$ out of $G_N$, are selected into treatment, with the remaining $G_{N0} = G_N - G_{N1}$ groups selected into control. Let $Z_g$ denote the treatment status of group $g$. Conditional on the units included in the sample, $P(Z_g = 1) = p = \frac{G_{N1}}{G_N}$. For any unit $i \in \mathcal{A}_g$, the treatment status is given by $Z_i = Z_g = Z_{A_i}$.

At the end of the experiment, we observe the outcome $Y_i$ for each unit $i \in \mathcal{N}$. We allow $Y_i$ to depend not only on the treatment status of the group to which unit $i$ belongs but also on the composition of that group, $\mathcal{A}_{A_i}$. Accordingly, we write
$$
Y_{i} = Y_{i}\left(Z_{A_i}, \mathcal{A}_{A_i}\right).
$$
Given the random sampling of units and the random group formation, $\mathcal{A}_{A_i}$ can be any size-$M$ subset of $\mathcal{U}$ containing $i$, with probability $P(\mathcal{A}_{A_i} = \mathcal{A}) = 1/{n - 1 \choose M - 1}$ for any $\mathcal{A} \in \mathcal{G}_{\mathcal{U},i}(M)$, where $\mathcal{G}_{\mathcal{U},i}(M) = \{\mathcal{S} \subseteq \mathcal{U} : i \in \mathcal{S},\; |\mathcal{S}| = M\}$.

For any unit $i$, we can marginalize $Y_i$ over the distribution of $\mathcal{A}_{A_i}$ to obtain its individualistic marginalized potential outcome:
$$
\mu_i\left(z\right) = \E\left[Y_{i}\left(z, \mathcal{A}_{A_i}\right)\right] = \frac{1}{{n - 1 \choose M - 1}}\sum_{\mathcal{A} \in \mathcal{G}_{\mathcal{U},i}(M)} Y_{i}\left(z, \mathcal{A}\right).
$$
The individualistic marginalized effect is then defined as 
$$
\tau_i = \mu_i(1) - \mu_i(0).
$$
Our estimand of interest is the population average marginalized effect (PAME) in the population:
$$
\tau_{PAME} = \frac{1}{n}\sum_{i \in \mathcal{U}}\tau_i.
$$

We estimate $\tau_{PAME}$ using the difference-in-means (DIM) estimator:
$$
\begin{aligned}
\hat \tau_N = & \frac{\sum_{i \in \mathcal{N}} Z_iY_i}{G_{N1}M} - \frac{\sum_{i \in \mathcal{N}} (1 - Z_i)Y_i}{G_{N0}M}.
\end{aligned}
$$
This is equivalent to $Z_i$'s coefficient estimate from the OLS estimator, where we regress $Y_i$ on an intercept and $Z_i$ \citep{samii2012equivalencies}. When the probability of treatment differs, we can replace the DIM estimator with the more general H\'ajek estimator:
$$
\begin{aligned}
\hat \tau_N = & \frac{\sum_{i \in \mathcal{N}} Z_iY_i / p_i}{\sum_{i \in \mathcal{N}} Z_i / p_i} - \frac{\sum_{i \in \mathcal{N}} (1 - Z_i)Y_i / (1 - p_i)}{\sum_{i \in \mathcal{N}} (1 - Z_i) / (1 - p_i)}.
\end{aligned}
$$
It can also be implemented through a regression of $Y_i$ on an intercept and $Z_i$, where each observation is reweighted by $\frac{Z_i}{p_i} + \frac{1 - Z_i}{1 - p_i}$. We can further improve the efficiency of the estimator by incorporating demeaned covariates and their interactions with $Z_i$ in the regression model \citep{lin2013agnostic}.

To establish the large-sample properties of $\hat \tau_N$, we consider an asymptotic regime where the group size $M$ remains fixed, $n, N, G_N \to \infty$, and  $\frac{N^2}{n} \to 0$.

\subsubsection{A group-level representation of the experimental design}

We now demonstrate that the random group formation experimental design of interest can be represented equivalently as a design that randomly samples, in a restricted way, from the population of ``potential groups'' that the design allows to be formed.  The group level representation greatly facilitates the analysis.

\begin{description}
    \item[Design Case 1G:] First, a sample of $G_N$ groups are drawn via restricted random sampling from the population of ``potential groups,'' $\mathcal{G}_{\mathcal{U}}(M)$, where $|\mathcal{G}_{\mathcal{U}}(M)| = \binom{n}{M}$.  The restricted sampling design proceeds in sequence without replacement, such that the first group, $\mathcal{A}_1$ is drawn at random from $\mathcal{G}_{\mathcal{U}}(M)$, the second group, $\mathcal{A}_2$ is drawn from the set of groups that do not overlap with $\mathcal{A}_1$, $\mathcal{G}_{\mathcal{U}}(M)\setminus \{\tilde{A} :  \mathcal{A}_1 \cap \tilde{A} \neq \emptyset\}$, the third group, $\mathcal{A}_3$ is drawn from the set of groups that overlap with neither $\mathcal{A}_1$ nor $\mathcal{A}_2$, $\mathcal{G}_{\mathcal{U}}(M)\setminus \{\tilde{A} :  \mathcal{A}_1 \cap \tilde{A} \neq \emptyset \text{ or } \mathcal{A}_2 \cap \tilde{A} \neq \emptyset \} $, and so on.  From the $G_N$ sampled groups, we randomly assign $G_{N1}$ to treatment and $G_{N0} = G_N - G_{N1}$ to control.
\end{description}
This design is equivalent to a classical randomized trial implemented at the group level, but with a restrictive sampling scheme that does not allow potential groups with overlapping membership to be selected together. 

For any realized partition $a=(\mathcal{A}_1,\mathcal{A}_2,...,\mathcal{A}_{G_N})$, we use $\mathbb{P}_I(a)$ and $\mathbb{P}_{G}(a)$ to denote the probability of the realized partition $a$. The following proposition shows that the restricted group sampling and group-treatment representation is equivalent to Design Case 1 in the sense that they induce the same probability distribution on realized groups $(\mathcal{A}_1,\mathcal{A}_2,...,\mathcal{A}_{G_N})$.

\begin{lemma}
	For any realized partition $a=(\mathcal{A}_1,\mathcal{A}_2,...,\mathcal{A}_{G_N})$, both designs have the same probability:
	$$
	\mathbb{P}_I(a)=\mathbb{P}_G(a)=\frac{(n-N)!(M!)^{G_N}}{n!}
	$$
\end{lemma}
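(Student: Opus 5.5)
Both probabilities can be computed directly by counting, and the common value falls out of each calculation. A realized partition is an ordered tuple $a=(\mathcal{A}_1,\dots,\mathcal{A}_{G_N})$ of pairwise disjoint size-$M$ subsets of $\mathcal{U}$. Its union is a set $\mathcal{S}$ of $N=G_NM$ units. Throughout, group labels are kept as ordered positions $g=1,\dots,G_N$; treatment assignment attaches to positions and plays no role in the law of $a$.

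\emph{Design Case 1.} I will write the event $\{(\mathcal{A}_1,\dots,\mathcal{A}_{G_N})=a\}$ as the intersection of two events. The first is that the simple random sample equals $\mathcal{S}$, which has probability $1/\binom{n}{N}$. The second is that, given this sample, the random permutation of the index vector $(1,\dots,1,\dots,G_N,\dots,G_N)$ places label $g$ exactly on the units of $\mathcal{A}_g$ for every $g$. The index vector has $N!/(M!)^{G_N}$ distinct arrangements, all equally likely, and exactly one of them produces $a$. Multiplying gives
\[
\frac{N!(n-N)!}{n!}\cdot\frac{(M!)^{G_N}}{N!}=\frac{(n-N)!(M!)^{G_N}}{n!}.
\]

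\emph{Design Case 1G.} Here I will use the chain rule over the sequential draws. After $r$ disjoint groups have been drawn, $n-rM$ units remain. The admissible groups at step $r+1$ are exactly the size-$M$ subsets of those remaining units, so there are $\binom{n-rM}{M}$ of them, and each is drawn uniformly. Hence
\[
\mathbb{P}_G(a)=\prod_{r=0}^{G_N-1}\binom{n-rM}{M}^{-1}=\prod_{r=0}^{G_N-1}\frac{M!\,(n-(r+1)M)!}{(n-rM)!}.
\]
The ratio of factorials telescopes to $(n-G_NM)!/n!=(n-N)!/n!$. Together with the factor $(M!)^{G_N}$, this matches the Design Case 1 expression.

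The only delicate point is bookkeeping rather than difficulty. Both designs must be treated as distributions over the same object, namely \emph{ordered} tuples. In Design Case 1 this means the position label $g$ comes from the permuted index vector, not from some canonical relabeling; in Design Case 1G it means the order of the draws. If either design were read as producing an unordered partition, a factor of $G_N!$ would appear on one side only. I would state this convention explicitly at the start of the proof. I would also note that distinct arrangements of the index vector, rather than the $N!$ permutations of positions, are the equally likely outcomes; counting via $N!$ permutations with $(M!)^{G_N}$ of them yielding $a$ gives the same answer.
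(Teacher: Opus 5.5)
Your proof is correct and follows the paper's argument essentially step for step. The paper likewise factors $\mathbb{P}_I(a)$ as $1/\binom{n}{N}$ times the multinomial probability $(M!)^{G_N}/N!$ of the labeled partition, and writes $\mathbb{P}_G(a)$ as a product of reciprocal binomial coefficients over the sequential draws that telescopes to $(M!)^{G_N}(n-N)!/n!$; the only difference is that the paper carries this out for general sizes $M_g$, giving $\prod_g M_g!$ in place of $(M!)^{G_N}$, and your homogeneous statement is the special case.
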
 The proof is in the section \ref{sec:scheme}, where we generalize to the case of varying group sizes.

For any group $\omega \in \{1,2,\dots,|\mathcal{G}_{\mathcal{U}}(M)|\}$, let $W_{\omega} = 1$ if the $\omega$th group is sampled and $W_{\omega} = 0$ otherwise. Then, as a representation of the random sampling and random group formation process defined above, we would have the following:
\begin{lemma}\label{lemma:group-sampling}
Given Design Case 1G, we have
$$
\pi_N := \E[W_{\omega}] = P(W_{\omega} = 1) = \frac{G_N}{|\mathcal{G}_{\mathcal{U}}(M)|},$$ 
$$
Var[W_{\omega}] = \pi_N(1-\pi_N), \text{ and }
$$
$$
E[W_{\omega} W_{\omega'}] = \begin{cases}\pi_N * \frac{G_N-1}{|\mathcal{G}_{\mathcal{U}}(M)\setminus \mathcal{A}_{\omega}|} & \text{if } \mathcal{A}_{\omega} \cap \mathcal{A}_{\omega'} = \emptyset \\ 0 & \text{if } \mathcal{A}_{\omega} \cap \mathcal{A}_{\omega'} \neq \emptyset  \end{cases} \text{ for } \omega \ne \omega',
$$
where $\mathcal{G}_{\mathcal{U}}(M)\setminus \mathcal{A}_{\omega} = \{\mathcal{A}_{\omega'}: \mathcal{A}_{\omega'} \in \mathcal{G}_{\mathcal{U}}(M), \mathcal{A}_{\omega} \cap \mathcal{A}_{\omega'} = \emptyset\}$ denotes the collection of groups in $\mathcal{G}_{\mathcal{U}}(M)$ that do not overlap with $\mathcal{A}_{\omega}$, hence $|\mathcal{G}_{\mathcal{U}}(M)\setminus \mathcal{A}_{\omega}| = \binom{n-M}{M}$. As a result,
$$
Cov[W_{\omega}, W_{\omega'}] = \begin{cases}\pi_N\left(\frac{G_N-1}{|\mathcal{G}_{\mathcal{U}}(M)\setminus \mathcal{A}_{\omega}|} - \pi_N\right) & \text{if } \mathcal{A}_{\omega} \cap \mathcal{A}_{\omega'} = \emptyset \\ -\pi_N^2 & \text{if } \mathcal{A}_{\omega} \cap \mathcal{A}_{\omega'} \neq \emptyset  \end{cases}.
$$
\end{lemma}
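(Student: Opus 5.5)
The plan is to derive all three moments from the symmetry of Design Case 1G, using the preceding lemma that every ordered partition $a=(\mathcal{A}_1,\dots,\mathcal{A}_{G_N})$ of disjoint size-$M$ groups has the same probability $(n-N)!(M!)^{G_N}/n!$. Write $W_\omega=\sum_{g=1}^{G_N}\mathbf{1}\{\mathcal{A}_g=\mathcal{A}_\omega\}$. Because the sampled groups are disjoint, at most one term can be nonzero, so $W_\omega$ is indeed an indicator. Uniformity of the ordered tuple makes each marginal $\mathcal{A}_g$ exchangeable over $\mathcal{G}_{\mathcal{U}}(M)$: relabeling units by any permutation of $\mathcal{U}$ preserves the law and acts transitively on size-$M$ subsets. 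Hence $P(\mathcal{A}_g=\mathcal{A}_\omega)=1/|\mathcal{G}_{\mathcal{U}}(M)|$ for every $g$. Summing over $g$ gives $\pi_N=G_N/|\mathcal{G}_{\mathcal{U}}(M)|$. Since $W_\omega\in\{0,1\}$, we get $\Var[W_\omega]=\pi_N(1-\pi_N)$ immediately.

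For the cross moment, the overlapping case is trivial: two groups that share a member can never both be realized, because realized groups are disjoint. So $W_\omega W_{\omega'}=0$ almost surely.

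For disjoint $\mathcal{A}_\omega,\mathcal{A}_{\omega'}$, write
\[
E[W_\omega W_{\omega'}]=P(W_\omega=1)\,P(W_{\omega'}=1\mid W_\omega=1).
\]
Conditional on $W_\omega=1$, the remaining $G_N-1$ groups form a uniformly distributed ordered tuple of disjoint size-$M$ groups drawn from $\mathcal{U}\setminus\mathcal{A}_\omega$. This follows because the lemma's probability is constant over partitions, so conditioning on one fixed block leaves the others uniform. The same permutation-symmetry argument, now applied on the reduced population of $n-M$ units, shows that each remaining slot equals $\mathcal{A}_{\omega'}$ with probability $1/\binom{n-M}{M}$. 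Summing over the $G_N-1$ slots gives
\[
P(W_{\omega'}=1\mid W_\omega=1)=\frac{G_N-1}{|\mathcal{G}_{\mathcal{U}}(M)\setminus\mathcal{A}_\omega|},
\qquad |\mathcal{G}_{\mathcal{U}}(M)\setminus\mathcal{A}_\omega|=\binom{n-M}{M}.
\]
The covariance formulas then follow by subtracting $\pi_N^2$ in each case.

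The main obstacle is making the conditional-uniformity step rigorous. One must check that the conditional law of the other blocks given $\mathcal{A}_g=\mathcal{A}_\omega$ does not depend on which slot $g$ holds $\mathcal{A}_\omega$. Equivalently, one can count directly: the number of ordered tuples containing both $\mathcal{A}_\omega$ and $\mathcal{A}_{\omega'}$ in distinct slots, multiplied by the constant probability from the lemma, should reproduce $\pi_N(G_N-1)/\binom{n-M}{M}$. I would carry out this count as the verification.
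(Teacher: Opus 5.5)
Your proposal is correct and follows essentially the route the paper relies on. The paper states this lemma without a separate proof, treating it as a direct consequence of Design Case 1G and the preceding lemma that every ordered partition has probability $(n-N)!(M!)^{G_N}/n!$. Your symmetry and conditional-uniformity argument fills in exactly that step, and so does the direct count you propose, which gives $E[W_\omega W_{\omega'}] = G_N(G_N-1)(M!)^2(n-2M)!/n! = \pi_N (G_N-1)/\binom{n-M}{M}$ for disjoint groups.
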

The restriction on the joint inclusion probabilities of overlapping potential groups is the key restriction relative to uniform random sampling from the set of possible groups.  

As all members of a group share the same treatment status and group composition, $Y_{i} = Y_{i}\left(Z_g, \mathcal{A}_{g}\right)$, and we can define the group-level average outcome: 
$$
\bar Y_g = \frac{1}{M}\sum_{i \in \mathcal{A}_g} Y_{i} = \frac{1}{M}\sum_{i \in \mathcal{A}_g} Y_{i}\left(Z_g, \mathcal{A}_{g}\right) = \bar Y_g\left(Z_g, \mathcal{A}_{g}\right) = \bar Y_g\left(Z_g\right).
$$
The group-level treatment effect can thus be defined as
$$
\tau_g = \bar Y_g(1) - \bar Y_g(0) = \frac{1}{M}\sum_{i \in \mathcal{A}_g}\left[Y_{i}\left(1, \mathcal{A}_{g}\right) - Y_{i}\left(0, \mathcal{A}_{g}\right)\right].
$$
This restricted group sampling and group-treatment design imply that the following hold:
\begin{lemma}\label{lemma:group-po}
For any $z \in \{0,1\}$ and $\omega, \omega' \in \{1,...,|\mathcal{G}_{\mathcal{U}}(M)|\}$, we have 
\begin{enumerate}
\item (random group sampling) $\bar Y_{\omega}(z) \independent W_{\omega'}$,
\item (random treatment assignment) $\bar Y_{\omega}(z) \independent Z_{\omega'} \mid W_{\omega'} = 1$.
\end{enumerate}
\end{lemma}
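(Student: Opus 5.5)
The plan is to argue directly from the design-based structure of Design Case 1G: in this framework the potential outcomes $Y_i(z,\mathcal{A})$ are fixed, nonstochastic attributes of the population, and all randomness comes from the sampling indicators $W_{\omega'}$ and the assignment indicators $Z_{\omega'}$. For each potential group $\omega$ in $\mathcal{G}_{\mathcal{U}}(M)$ I would write $\bar Y_\omega(z) = M^{-1}\sum_{i\in\mathcal{A}_\omega} Y_i(z,\mathcal{A}_\omega)$. This is a deterministic function of the fixed potential-outcome schedule and of the fixed label $\omega$, because the membership $\mathcal{A}_\omega$ is attached to the index $\omega$ and not to the draw. So $\bar Y_\omega(z)$ is a constant, that is, a degenerate random variable.

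For part 1 I would use the elementary fact that a degenerate random variable is independent of every random variable on the same probability space. If $c$ is constant and $B$ is measurable, then $P(\{c\in B\}\cap\{W_{\omega'}=w\})$ equals either $P(W_{\omega'}=w)$ or $0$, which is $P(c\in B)\,P(W_{\omega'}=w)$. That gives $\bar Y_\omega(z)\independent W_{\omega'}$ for every $\omega$, $\omega'$, and $z$. The same statement holds jointly with the whole vector $(W_\omega)_\omega$ produced by the restricted sequential draws in Design Case 1G. By Lemma 1 this sampling law coincides with the unit-level Design Case 1, so nothing is lost by working at the group level.

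For part 2 the argument is the same once we condition on the event $\{W_{\omega'}=1\}$. This event has positive probability $\pi_N$ by Lemma~\ref{lemma:group-sampling}. Conditional on it, $Z_{\omega'}$ is generated by complete randomization of the $G_N$ realized groups into $G_{N1}$ treated and $G_{N0}$ controls. That mechanism uses only the count of sampled groups and an independent uniform permutation, never the outcome values. Since $\bar Y_\omega(z)$ is still constant under the conditional measure, conditional independence follows.

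The one point needing care, and the only real obstacle, is that the substantive content lies in distinguishing the \emph{potential} group outcome $\bar Y_\omega(z)$ from the \emph{observed} $\bar Y_g = \bar Y_g(Z_g)$. The observed quantity depends on both $W$ and $Z$, so it is not independent of them, and its group index $g$ is a random label. I would therefore state explicitly that the lemma concerns the population-indexed quantities. I would also verify that the Design Case 1G mechanism, namely sequential non-overlapping draws followed by random assignment, is specified without reference to outcomes. This is immediate from its definition and from the equivalence in Lemma 1. With that made explicit, both claims follow in a few lines.
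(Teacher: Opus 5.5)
Your proof is correct and follows essentially the paper's own reasoning. The paper gives no separate proof and treats the lemma as immediate from Design Case 1G: the population-indexed $\bar Y_{\omega}(z)$ are fixed constants, and neither the restricted group sampling nor the complete randomization of sampled groups refers to outcome values, which is exactly your degenerate-random-variable argument. You also correctly separate the potential $\bar Y_{\omega}(z)$ from the observed, randomly labeled $\bar Y_g = \bar Y_g(Z_g)$, and you note that what the later unbiasedness argument actually needs is $\E[Z_{\omega} \mid \{W_{\omega''}\}] = pW_{\omega}$.
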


Under Assumption 1 from the main text and given Design Case 1G, we can show that the following lemmas hold:
\begin{lemma}[$\tau_{PAME}$ as difference in group-level potential outcomes]\label{lemma:group-ame}
$$
\tau_{PAME} = \frac{1}{|\mathcal{G}_{\mathcal{U}}(M)|} \sum_{\omega = 1}^{|\mathcal{G}_{\mathcal{U}}(M)|} \tau_{\omega} = \frac{1}{|\mathcal{G}_{\mathcal{U}}(M)|} \sum_{\omega = 1}^{|\mathcal{G}_{\mathcal{U}}(M)|}\left[\bar Y_{\omega}(1) - \bar Y_{\omega}(0)\right].
$$
\end{lemma}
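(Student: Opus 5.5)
The plan is a double-counting (Fubini) argument that exchanges the sum over units with the sum over potential groups. Here $\bar Y_\omega(z) = M^{-1}\sum_{i\in\mathcal{A}_\omega} Y_i(z,\mathcal{A}_\omega)$ denotes the mean potential outcome of potential group $\omega$. The second equality in the statement is just the definition $\tau_\omega = \bar Y_\omega(1)-\bar Y_\omega(0)$, so the substantive claim is the first equality. By linearity it suffices to show, for each fixed $z\in\{0,1\}$,
\[
\frac{1}{n}\sum_{i\in\mathcal{U}}\mu_i(z) = \frac{1}{|\mathcal{G}_{\mathcal{U}}(M)|}\sum_{\omega}\bar Y_\omega(z),
\]
and then take the difference between $z=1$ and $z=0$. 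Note that Assumption~\ref{assn:po} is needed only to ensure all quantities are finite, and the design (Design Case 1G) plays no role, since this is a purely combinatorial identity on the population.

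\textbf{Step 1.} Write the left side explicitly:
\[
\frac{1}{n}\sum_{i\in\mathcal{U}}\mu_i(z) = \frac{1}{n\binom{n-1}{M-1}}\sum_{i\in\mathcal{U}}\sum_{\mathcal{A}\in\mathcal{G}_{\mathcal{U},i}(M)}Y_i(z,\mathcal{A}).
\]
The double sum ranges over all pairs $(i,\mathcal{A})$ with $\mathcal{A}\in\mathcal{G}_{\mathcal{U}}(M)$ and $i\in\mathcal{A}$.

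\textbf{Step 2.} Reindex the same set of pairs by first choosing the group $\mathcal{A}_\omega$ and then a member $i\in\mathcal{A}_\omega$:
\[
\sum_{\omega}\sum_{i\in\mathcal{A}_\omega}Y_i(z,\mathcal{A}_\omega) = M\sum_\omega \bar Y_\omega(z).
\]

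\textbf{Step 3.} Verify the normalization identity
\[
n\binom{n-1}{M-1} = M\binom{n}{M}.
\]
Both sides count pairs (unit, size-$M$ group containing it). Substituting, the factor $M$ cancels and we obtain $\frac{1}{|\mathcal{G}_{\mathcal{U}}(M)|}\sum_\omega \bar Y_\omega(z)$, as required.

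There is no real obstacle. The only care needed is that the potential outcome $Y_i(z,\mathcal{A}_\omega)$ appearing in $\bar Y_\omega(z)$ is evaluated at the same group $\mathcal{A}_\omega$ that indexes the sum, so the exchange of summation is exact. One may additionally remark, using Lemma~\ref{lemma:group-sampling}, that $\pi_N$ is constant across $\omega$, so uniform weighting of potential groups matches the design; this is not needed for the identity itself.
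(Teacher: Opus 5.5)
Your proof is correct and takes essentially the same route as the paper: you exchange the double sum over units and the size-$M$ groups containing them, then use $n\binom{n-1}{M-1} = M\binom{n}{M}$ to turn the normalization into $1/|\mathcal{G}_{\mathcal{U}}(M)|$. The only cosmetic difference is that you prove the identity for each $z$ and then take the difference, whereas the paper works directly with the contrast $Y_i(1,\mathcal{A}) - Y_i(0,\mathcal{A})$.
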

\begin{proof}
From its definition, we know that
$$
\begin{aligned}
\tau_{PAME} = & \frac{1}{n}\sum_{i \in \mathcal{U}}\frac{1}{{n - 1 \choose M - 1}}\sum_{\mathcal{A} \in \mathcal{G}_{\mathcal{U},i}(M)} \left[Y_{i}\left(1, \mathcal{A}\right) - Y_{i}\left(0, \mathcal{A}\right)\right] \\
= & \frac{1}{n{n - 1 \choose M - 1}}\sum_{i \in \mathcal{U}}\sum_{\mathcal{A} \in \mathcal{G}_{\mathcal{U},i}(M)} \left[Y_{i}\left(1, \mathcal{A}\right) - Y_{i}\left(0, \mathcal{A}\right)\right] \\
= & \frac{M}{n{n - 1 \choose M - 1}}\sum_{\mathcal{A} \in \mathcal{G}_{\mathcal{U}}(M)}\frac{1}{M}\sum_{i \in \mathcal{A}}\left[Y_{i}\left(1, \mathcal{A}\right) - Y_{i}\left(0, \mathcal{A}\right)\right] \\
= & \frac{M}{n{n - 1 \choose M - 1}}\sum_{\omega = 1}^{| \mathcal{G}_{\mathcal{U}}(M)|}\tau_{\omega} = \frac{1}{|\mathcal{G}_{\mathcal{U}}(M)|}\sum_{\omega = 1}^{| \mathcal{G}_{\mathcal{U}}(M)|}\tau_{\omega}.
\end{aligned}
$$
\end{proof}

\begin{lemma}[$\hat \tau_N$ as difference in group means]\label{lemma:group-tau}
$$
\begin{aligned}
\hat \tau_N = & \frac{\sum_{g=1}^{G_N}Z_g\bar Y_g}{G_{N1}} - \frac{\sum_{g=1}^{G_N}(1 - Z_g)\bar Y_g}{G_{N0}} \\
= & \frac{1}{|\mathcal{G}_{\mathcal{U}}(M)|}\sum_{\omega = 1}^{| \mathcal{G}_{\mathcal{U}}(M)|}\frac{W_{\omega}Z_{\omega}\bar Y_{\omega}}{\pi_Np} - \frac{1}{|\mathcal{G}_{\mathcal{U}}(M)|}\sum_{\omega = 1}^{| \mathcal{G}_{\mathcal{U}}(M)|}\frac{W_{\omega}(1 - Z_{\omega})\bar Y_{\omega}}{\pi_N(1 - p)}.
\end{aligned}
$$
\end{lemma}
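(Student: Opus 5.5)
The lemma is an algebraic identity, and the plan is to prove it in two moves: first aggregate the unit-level difference in means into group means, then re-index the sum over realized groups as a sum over all potential groups using the inclusion indicators $W_\omega$.

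\emph{First equality.} Every $i \in \mathcal{N}$ belongs to exactly one $\mathcal{A}_g$ and has $Z_i = Z_g$. Hence
\[
\sum_{i\in\mathcal{N}} Z_i Y_i = \sum_{g=1}^{G_N} Z_g \sum_{i\in\mathcal{A}_g} Y_i = M\sum_{g=1}^{G_N} Z_g \bar Y_g .
\]
Dividing by $G_{N1}M$ gives the treated term, and the control term is handled the same way with $1-Z_g$ and $G_{N0}M$.

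\emph{Second equality.} Under Design Case 1G, the realized groups $\mathcal{A}_1,\dots,\mathcal{A}_{G_N}$ are distinct, pairwise non-overlapping members of $\mathcal{G}_{\mathcal{U}}(M)$. So there is a bijection between the indices $g$ and the potential groups $\omega$ with $W_\omega = 1$. For each such $\omega$, set $Z_\omega$ equal to the treatment status of the corresponding realized group, and note $\bar Y_\omega = \bar Y_g$. For unsampled $\omega$ the definition of $Z_\omega$ does not matter, because those terms are multiplied by $W_\omega = 0$. This gives
\[
\sum_{g} Z_g \bar Y_g = \sum_{\omega} W_\omega Z_\omega \bar Y_\omega .
\]
It remains to match the normalizing constants. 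By Lemma~\ref{lemma:group-sampling}, $\pi_N = G_N/|\mathcal{G}_{\mathcal{U}}(M)|$, and $p = G_{N1}/G_N$. Therefore
\[
|\mathcal{G}_{\mathcal{U}}(M)|\,\pi_N\, p = G_N \cdot \frac{G_{N1}}{G_N} = G_{N1},
\]
and in the same way $|\mathcal{G}_{\mathcal{U}}(M)|\,\pi_N(1-p) = G_{N0}$. Substituting these into the first expression yields the displayed Horvitz--Thompson form.

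The only point needing care is the bookkeeping of $Z_\omega$, which is defined only for realized groups. I would state explicitly that $W_\omega Z_\omega$ is well defined, being zero whenever $W_\omega = 0$, and that, because sampled groups cannot overlap, no $\omega$ is counted twice. Beyond that the argument is pure algebra; no probabilistic step is needed. The normalization uses the fixed design counts $G_{N1}$ and $G_{N0}$, not random ones, so the identity holds exactly rather than only asymptotically.
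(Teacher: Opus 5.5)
Your proposal is correct and takes the same approach as the paper, whose proof only says the identity ``follows from straightforward algebra.'' Aggregating unit outcomes into group means, re-indexing realized groups through $W_\omega$, and using $|\mathcal{G}_{\mathcal{U}}(M)|\,\pi_N\,p = G_{N1}$ and $|\mathcal{G}_{\mathcal{U}}(M)|\,\pi_N(1-p) = G_{N0}$ is exactly that algebra written out, including your correct handling of $Z_\omega$ for unsampled groups.
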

\begin{proof}
Follows from straightforward algebra.
\end{proof}

For the analysis that follows, it is convenient to introduce a sample-level analog of $\tau_{PAME}$ that we call the \emph{sample average marginalized effect} (SAME). The SAME plays the same role here as the sample average treatment effect (SATE) plays relative to the population average treatment effect (PATE) in analyses under SUTVA: $\tau_{PAME}$ is a population-level quantity defined over all possible groups in $\mathcal{G}_\mathcal{U}(M)$ and all units in $\mathcal{U}$, while the SAME is its analog computed over the realized sample of groups, treating each sampled group's full-data potential outcomes $\bar Y_\omega(1)$ and $\bar Y_\omega(0)$ as known: 
$$
\begin{aligned}
\tau_{SAME} = & \frac{1}{G_N} \sum_{g=1}^{G_N(M)|} \tau_{g} = \frac{1}{|\mathcal{G}_{\mathcal{U}}(M)|\pi_N} \sum_{\omega=1}^{|\mathcal{G}_{\mathcal{U}}(M)|}W_{\omega} \tau_{\omega} \\
= & \frac{1}{N}\sum_{i=1}^N \left(Y_i\left(1, \mathcal{A}_{A_i}\right) - Y_i\left(0, \mathcal{A}_{A_i}\right)\right)  = \bar Y_N(1) - \bar Y_N(0),
\end{aligned}
$$
where $\bar Y_N(z) = \tfrac{1}{G_N}\sum_{g=1}^{G_N} \bar Y_g\left(z\right) = \tfrac{1}{N}\sum_{i=1}^N Y_i\left(z, \mathcal{A}_{A_i}\right)$. The SAME is not observable in practice---we cannot observe both potential outcomes for any group---but it serves as an analytical bridge: it isolates the variability arising from group sampling from the variability arising from treatment assignment within sampled groups, just as the SATE--PATE decomposition isolates sampling variability in standard finite-population causal inference. 

\begin{lemma}[$\hat \tau_N$ is an unbiased estimator for $\tau_{PAME}$]\label{lemma:group-tau}
Under Assumption 3,
$$
\begin{aligned}
\E\left[\hat \tau_N\right] = & \tau_{PAME}.
\end{aligned}
$$
\end{lemma}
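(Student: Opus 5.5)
We would prove unbiasedness by iterated expectations over the two stages of Design Case 1G: first the restricted sampling of potential groups (the indicators $W_\omega$), then the complete randomization of treatment among the sampled groups. The starting point is the group-level representation of $\hat\tau_N$ in Lemma~\ref{lemma:group-tau},
\[
\hat \tau_N = \frac{1}{|\mathcal{G}_{\mathcal{U}}(M)|}\sum_{\omega}\frac{W_{\omega}Z_{\omega}\bar Y_{\omega}}{\pi_Np} - \frac{1}{|\mathcal{G}_{\mathcal{U}}(M)|}\sum_{\omega}\frac{W_{\omega}(1 - Z_{\omega})\bar Y_{\omega}}{\pi_N(1 - p)}.
\]
Within a realized group, every member shares treatment status and composition. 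Hence $W_\omega Z_\omega \bar Y_\omega = W_\omega Z_\omega \bar Y_\omega(1)$ and $W_\omega(1-Z_\omega)\bar Y_\omega = W_\omega(1-Z_\omega)\bar Y_\omega(0)$, where the $\bar Y_\omega(z)$ are fixed (non-random) quantities attached to each potential group.

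\emph{Step 1 (conditioning on the sampled groups).} Conditional on $\{W_\omega\}$, the $G_N$ realized groups are completely randomized, with $G_{N1}$ treated. So $\E[Z_\omega \mid W_\omega=1, \{W_{\omega'}\}] = p$, and by Lemma~\ref{lemma:group-po}(2) this does not interact with the potential outcomes. This gives
\[
\E[\hat\tau_N \mid \{W_\omega\}] = \frac{1}{|\mathcal{G}_{\mathcal{U}}(M)|\pi_N}\sum_\omega W_\omega\bigl[\bar Y_\omega(1)-\bar Y_\omega(0)\bigr] = \tau_{SAME}.
\]
This is just the classical Neyman unbiasedness of the difference in means for the sample average effect, applied to group-level outcomes.

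\emph{Step 2 (averaging over group sampling).} By linearity, Lemma~\ref{lemma:group-sampling} ($\E[W_\omega]=\pi_N$ for every $\omega$) and Lemma~\ref{lemma:group-po}(1), we get $\E[\tau_{SAME}] = |\mathcal{G}_{\mathcal{U}}(M)|^{-1}\sum_\omega \tau_\omega$. Lemma~\ref{lemma:group-ame} then identifies this with $\tau_{PAME}$.

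The only step needing care is the claim that the marginal inclusion probability is exactly $\pi_N = G_N/\binom{n}{M}$ under the \emph{restricted} sequential sampling. The overlap restriction could in principle make inclusion probabilities unequal across potential groups, which would break exact unbiasedness. We would justify it by symmetry. By the equivalence lemma, the law of the partition equals that of Design Case 1, which is invariant under permutations of $\mathcal{U}$. Such permutations act transitively on size-$M$ subsets, so every potential group has the same inclusion probability. Since $\sum_\omega W_\omega = G_N$ deterministically, that common probability must be $G_N/|\mathcal{G}_{\mathcal{U}}(M)|$.

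Note also that the expectation is exact rather than asymptotic; the covariance structure of the $W_\omega$ is irrelevant here and matters only for the variance.
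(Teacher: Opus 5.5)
Your proof is correct and follows the paper's argument. You condition on the sampled groups to get $\E[\hat\tau_N \mid \{W_\omega\}] = \tau_{SAME}$ by group-level Neyman unbiasedness, then average over group sampling using $\E[W_\omega]=\pi_N$ and the group-level expression for $\tau_{PAME}$; your extra permutation-symmetry argument that every potential group has inclusion probability $G_N/\binom{n}{M}$ under the restricted design is sound, whereas the paper simply takes this from its group-sampling lemma.
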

\begin{proof}
Conditional on group sampling, randomness in $\hat \tau_N$ comes solely from treatment assignment. Thus, it resembles the classical DIM estimator at the group level. Lemma~\ref{lemma:group-po} suggests that its conditional expectation equals
$$
\begin{aligned}
\E\left[\hat \tau_N \mid \{W_{\omega}\}_{\omega = 1}^{|\mathcal{G}_{\mathcal{U}}(M)|}\right] = & \frac{1}{|\mathcal{G}_{\mathcal{U}}(M)|}\sum_{\omega = 1}^{| \mathcal{G}_{\mathcal{U}}(M)|}\frac{W_{\omega}}{\pi_N}\E\left[\frac{Z_{\omega}\bar Y_{\omega}}{p}\mid \{W_{\omega}\}_{\omega = 1}^{|\mathcal{G}_{\mathcal{U}}(M)|}\right] \\
& - \frac{1}{|\mathcal{G}_{\mathcal{U}}(M)|}\sum_{\omega = 1}^{| \mathcal{G}_{\mathcal{U}}(M)|}\frac{W_{\omega}}{\pi_N}\E\left[\frac{(1 - Z_{\omega})\bar Y_{\omega}}{1 - p}\mid \{W_{\omega}\}_{\omega = 1}^{|\mathcal{G}_{\mathcal{U}}(M)|}\right] \\
= & \frac{1}{|\mathcal{G}_{\mathcal{U}}(M)|}\sum_{\omega = 1}^{| \mathcal{G}_{\mathcal{U}}(M)|}\frac{W_{\omega}\tau_{\omega}}{\pi_N} = \frac{1}{G_N}\sum_{g = 1}^{G_N}\tau_g = \tau_{SAME}
\end{aligned}
$$
Taking expectations over group sampling yields
$$
\E\left[\E\left[\hat \tau_N \mid \{W_{\omega}\}_{\omega = 1}^{|\mathcal{G}_{\mathcal{U}}(M)|}\right]\right] = \frac{1}{|\mathcal{G}_{\mathcal{U}}(M)|}\sum_{\omega = 1}^{| \mathcal{G}_{\mathcal{U}}(M)|}\frac{\E[W_{\omega}]\tau_{\omega}}{\pi_N} = \frac{1}{|\mathcal{G}_{\mathcal{U}}(M)|}\sum_{\omega = 1}^{| \mathcal{G}_{\mathcal{U}}(M)|}\tau_{\omega} = \tau_{PAME}.
$$
\end{proof}

We use $a_N \simeq b_N$ to denote that $a_N$ and $b_N$ converge to the same limit and $a_N \asymp b_N$ to denote that they converge to limits of the same order as $n, N \to \infty$. Next, we establish the limits of several terms that will appear later.
\begin{lemma}[Several useful limits]\label{lemma:limits}
For any $\tau_{\omega}$ such that $|\tau_{\omega}| < C$,
$$
\begin{aligned}
& \frac{\left(1 - G_N\right)}{|\mathcal{G}_{\mathcal{U}}(M)|^2} \sum_{\omega=1}^{|\mathcal{G}_{\mathcal{U}}(M)|}\sum_{\mathcal{A}_{\omega} \cap \mathcal{A}_{\omega'} \neq \emptyset}\tau_{\omega} \tau_{\omega'} \to 0, \\
& \frac{\left(1 - G_N\right)}{|\mathcal{G}_{\mathcal{U}}(M)|}\left(\frac{1}{|\mathcal{G}_{\mathcal{U}}(M)|} -  \frac{1}{|\mathcal{G}_{\mathcal{U}}(M)\setminus \mathcal{A}_{\omega}|}\right) \sum_{\omega=1}^{|\mathcal{G}_{\mathcal{U}}(M)|}\sum_{\mathcal{A}_{\omega} \cap \mathcal{A}_{\omega'} = \emptyset}\tau_{\omega} \tau_{\omega'} \to 0
\end{aligned}
$$
as $n, N$ grow to infinity under our asymptotic regime.
\end{lemma}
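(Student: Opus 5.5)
Both limits follow from elementary counting combined with the boundedness $|\tau_\omega|<C$ and the sparse-sampling condition $N^2/n\to 0$. Write $L=|\mathcal{G}_{\mathcal{U}}(M)|=\binom{n}{M}$ and $L'=|\mathcal{G}_{\mathcal{U}}(M)\setminus\mathcal{A}_\omega|=\binom{n-M}{M}$. The quantity $L'$ does not depend on $\omega$, so the bracketed factor in the second display is a constant that can be pulled outside the double sum. In each display I will bound the double sum crudely by $C^2$ times the number of pairs it contains, and then show that this count, multiplied by the prefactor, vanishes.

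For the first limit, the number of ordered pairs $(\omega,\omega')$ with overlapping members is $L(L-L')$. For fixed $\omega$, the share of groups overlapping it is
\[
1-\frac{\binom{n-M}{M}}{\binom{n}{M}}=1-\prod_{j=0}^{M-1}\frac{n-M-j}{n-j}\le \frac{M^2}{n-M},
\]
which is $O(M^2/n)$. Hence
\[
\left|\frac{1-G_N}{L^2}\sum_{\omega}\sum_{\mathcal{A}_\omega\cap\mathcal{A}_{\omega'}\neq\emptyset}\tau_\omega\tau_{\omega'}\right|\le C^2 G_N\,\frac{L-L'}{L}=O\!\left(\frac{G_N M^2}{n}\right)=O\!\left(\frac{NM}{n}\right).
\]
Since $M$ is fixed and $N/n\le N^2/n\to 0$, this bound goes to zero.

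For the second limit, the double sum runs over at most $LL'$ disjoint pairs, so it is bounded in absolute value by $C^2LL'$. The prefactor is
\[
\frac{G_N-1}{L}\left(\frac{1}{L'}-\frac{1}{L}\right)=\frac{G_N-1}{L}\cdot\frac{L-L'}{LL'}.
\]
Multiplying the two gives the bound
\[
C^2 (G_N-1)\,\frac{L-L'}{L}=O\!\left(\frac{G_N M^2}{n}\right)\to 0.
\]
This is the same order as in the first limit. The sign of $1-G_N$ plays no role, because both arguments bound absolute values.

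I expect the only real step to be the product bound on $1-L'/L$. It follows from the inequality $1-\prod_j(1-x_j)\le\sum_j x_j$ applied with $x_j=M/(n-j)\le M/(n-M)$. Notably, only $G_N/n\to0$ is needed here. The stronger condition $N^2/n\to0$ would become necessary only for statements that sum this overlap effect over all $G_N$ draws simultaneously, such as the total-variation coupling bound, but it certainly implies what is used here.
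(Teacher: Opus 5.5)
Your proposal is correct and follows essentially the same route as the paper: bound each double sum by $C^2$ times the number of pairs, use $(L-L')/L = O(M^2/n)$, and conclude because $G_N M^2/n \to 0$. The paper reaches the overlap share through the asymptotic step $\binom{n-M}{M}/\binom{n}{M} \simeq 1 - M^2/n$, whereas you get an explicit inequality from $1-\prod_j(1-x_j)\le\sum_j x_j$; that makes the same step slightly more rigorous without changing the argument.
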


\begin{proof}
First note that 
$$
\begin{aligned}
\frac{|\mathcal{G}_{\mathcal{U}}(M)\setminus \mathcal{A}_{\omega}|}{|\mathcal{G}_{\mathcal{U}}(M)|} = \frac{{n-M\choose M}}{{n\choose M}} = \prod_{k=0}^{M-1} \left(1 - \frac{M}{n - k}\right) \simeq \left(1 - \frac{M}{n}\right)^M \simeq 1 - \frac{M^2}{n} \to 1.
\end{aligned}
$$
Thus, 
$$
\frac{|\mathcal{G}_{\mathcal{U}}(M)| - |\mathcal{G}_{\mathcal{U}}(M)\setminus \mathcal{A}_{\omega}|}{|\mathcal{G}_{\mathcal{U}}(M)|} \simeq \frac{M^2}{n} \to 0,
$$
and
$$
\begin{aligned}
& \frac{\left(1 - G_N\right)}{|\mathcal{G}_{\mathcal{U}}(M)|^2} \sum_{\omega=1}^{|\mathcal{G}_{\mathcal{U}}(M)|}\sum_{\mathcal{A}_{\omega} \cap \mathcal{A}_{\omega'} \neq \emptyset}\tau_{\omega} \tau_{\omega'} \asymp \frac{\left(1 - G_N\right)}{|\mathcal{G}_{\mathcal{U}}(M)|^2} \sum_{\omega=1}^{|\mathcal{G}_{\mathcal{U}}(M)|}\sum_{\mathcal{A}_{\omega} \cap \mathcal{A}_{\omega'} \neq \emptyset} C^2 \\
= & \frac{|\mathcal{G}_{\mathcal{U}}(M)| - |\mathcal{G}_{\mathcal{U}}(M)\setminus \mathcal{A}_{\omega}|}{|\mathcal{G}_{\mathcal{U}}(M)|}\left(1 - G_N\right)C^2 \simeq \frac{M^2\left(1 - G_N\right)C^2}{n} \to 0, \\
& \frac{\left(1 - G_N\right)}{|\mathcal{G}_{\mathcal{U}}(M)|}\left(\frac{1}{|\mathcal{G}_{\mathcal{U}}(M)|} -  \frac{1}{|\mathcal{G}_{\mathcal{U}}(M)\setminus \mathcal{A}_{\omega}|}\right) \sum_{\omega=1}^{|\mathcal{G}_{\mathcal{U}}(M)|}\sum_{\mathcal{A}_{\omega} \cap \mathcal{A}_{\omega'} = \emptyset}\tau_{\omega} \tau_{\omega'} \\
\simeq & - \frac{\left(1 - G_N\right)}{|\mathcal{G}_{\mathcal{U}}(M)||\mathcal{G}_{\mathcal{U}}(M)\setminus \mathcal{A}_{\omega}|}\frac{M^2}{n}\sum_{\omega=1}^{|\mathcal{G}_{\mathcal{U}}(M)|}\sum_{\mathcal{A}_{\omega} \cap \mathcal{A}_{\omega'} = \emptyset}\tau_{\omega} \tau_{\omega'} \\
\asymp & - \frac{\left(1 - G_N\right)}{|\mathcal{G}_{\mathcal{U}}(M)||\mathcal{G}_{\mathcal{U}}(M)\setminus \mathcal{A}_{\omega}|}\frac{M^2}{n}\sum_{\omega=1}^{|\mathcal{G}_{\mathcal{U}}(M)|}\sum_{\mathcal{A}_{\omega} \cap \mathcal{A}_{\omega'} = \emptyset}C^2 \\
\simeq & - \frac{M^2\left(1 - G_N\right)C^2}{n} \to 0.
\end{aligned}
$$
\end{proof}

Now, we are ready to derive the asymptotic variance of $\hat \tau_N$. We start from the variance from sampling groups from $\mathcal{G}_{\mathcal{U}}(M)$. We derive the result for a general random variable $\tilde \tau_{\omega}$ defined for group $\omega$.
\begin{lemma}[Variance from group sampling]\label{lemma:var-g}
For any $\tilde \tau_{\omega}$ defined for group $\omega$ such that $|\tilde \tau_{\omega}| < C$, $\check \tau = \frac{1}{|\mathcal{G}_{\mathcal{U}}(M)|\pi_N} \sum_{\omega=1}^{|\mathcal{G}_{\mathcal{U}}(M)|}W_{\omega} \tilde \tau_{\omega}$, and $\tau = \frac{1}{|\mathcal{G}_{\mathcal{U}}(M)|} \sum_{\omega=1}^{|\mathcal{G}_{\mathcal{U}}(M)|}\tilde \tau_{\omega}$,
$$
\begin{aligned}
G_N\Var\left[\check \tau\right] \simeq \frac{1}{|\mathcal{G}_{\mathcal{U}}(M)|} \sum_{\omega=1}^{|\mathcal{G}_{\mathcal{U}}(M)|}\left(\tilde\tau_{\omega} - \tau\right)^{2} = \sigma_{\tau}^2
\end{aligned}
$$
and $\check \tau \xrightarrow{P} \tau$, as $n, N$ grow to infinity under our asymptotic regime.
\end{lemma}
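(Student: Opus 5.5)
We compute $\Var[\check\tau]$ directly from the first and second moments of the inclusion indicators in Lemma~\ref{lemma:group-sampling}, then compare it with the with-replacement benchmark. Write $\mathcal{G}=\mathcal{G}_{\mathcal{U}}(M)$, $|\mathcal{G}|=\binom{n}{M}$, and $\bar{\mathcal{G}}_\omega=\mathcal{G}\setminus\mathcal{A}_\omega$, with $|\bar{\mathcal{G}}_\omega|=\binom{n-M}{M}$. Then
\[
\Var[\check\tau]=\frac{1}{|\mathcal{G}|^2\pi_N^2}\Bigl[\sum_\omega \tilde\tau_\omega^2\,\pi_N(1-\pi_N)+\sum_{\omega}\sum_{\omega'\neq\omega}\tilde\tau_\omega\tilde\tau_{\omega'}\Cov[W_\omega,W_{\omega'}]\Bigr].
\]
We split the double sum into overlapping pairs, with covariance $-\pi_N^2$, and disjoint pairs, with covariance $\pi_N\bigl(\tfrac{G_N-1}{|\bar{\mathcal{G}}_\omega|}-\pi_N\bigr)$. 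Multiplying by $G_N=|\mathcal{G}|\pi_N$ gives three pieces.

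\textbf{Diagonal piece.} It equals $(1-\pi_N)|\mathcal{G}|^{-1}\sum_\omega\tilde\tau_\omega^2$. Since $\pi_N=G_N/\binom{n}{M}\to 0$, this is $\simeq |\mathcal{G}|^{-1}\sum_\omega\tilde\tau_\omega^2$.

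\textbf{Disjoint-pair piece.} Write $\tfrac{G_N-1}{|\bar{\mathcal{G}}_\omega|}-\pi_N=\tfrac{G_N-1}{|\mathcal{G}|}\cdot\tfrac{|\mathcal{G}|}{|\bar{\mathcal{G}}_\omega|}-\tfrac{G_N}{|\mathcal{G}|}$. This equals $-\tfrac{1}{|\mathcal{G}|}$ plus $(G_N-1)\bigl(\tfrac{1}{|\bar{\mathcal{G}}_\omega|}-\tfrac{1}{|\mathcal{G}|}\bigr)$.
\begin{itemize}
\item The second term, after the $G_N$ scaling, is exactly the quantity in the second display of Lemma~\ref{lemma:limits}, up to sign. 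It is $O(M^2G_N/n)=O(N^2/n)\to 0$, because the summed products are bounded by $C^2|\mathcal{G}|^2$ and $\tfrac{|\mathcal{G}|}{|\bar{\mathcal{G}}_\omega|}-1\simeq M^2/n$.
\item The first term contributes $-|\mathcal{G}|^{-2}\sum_\omega\sum_{\omega'\in\bar{\mathcal{G}}_\omega}\tilde\tau_\omega\tilde\tau_{\omega'}$.
\end{itemize}

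\textbf{Overlapping-pair piece.} After scaling it is $-|\mathcal{G}|^{-2}\sum_\omega\sum_{\omega'\cap\omega\ne\emptyset,\omega'\neq\omega}\tilde\tau_\omega\tilde\tau_{\omega'}$.

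Adding the $-|\mathcal{G}|^{-1}$ part of the disjoint piece to the overlapping piece gives $-|\mathcal{G}|^{-2}\sum_{\omega\ne\omega'}\tilde\tau_\omega\tilde\tau_{\omega'}=-\tau^2+|\mathcal{G}|^{-2}\sum_\omega\tilde\tau_\omega^2$. The last term is $O(1/|\mathcal{G}|)$. Combining with the diagonal piece yields $G_N\Var[\check\tau]\simeq |\mathcal{G}|^{-1}\sum_\omega\tilde\tau_\omega^2-\tau^2=\sigma_\tau^2$.

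An equivalent bookkeeping route is to bound the overlapping piece directly, as in the first display of Lemma~\ref{lemma:limits}, and handle the rest similarly. Where Lemma~\ref{lemma:limits} carries a factor $(1-G_N)$, I would check that its vanishing uses $M^2G_N/n\le N^2/n\to0$ (indeed $M^2G_N=MN$). This is the only place the sparse-sampling rate enters.

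\textbf{Consistency.} Unbiasedness $\E[\check\tau]=\tau$ follows from $\E[W_\omega]=\pi_N$. Boundedness gives $\sigma_\tau^2\le 4C^2$, so $\Var[\check\tau]=O(1/G_N)\to0$, and Chebyshev yields $\check\tau\xrightarrow{P}\tau$.

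\textbf{Main obstacle.} The delicate step is the disjoint-pair covariance. It is individually $O(1)$ after scaling, and one must verify that it cancels against the overlapping pairs and the finite-population correction to reproduce $-\tau^2$, rather than merely being bounded. I would verify this cancellation carefully. I expect the sign conventions in Lemma~\ref{lemma:limits} to need matching to the decomposition above.
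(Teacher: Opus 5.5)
Your route is the paper's. You expand $\Var[\check\tau]$ with the inclusion moments of Lemma~\ref{lemma:group-sampling}, split ordered pairs into overlapping and disjoint ones, kill the remainders using $|\bar{\mathcal G}_\omega|/|\mathcal G|\simeq 1-M^2/n$ (the two displays of Lemma~\ref{lemma:limits}), and finish with Chebyshev. Splitting the disjoint-pair coefficient into $-|\mathcal G|^{-1}$ plus a correction is only different bookkeeping from the paper's add-and-subtract of $G_N\tau^2$.

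One step is misscaled. Since $G_N\Var[\check\tau]=G_N^{-1}[\cdots]$, the overlapping-pair piece after scaling is
\[
-\frac{\pi_N^2}{G_N}\sum_{\omega}\sum_{\omega'\neq\omega,\ \mathcal A_{\omega'}\cap\mathcal A_\omega\neq\emptyset}\tilde\tau_\omega\tilde\tau_{\omega'}
=-\frac{G_N}{|\mathcal G|^2}\sum_{\omega}\sum_{\omega'\neq\omega,\ \mathcal A_{\omega'}\cap\mathcal A_\omega\neq\emptyset}\tilde\tau_\omega\tilde\tau_{\omega'},
\]
not $-|\mathcal G|^{-2}\sum\sum$. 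Adding it to the $-|\mathcal G|^{-1}$ part of the disjoint piece therefore does not give exactly $-\tau^2+|\mathcal G|^{-2}\sum_\omega\tilde\tau_\omega^2$. It gives that plus
\[
\frac{1-G_N}{|\mathcal G|^2}\sum_{\omega}\sum_{\omega'\neq\omega,\ \mathcal A_{\omega'}\cap\mathcal A_\omega\neq\emptyset}\tilde\tau_\omega\tilde\tau_{\omega'},
\]
which is precisely the first display of Lemma~\ref{lemma:limits}. The conclusion survives. Each $\omega$ has $|\mathcal G|-|\bar{\mathcal G}_\omega|-1\simeq|\mathcal G|M^2/n$ overlapping partners, so this term is $O(G_NM^2/n)=O(MN/n)\to0$.

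So the ``alternative bookkeeping'' you mention is not optional. The sparsity condition enters twice, here and in the disjoint correction, not once. For this lemma only $MN/n\to0$ is needed; the $N^2/n$ rate matters only for the later coupling.

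The cancellation you flag as the main obstacle is not delicate. Disjoint pairs are all but an $O(M^2/n)$ fraction of ordered pairs, so the term $-|\mathcal G|^{-2}\sum_\omega\sum_{\mathcal A_{\omega'}\cap\mathcal A_\omega=\emptyset}\tilde\tau_\omega\tilde\tau_{\omega'}$ alone equals $-\tau^2+O(M^2/n)+O(|\mathcal G|^{-1})$. A minor point: your second disjoint term equals the second display of Lemma~\ref{lemma:limits} exactly, not up to sign.
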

\begin{proof}
First note that
$$
\begin{aligned}
& \tau^2 = \frac{1}{|\mathcal{G}_{\mathcal{U}}(M)|^2} \sum_{\omega=1}^{|\mathcal{G}_{\mathcal{U}}(M)|}\tilde\tau_{\omega}^{2} + \frac{1}{|\mathcal{G}_{\mathcal{U}}(M)|^2} \sum_{\omega=1}^{|\mathcal{G}_{\mathcal{U}}(M)|}\sum_{\mathcal{A}_{\omega} \cap \mathcal{A}_{\omega'} \neq \emptyset}\tilde\tau_{\omega} \tilde\tau_{\omega'} + \frac{1}{|\mathcal{G}_{\mathcal{U}}(M)|^2}\sum_{\omega=1}^{|\mathcal{G}_{\mathcal{U}}(M)|}\sum_{\mathcal{A}_{\omega} \cap \mathcal{A}_{\omega'} = \emptyset}\tilde\tau_{\omega} \tilde\tau_{\omega'}.
\end{aligned}
$$

Using Lemma~\ref{lemma:group-po} and Lemma~\ref{lemma:limits}, we can see that
$$
\begin{aligned}
& G_N\Var\left[\check \tau\right] =  \frac{1}{G_N} \sum_{\omega=1}^{|\mathcal{G}_{\mathcal{U}}(M)|}\tilde\tau_{\omega}^2\Var\left[ W_{\omega}\right] + \frac{1}{G_N} \sum_{\omega=1}^{|\mathcal{G}_{\mathcal{U}}(M)|}\sum_{\omega' \neq \omega}\tilde\tau_{\omega} \tilde\tau_{\omega'}\Cov\left[ W_{\omega}, W_{\omega'}\right] \\
= & \frac{\pi_N(1-\pi_N)}{G_N} \sum_{\omega=1}^{|\mathcal{G}_{\mathcal{U}}(M)|}\tilde\tau_{\omega}^2 + \frac{\pi_N\left(\frac{G_N-1}{|\mathcal{G}_{\mathcal{U}}(M)\setminus \mathcal{A}_{\omega}|} - \pi_N\right)}{G_N} \sum_{\omega=1}^{|\mathcal{G}_{\mathcal{U}}(M)|}\sum_{\mathcal{A}_{\omega} \cap \mathcal{A}_{\omega'} = \emptyset}\tilde\tau_{\omega} \tilde\tau_{\omega'} \\
& - \frac{\pi_N^2}{G_N} \sum_{\omega=1}^{|\mathcal{G}_{\mathcal{U}}(M)|}\sum_{\mathcal{A}_{\omega} \cap \mathcal{A}_{\omega'} \neq \emptyset}\tilde\tau_{\omega} \tilde\tau_{\omega'} \\
= & \left(\frac{1}{|\mathcal{G}_{\mathcal{U}}(M)|} - \frac{G_N}{|\mathcal{G}_{\mathcal{U}}(M)|^2}\right) \sum_{\omega=1}^{|\mathcal{G}_{\mathcal{U}}(M)|}\tilde\tau_{\omega}^{2} - \frac{G_N}{|\mathcal{G}_{\mathcal{U}}(M)|^2} \sum_{\omega=1}^{|\mathcal{G}_{\mathcal{U}}(M)|}\sum_{\mathcal{A}_{\omega} \cap \mathcal{A}_{\omega'} \neq \emptyset}\tilde\tau_{\omega} \tilde\tau_{\omega'} \\
& + \left(\frac{G_N-1}{|\mathcal{G}_{\mathcal{U}}(M)||\mathcal{G}_{\mathcal{U}}(M)\setminus \mathcal{A}_{\omega}|} - \frac{G_N}{|\mathcal{G}_{\mathcal{U}}(M)|^2}\right) \sum_{\omega=1}^{|\mathcal{G}_{\mathcal{U}}(M)|}\sum_{\mathcal{A}_{\omega} \cap \mathcal{A}_{\omega'} = \emptyset}\tilde\tau_{\omega} \tilde\tau_{\omega'}\\
= & \frac{1}{|\mathcal{G}_{\mathcal{U}}(M)|} \sum_{\omega=1}^{|\mathcal{G}_{\mathcal{U}}(M)|}\tilde\tau_{\omega}^{2} + \frac{G_N-1}{|\mathcal{G}_{\mathcal{U}}(M)||\mathcal{G}_{\mathcal{U}}(M)\setminus \mathcal{A}_{\omega}|} \sum_{\omega=1}^{|\mathcal{G}_{\mathcal{U}}(M)|}\sum_{\mathcal{A}_{\omega} \cap \mathcal{A}_{\omega'} = \emptyset}\tilde\tau_{\omega} \tilde\tau_{\omega'} - G_N \tau^2 \\
= & \frac{1}{|\mathcal{G}_{\mathcal{U}}(M)|} \sum_{\omega=1}^{|\mathcal{G}_{\mathcal{U}}(M)|}\left(\tilde\tau_{\omega} - \tau\right)^{2} + \left(1 - G_N\right)\left[\tau^2 -  \frac{1}{|\mathcal{G}_{\mathcal{U}}(M)||\mathcal{G}_{\mathcal{U}}(M)\setminus \mathcal{A}_{\omega}|} \sum_{\omega=1}^{|\mathcal{G}_{\mathcal{U}}(M)|}\sum_{\mathcal{A}_{\omega} \cap \mathcal{A}_{\omega'} = \emptyset}\tilde\tau_{\omega} \tilde\tau_{\omega'}\right] \\
= & \frac{1}{|\mathcal{G}_{\mathcal{U}}(M)|} \sum_{\omega=1}^{|\mathcal{G}_{\mathcal{U}}(M)|}\left(\tilde\tau_{\omega} - \tau\right)^{2} + \frac{\left(1 - G_N\right)}{|\mathcal{G}_{\mathcal{U}}(M)|^2} \sum_{\omega=1}^{|\mathcal{G}_{\mathcal{U}}(M)|}\tilde\tau_{\omega}^{2} + \frac{\left(1 - G_N\right)}{|\mathcal{G}_{\mathcal{U}}(M)|^2} \sum_{\omega=1}^{|\mathcal{G}_{\mathcal{U}}(M)|}\sum_{\mathcal{A}_{\omega} \cap \mathcal{A}_{\omega'} \neq \emptyset}\tilde\tau_{\omega} \tilde\tau_{\omega'}\\
& + \frac{\left(1 - G_N\right)}{|\mathcal{G}_{\mathcal{U}}(M)|}\left(\frac{1}{|\mathcal{G}_{\mathcal{U}}(M)|} -  \frac{1}{|\mathcal{G}_{\mathcal{U}}(M)\setminus \mathcal{A}_{\omega}|}\right) \sum_{\omega=1}^{|\mathcal{G}_{\mathcal{U}}(M)|}\sum_{\mathcal{A}_{\omega} \cap \mathcal{A}_{\omega'} = \emptyset}\tilde\tau_{\omega} \tilde\tau_{\omega'}  \\
\simeq & \frac{1}{|\mathcal{G}_{\mathcal{U}}(M)|} \sum_{\omega=1}^{|\mathcal{G}_{\mathcal{U}}(M)|}\left(\tilde\tau_{\omega} - \tau\right)^{2} + \frac{\left(1 - G_N\right)}{|\mathcal{G}_{\mathcal{U}}(M)|^2} \sum_{\omega=1}^{|\mathcal{G}_{\mathcal{U}}(M)|}\tilde\tau_{\omega}^{2}  + \frac{\left(1 - G_N\right)}{|\mathcal{G}_{\mathcal{U}}(M)|^2} \sum_{\omega=1}^{|\mathcal{G}_{\mathcal{U}}(M)|}\sum_{\mathcal{A}_{\omega} \cap \mathcal{A}_{\omega'} \neq \emptyset}\tilde\tau_{\omega}\tilde \tau_{\omega'} \\
& - \frac{\left(1 - G_N\right)}{|\mathcal{G}_{\mathcal{U}}(M)||\mathcal{G}_{\mathcal{U}}(M)\setminus \mathcal{A}_{\omega}|}\frac{M}{G_n} \sum_{\omega=1}^{|\mathcal{G}_{\mathcal{U}}(M)|}\sum_{\mathcal{A}_{\omega} \cap \mathcal{A}_{\omega'} = \emptyset}\tilde\tau_{\omega} \tilde\tau_{\omega'} \simeq \frac{1}{|\mathcal{G}_{\mathcal{U}}(M)|} \sum_{\omega=1}^{|\mathcal{G}_{\mathcal{U}}(M)|}\left(\tilde\tau_{\omega} - \tau\right)^{2}.
\end{aligned}
$$
Since $\frac{1}{|\mathcal{G}_{\mathcal{U}}(M)|} \sum_{\omega=1}^{|\mathcal{G}_{\mathcal{U}}(M)|}\left(\tilde\tau_{\omega} - \tau\right)^{2}$ is uniformly bounded, $\Var\left[\check \tau\right] \to 0$ at the rate of $O\left(G_N^{-1}\right)$. $\check \tau \xrightarrow{P} \tau$ follows from Chebyshev's inequality.
\end{proof}
When $\tilde\tau_{\omega} = \tau_{\omega}$, $\check \tau = \tau_{SAME}$ and $\tau = \tau_{PAME}$. Therefore, Lemma~\ref{lemma:var-g} implies that $\tau_{SAME} \xrightarrow{P} \tau_{PAME}$ under our asymptotic regime.
\begin{lemma}[Asymptotic variance of $\hat \tau_N$]\label{lemma:var-zero}
Define $\bar Y(z) = \frac{1}{|\mathcal{G}_{\mathcal{U}}(M)|} \sum_{\omega = 1}^{|\mathcal{G}_{\mathcal{U}}(M)|}\bar Y_{\omega}(z)$, then
$$
\begin{aligned}
G_N\Var\left[\hat \tau_N\right] \simeq \frac{\sum_{\omega=1}^{|\mathcal{G}_{\mathcal{U}}(M)|}\left(\bar Y_{\omega}(1) - \bar Y(1) \right)^2}{p|\mathcal{G}_{\mathcal{U}}(M)|} + \frac{\sum_{\omega=1}^{|\mathcal{G}_{\mathcal{U}}(M)|}\left(\bar Y_{\omega}(0) - \bar Y(0) \right)^2}{(1-p)|\mathcal{G}_{\mathcal{U}}(M)|}
\end{aligned}
$$
and $\hat \tau_N \xrightarrow{P} \tau_{PAME}$, as $n, N$ grow to infinity under our asymptotic regime.
\end{lemma}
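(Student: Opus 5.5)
We decompose the variance by the law of total variance, conditioning on the group-sampling indicators $\{W_\omega\}$:
\[
\Var[\hat\tau_N] = \E\bigl[\Var[\hat\tau_N \mid \{W_\omega\}]\bigr] + \Var\bigl[\E[\hat\tau_N \mid \{W_\omega\}]\bigr].
\]
By the proof of the unbiasedness lemma, $\E[\hat\tau_N\mid\{W_\omega\}] = \tau_{SAME} = \check\tau$ with $\tilde\tau_\omega=\tau_\omega$. Lemma~\ref{lemma:var-g} then gives the second term directly:
\[
G_N\Var[\tau_{SAME}] \simeq \sigma^2_\tau = |\mathcal{G}_{\mathcal{U}}(M)|^{-1}\sum_\omega(\tau_\omega-\tau_{PAME})^2 .
\]

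For the first term, conditional on the realized groups, the design is a completely randomized experiment on $G_N$ units with group-level potential outcomes $\bar Y_g(1), \bar Y_g(0)$; this is guaranteed by Lemma~\ref{lemma:group-po}. The classical Neyman formula therefore gives
\[
G_N\Var[\hat\tau_N\mid\{W_\omega\}] = \frac{S^2_{G}(1)}{p}+\frac{S^2_{G}(0)}{1-p}-S^2_{G}(\tau),
\]
where $S^2_G(\cdot)$ denote the sample variances (with divisor $G_N-1$) across realized groups of $\bar Y_g(1)$, $\bar Y_g(0)$, and $\tau_g$.

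Each sample variance is a smooth function of group-sample averages of the form $\check\tau$, applied to bounded group-level quantities: $\bar Y_\omega(z)$, $\bar Y_\omega(z)^2$, $\tau_\omega$, and $\tau_\omega^2$. Lemma~\ref{lemma:var-g} gives $\check\tau\xrightarrow{P}\tau$ for each of these. By the continuous mapping theorem, $S^2_G(z)$ converges in probability to the population potential-group variance $|\mathcal{G}_{\mathcal{U}}(M)|^{-1}\sum_\omega(\bar Y_\omega(z)-\bar Y(z))^2$, and $S^2_G(\tau)\xrightarrow{P}\sigma^2_\tau$. Boundedness (Assumption~\ref{assn:po}) makes these quantities uniformly integrable, so convergence in probability upgrades to convergence of expectations.

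Adding the two terms, the $-\sigma^2_\tau$ from the conditional Neyman variance cancels the $+\sigma^2_\tau$ from sampling. This yields the stated limit, which is the usual SATE/PATE cancellation, here at the potential-group level.

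Consistency then follows. Assumption~\ref{assn:po} keeps the limit bounded, so $\Var[\hat\tau_N]=O(G_N^{-1})\to0$. Since $\E[\hat\tau_N]=\tau_{PAME}$, Chebyshev's inequality gives $\hat\tau_N\xrightarrow{P}\tau_{PAME}$.

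The main obstacle is the step from $S^2_G\xrightarrow{P}$ limit to convergence of $\E[S^2_G]$. A more careful alternative avoids limits in probability altogether: compute $\E[S^2_G(z)]$ exactly using the first and second moments of $W_\omega$ from Lemma~\ref{lemma:group-sampling}. The cross-product terms over overlapping and non-overlapping pairs then reduce, via the same manipulations as in Lemma~\ref{lemma:var-g} and the negligibility bounds of Lemma~\ref{lemma:limits}, to the population variance up to $O(N^2/n)+O(G_N^{-1})$ corrections. I would carry out this exact-expectation route for the $(G_N-1)^{-1}$-normalized sample variances, using the identity
\[
S^2_G = \frac{G_N}{G_N-1}\Bigl(\overline{x^2}-\bar x^2\Bigr)
\]
and bounding $\E[\bar x^2]-(\E\bar x)^2=\Var[\bar x]=O(G_N^{-1})$ by Lemma~\ref{lemma:var-g}.
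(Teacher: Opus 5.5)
Your proposal is correct and follows the paper's proof essentially step for step. Both use the law of total variance conditional on $\{W_\omega\}$, the group-level Neyman formula for the conditional term, Lemma~\ref{lemma:var-g} for the sampling term $\Var[\tau_{SAME}]$, cancellation of $\sigma^2_\tau$, and Chebyshev for consistency. Your explicit bounded-convergence step (passing from $S^2_G \xrightarrow{P}$ its limit to convergence of $\E[S^2_G]$) makes rigorous what the paper asserts informally, and the exact-moment alternative you sketch is the computation the paper carries out in the proof of Lemma~\ref{lemma:var-est}.
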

\begin{proof}
Let's first consider $\hat \tau_N$'s variance conditional on group sampling. Due to the randomness of treatment assignment (Assumption 3), this is exactly the classical Neyman variance for the DIM estimator at the group level:
$$
\begin{aligned}
\Var\left[\hat \tau_N \mid \{W_{\omega}\}_{\omega = 1}^{|\mathcal{G}_{\mathcal{U}}(M)|}\right] = & \frac{\sum_{g=1}^{G_N}\left(\bar Y_g(1) - \bar Y_N(1) \right)^2}{G_{N1}(G_N - 1)} + \frac{\sum_{g=1}^{G_N}\left(\bar Y_g(0) - \bar Y_N(0) \right)^2}{G_{N0}(G_N - 1)} - \frac{\sum_{g=1}^{G_N}\left(\tau_g - \tau_{SAME}\right)^2}{G_{N}(G_N - 1)}.
\end{aligned}
$$
From Lemma~\ref{lemma:var-g}, we know that $\bar Y_N(z) \xrightarrow{P} \bar Y(z)$ and $\tau_{SAME} \xrightarrow{P} \tau_{PAME}$. Then, we can show that as $n, N \to \infty$, 
$$
\begin{aligned}
\E\left[G_N\Var\left[\hat \tau_N \mid \{W_{\omega}\}_{\omega = 1}^{|\mathcal{G}_{\mathcal{U}}(M)|}\right]\right] \simeq & \frac{\sum_{\omega = 1}^{\mathcal{G}_{\mathcal{U}}(M)}\left(\bar Y_{\omega}(1) - \bar Y(1) \right)^2}{p|\mathcal{G}_{\mathcal{U}}(M)|} + \frac{\sum_{\omega = 1}^{\mathcal{G}_{\mathcal{U}}(M)}\left(\bar Y_{\omega}(0) - \bar Y(0) \right)^2}{(1-p)|\mathcal{G}_{\mathcal{U}}(M)|} \\
& - \frac{\sum_{\omega = 1}^{\mathcal{G}_{\mathcal{U}}(M)}\left(\tau_{\omega} - \tau_{PAME} \right)^2}{|\mathcal{G}_{\mathcal{U}}(M)|}.
\end{aligned}
$$
From Lemma~\ref{lemma:var-g}, we know that the variance caused by group sampling equals
$$
\begin{aligned}
G_NVar\left[\E\left[\hat \tau_N \mid \{W_{\omega}\}_{\omega \in \mathcal{G}_{\mathcal{U}}(M)} \right]\right] & = G_N*Var\left[\frac{1}{|\mathcal{G}_{\mathcal{U}}(M)|}\sum_{\omega = 1}^{| \mathcal{G}_{\mathcal{U}}(M)|}\frac{W_{\omega}\tau_{\omega}}{\pi_N}\right] \\
& \simeq \frac{1}{|\mathcal{G}_{\mathcal{U}}(M)|} \sum_{\omega=1}^{|\mathcal{G}_{\mathcal{U}}(M)|}\left(\tau_{\omega} - \tau_{PAME}\right)^{2}.
\end{aligned}
$$
Combining the two terms, we have
$$
\begin{aligned}
& G_N\Var\left[\hat \tau_N\right] = G_N*Var\left[\E\left[\hat \tau_N \mid \{W_{\omega}\}_{\omega \in \mathcal{G}_{\mathcal{U}}(M)} \right]\right] + \E\left[G_N*\Var\left[\hat \tau_N \mid \{W_{\omega}\}_{\omega = 1}^{|\mathcal{G}_{\mathcal{U}}(M)|}\right]\right]\\
\simeq & \frac{1}{|\mathcal{G}_{\mathcal{U}}(M)|} \sum_{\omega=1}^{|\mathcal{G}_{\mathcal{U}}(M)|}\left(\tau_{\omega} - \tau_{PAME}\right)^{2} + \frac{\sum_{\omega = 1}^{\mathcal{G}_{\mathcal{U}}(M)}\left(\bar Y_{\omega}(1) - \bar Y(1) \right)^2}{p|\mathcal{G}_{\mathcal{U}}(M)|} + \frac{\sum_{\omega = 1}^{\mathcal{G}_{\mathcal{U}}(M)}\left(\bar Y_{\omega}(0) - \bar Y(0) \right)^2}{(1-p)|\mathcal{G}_{\mathcal{U}}(M)|} \\
& - \frac{\sum_{\omega = 1}^{\mathcal{G}_{\mathcal{U}}(M)}\left(\tau_{\omega} - \tau_{PAME} \right)^2}{|\mathcal{G}_{\mathcal{U}}(M)|}.\\
= & \frac{\sum_{\omega = 1}^{\mathcal{G}_{\mathcal{U}}(M)}\left(\bar Y_{\omega}(1) - \bar Y(1) \right)^2}{p|\mathcal{G}_{\mathcal{U}}(M)|} + \frac{\sum_{\omega = 1}^{\mathcal{G}_{\mathcal{U}}(M)}\left(\bar Y_{\omega}(0) - \bar Y(0) \right)^2}{(1-p)|\mathcal{G}_{\mathcal{U}}(M)|} = O(1).
\end{aligned}
$$
$\hat \tau_N \xrightarrow{P} \tau_{PAME}$ then follows.
\end{proof}

\begin{lemma}[Asymptotic variance of $\hat \tau_N$ with random group formation but without interference]\label{lemma:var-sutva}
In the absence of interference, define $\bar Y_I(z) = \frac{1}{n} \sum_{i = 1}^n Y_i(z)$, then
$$
\begin{aligned}
N\Var\left[\hat \tau_N\right] \simeq  \frac{\sum_{i = 1}^n\left[Y_i(1) - \bar Y_I(1)\right]^2}{np} + \frac{\sum_{i = 1}^n \left[Y_i(0) - \bar Y_I(0)\right]^2}{n(1 - p)}
\end{aligned}
$$
as $n, N$ grow to infinity under our asymptotic regime.
\end{lemma}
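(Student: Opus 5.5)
Under no interference, $Y_i(z,\mathcal{A}) = Y_i(z)$ for every admissible $\mathcal{A}$. We therefore start from Lemma~\ref{lemma:var-zero}, which already gives the asymptotic variance in terms of potential-group means,
\[
G_N\Var[\hat\tau_N] \simeq \frac{1}{p}S^2_{\mathcal{G}}(1) + \frac{1}{1-p}S^2_{\mathcal{G}}(0), \qquad S^2_{\mathcal{G}}(z) = \frac{1}{|\mathcal{G}_{\mathcal{U}}(M)|}\sum_{\omega}\bigl(\bar Y_\omega(z) - \bar Y(z)\bigr)^2 .
\]
In this case $\bar Y_\omega(z) = M^{-1}\sum_{i\in\mathcal{A}_\omega}Y_i(z)$. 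The whole task is then to evaluate $S^2_{\mathcal{G}}(z)$ exactly as a population quantity and relate it to the unit-level variance $S^2_I(z) = n^{-1}\sum_{i\in\mathcal{U}}(Y_i(z)-\bar Y_I(z))^2$. Multiplying by $M$ converts $G_N\Var$ into $N\Var$, since $N = MG_N$.

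The first step is to note that $\bar Y(z) = \bar Y_I(z)$. Each unit belongs to exactly $\binom{n-1}{M-1}$ potential groups, which is the counting argument already used in Lemma~\ref{lemma:group-ame}. The second step observes that averaging over $\omega \in \mathcal{G}_{\mathcal{U}}(M)$ uniformly is the same as taking $\bar Y_\omega(z)$ to be the sample mean of a simple random sample without replacement of size $M$ from the finite population $\{Y_i(z)\}_{i\in\mathcal{U}}$. Consequently $S^2_{\mathcal{G}}(z)$ is exactly the SRSWOR variance of a sample mean:
\[
S^2_{\mathcal{G}}(z) = \frac{1}{M}\cdot\frac{n-M}{n-1}\, S^2_I(z).
\]
Equivalently, we can expand $(\bar Y_\omega - \bar Y_I)^2$ into $M^{-2}$ times diagonal and cross terms. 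We then use that a pair of distinct members of a uniformly drawn group is a uniform pair from $\mathcal{U}$, with covariance $-S^2_I/(n-1)$. This is the intra-cluster correlation $\rho_{\mathcal{G}} = -1/(n-1)$ mentioned in the main text, and gives $S^2_{\mathcal{G}}(z) = M^{-1}S^2_I(z)\bigl(1-(M-1)/(n-1)\bigr)$.

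The third step is to plug this into Lemma~\ref{lemma:var-zero} and multiply by $M$:
\[
N\Var[\hat\tau_N] \simeq \frac{n-M}{n-1}\left[\frac{S^2_I(1)}{p} + \frac{S^2_I(0)}{1-p}\right].
\]
Since $M$ is fixed and $n\to\infty$, the factor $(n-M)/(n-1)\to 1$, and the claimed expression follows. The ratio form of $\simeq$ is legitimate here because Assumption~\ref{assn:po} keeps both sides bounded above and bounded away from zero. Boundedness also makes the $O(M/n)$ correction uniformly negligible.

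The only step I expect to need care is confirming that Lemma~\ref{lemma:var-zero}'s limit applies with the $-\sigma^2_\tau$ terms cancelling as stated. That lemma sums the group-sampling variance (from Lemma~\ref{lemma:var-g}) and the expected conditional Neyman variance. I would re-check that the between-sampling term is computed with $\tilde\tau_\omega = \tau_\omega$ and that Lemma~\ref{lemma:limits} controls the overlap corrections under $N^2/n\to 0$. Once that is accepted, the argument reduces to the exact finite-population identity above, which is pure combinatorics.
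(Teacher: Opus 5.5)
Your proposal is correct and follows essentially the same route as the paper. You start from Lemma~\ref{lemma:var-zero}, use the $\binom{n-1}{M-1}$ counting to get $\bar Y(z)=\bar Y_I(z)$, and expand $(\bar Y_\omega(z)-\bar Y_I(z))^2$ into diagonal and cross terms with weights $\binom{n-1}{M-1}$ and $\binom{n-2}{M-2}$. You then show the cross term is an $O(1/n)$ correction and multiply by $M$. Your exact identity $S^2_{\mathcal{G}}(z)=M^{-1}\frac{n-M}{n-1}S^2_I(z)$ is simply the closed form of the paper's expansion. Your remark that the variance lower bound in Assumption~\ref{assn:po} justifies the ratio form of $\simeq$ is a small clarification the paper leaves implicit.
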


\begin{proof}
In this case, $\bar Y_{\omega}(z) = \frac{1}{M} \sum_{i \in \mathcal{A}_{\omega}} Y_i(z)$, and each unit $i$ can belong to ${n-1 \choose M-1}$ different groups, thus
$$
\begin{aligned}
\bar Y(z) = & \frac{1}{|\mathcal{G}_{\mathcal{U}}(M)|} \sum_{\omega=1}^{|\mathcal{G}_{\mathcal{U}}(M)|}\bar Y_{\omega}(z) = \frac{1}{|\mathcal{G}_{\mathcal{U}}(M)|} \sum_{\omega=1}^{|\mathcal{G}_{\mathcal{U}}(M)|}\frac{1}{M} \sum_{i \in \mathcal{A}_{\omega}} Y_i(z) \\
= & \frac{{n-1 \choose M-1}}{|\mathcal{G}_{\mathcal{U}}(M)|M} \sum_{i = 1}^n Y_i(z) = \frac{1}{n}\sum_{i=1}^n Y_i(z) = \bar Y_I(z).
\end{aligned}
$$
We can see that
$$
\begin{aligned}
& \frac{1}{|\mathcal{G}_{\mathcal{U}}(M)|}\sum_{\omega=1}^{|\mathcal{G}_{\mathcal{U}}(M)|}\left(\bar Y_{\omega}(z) - \bar Y(z) \right)^2 = \frac{1}{M^2|\mathcal{G}_{\mathcal{U}}(M)|}\sum_{\omega=1}^{|\mathcal{G}_{\mathcal{U}}(M)|} \left[\sum_{i \in \mathcal{A}_{\omega}} \left(Y_{i}(z) - \bar Y_I(z) \right)\right]^2 \\
= & \frac{1}{M^2|\mathcal{G}_{\mathcal{U}}(M)|}\sum_{\omega=1}^{|\mathcal{G}_{\mathcal{U}}(M)|} \left[\sum_{i \in \mathcal{A}_{\omega}} \left(Y_{i}(z) - \bar Y_I(z) \right)^2 + \sum_{i \in \mathcal{A}_{\omega}}\sum_{j \neq i, j \in \mathcal{A}_{\omega}} \left(Y_{i}(z) - \bar Y_I(z) \right)\left(Y_{j}(z) - \bar Y_I(z) \right)\right] \\
= & \frac{{n-1 \choose M-1}}{M^2|\mathcal{G}_{\mathcal{U}}(M)|}\sum_{i = 1}^n  \left(Y_{i}(z) - \bar Y_I(z) \right)^2 + \frac{{n-2 \choose M-2} }{M^2|\mathcal{G}_{\mathcal{U}}(M)|}\sum_{i = 1}^n\sum_{j \neq i} \left(Y_{i}(z) - \bar Y_I(z) \right)\left(Y_{j}(z) - \bar Y_I(z) \right) \\
= & \frac{1}{Mn}\sum_{i = 1}^n \left(Y_{i}(z) - \bar Y_I(z) \right)^2 + \frac{M-1}{Mn(n-1)}\sum_{i = 1}^n\sum_{j \neq i} \left(Y_{i}(z) - \bar Y_I(z) \right)\left(Y_{j}(z) - \bar Y_I(z) \right)\\ \simeq  & \frac{1}{Mn}\sum_{i = 1}^n \left(Y_{i}(z) - \bar Y_I(z) \right)^2.
\end{aligned}
$$
The last line uses the fact that
$$
\begin{aligned}
   & \frac{1}{n^2}\sum_{i = 1}^n\sum_{j \neq i} \left(Y_{i}(z) - \bar Y_I(z) \right)\left(Y_{j}(z) - \bar Y_I(z) \right) \\
    = & \left(\frac{1}{n}\sum_{i = 1}^n \left(Y_{i}(z) - \bar Y_I(z) \right)\right)^2 - \frac{1}{n^2}\sum_{i = 1}^n \left(Y_{i}(z) - \bar Y_I(z) \right)^2 \\ 
    = & - \frac{1}{n^2}\sum_{i = 1}^n \left(Y_{i}(z) - \bar Y_I(z) \right)^2 \to 0,
\end{aligned}
$$
Consequently,
$$
\begin{aligned}
& N\Var\left[\hat \tau_N\right] = (MG_N)\Var\left[\hat \tau_N\right] \\
\simeq & \frac{M\sum_{\omega=1}^{|\mathcal{G}_{\mathcal{U}}(M)|}\left(\bar Y_{\omega}(1) - \bar Y(1) \right)^2}{p|\mathcal{G}_{\mathcal{U}}(M)|} + \frac{M\sum_{\omega=1}^{|\mathcal{G}_{\mathcal{U}}(M)|}\left(\bar Y_{\omega}(0) - \bar Y(0) \right)^2}{(1-p)|\mathcal{G}_{\mathcal{U}}(M)|} \\
\simeq & \frac{1}{np}\sum_{i = 1}^n \left(Y_{i}(1) - \bar Y_I(1) \right)^2 + \frac{1}{n(1 - p)}\sum_{i = 1}^n \left(Y_{i}(0) - \bar Y_I(0) \right)^2.
\end{aligned}
$$
\end{proof}

To estimate $N\Var\left[\hat \tau_N\right]$, we rely on the group-level HC2 robust variance estimator:
$$
\begin{aligned}
\widehat{\Var} \left[\hat \tau_N\right] = \frac{\frac{\sum_{g=1}^{G_{N}}Z_g\left(\bar Y_{g} - \widehat{\bar Y}(1)\right)^2}{G_{N1} - 1}}{G_{N1}} + \frac{\frac{\sum_{g=1}^{G_{N}}(1 - Z_g)\left(\bar Y_{g} - \widehat{\bar Y}(0)\right)^2}{G_{N0} - 1}}{G_{N0}} = \frac{S_1^2}{G_{N1}} + \frac{S_0^2}{G_{N0}},
\end{aligned}
$$
where $\widehat{\bar Y}(z)= \frac{\sum_{g=1}^{G_N}\mathbf{1}\{Z_g = z\}\bar Y_g}{\sum_{g=1}^{G_N}\mathbf{1}\{Z_g = z\}}$. We establish its consistency in the following lemma. Meanwhile, in the absence of interference, we can use the individual-level HC2 robust variance estimator to estimate $N*\Var\left[\hat \tau_N\right]$. The consistency of this estimator has been established in the literature \citep{su2021model}. We denote these two variance estimators as $\hat{V}^{CR}_N$ and $\hat{V}^{HR}_N$, respectively.

\begin{lemma}[Consistency of the group level HC2 variance estimator]\label{lemma:var-est}
As $n, N$ grow to infinity under our asymptotic regime,
$$
\begin{aligned}
G_N  \widehat{\Var}\left[\hat \tau_N\right] \xrightarrow{P} G_N \Var \left[\hat \tau_N\right].
\end{aligned}
$$
\end{lemma}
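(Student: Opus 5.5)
We need $G_N\widehat{\Var}[\hat\tau_N] = G_N S_1^2/G_{N1} + G_N S_0^2/G_{N0} = S_1^2/p + S_0^2/(1-p)$. By Lemma~\ref{lemma:var-zero}, the target $G_N\Var[\hat\tau_N]$ converges to $\sigma_1^2/p+\sigma_0^2/(1-p)$, where
\[
\sigma_z^2=|\mathcal{G}_{\mathcal{U}}(M)|^{-1}\sum_\omega(\bar Y_\omega(z)-\bar Y(z))^2 .
\]
It therefore suffices to show $S_z^2\xrightarrow{P}\sigma_z^2$ for each $z\in\{0,1\}$. Since $\sigma_z^2$ is bounded away from zero (Assumption~\ref{assn:po}), this gives ratio consistency as well.

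First, I will decompose $S_1^2$. Within the treated arm,
\[
S_1^2=\frac{G_{N1}}{G_{N1}-1}\Big[\frac{1}{G_{N1}}\sum_g Z_g\bar Y_g(1)^2-\widehat{\bar Y}(1)^2\Big].
\]
So it suffices to show that the arm-specific first and second moments $\frac{1}{G_{N1}}\sum_g Z_g\bar Y_g(1)^k$, for $k=1,2$, converge in probability to their potential-group analogues $|\mathcal{G}_{\mathcal{U}}(M)|^{-1}\sum_\omega\bar Y_\omega(1)^k$. I will write each moment in the group representation of Lemma~\ref{lemma:group-tau} as
\[
\frac{1}{|\mathcal{G}_{\mathcal{U}}(M)|}\sum_\omega \frac{W_\omega Z_\omega \bar Y_\omega(1)^k}{\pi_N p}.
\]
I will then argue in two stages, conditioning on the sampled groups $\{W_\omega\}$.

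\emph{Assignment stage.} Given the sampled groups, $\{Z_g\}$ is complete randomization of $G_{N1}$ out of $G_N$ groups. The treated-arm mean of the bounded quantity $\bar Y_g(1)^k$ is then a simple random sample mean from the $G_N$ sampled values. Its conditional expectation is $\frac{1}{G_N}\sum_g\bar Y_g(1)^k$, and its conditional variance is at most $C^{2k}/G_{N1}=O(G_N^{-1})$. Conditional Chebyshev then gives convergence to the sample-level moment.

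\emph{Sampling stage.} I will apply Lemma~\ref{lemma:var-g} with $\tilde\tau_\omega=\bar Y_\omega(1)^k$, which is bounded by $C^k$. This shows $\frac{1}{G_N}\sum_g\bar Y_g(1)^k=\check\tau\xrightarrow{P}|\mathcal{G}_{\mathcal{U}}(M)|^{-1}\sum_\omega\bar Y_\omega(1)^k$.

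Combining the two stages through the continuous mapping theorem, together with $G_{N1}/(G_{N1}-1)\to1$ and $G_{N1}\to\infty$, yields $S_1^2\xrightarrow{P}\sigma_1^2$. The control arm is identical, so $S_0^2\xrightarrow{P}\sigma_0^2$.

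\emph{Main obstacle.} The delicate step is the sampling stage. The $W_\omega$ are dependent because overlapping groups cannot co-occur, so independence-based laws of large numbers are not directly available. My plan is to route everything through Lemma~\ref{lemma:var-g}, whose proof already absorbs the overlap covariances via Lemma~\ref{lemma:limits} under $N^2/n\to0$. The only thing to verify is that Lemma~\ref{lemma:var-g} applies to any bounded group functional, not just $\tau_\omega$; this holds because its proof uses only boundedness.

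If needed, a fallback is the coupling of Step~5. Under the with-replacement design, the draws are i.i.d., so the weak law of large numbers holds trivially. The total-variation bound $d_{TV}(Q,Q^*)\to0$ then transfers convergence in probability to the restricted design.

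Finally, I will note that the unit-level CR2 estimator with equal cluster sizes equals this group-level HC2 form. Consistency of $\hat V^{CR}_N$ therefore follows immediately.
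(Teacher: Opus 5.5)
Your argument is correct but organized differently from the paper's. The paper works with the expectation of the estimator. Conditional on $\{W_\omega\}$, it reduces the treated-arm term to the within-sample spread $G_N^{-1}\sum_g(\bar Y_g(1)-\bar Y_N(1))^2$. It then averages over the restricted design, using the first- and second-order inclusion probabilities of Lemma~\ref{lemma:group-sampling}, to show this is asymptotically $|\mathcal{G}_{\mathcal{U}}(M)|^{-1}\sum_\omega\bar Y_\omega^2(1)-\bar Y^2(1)$. Finally it asserts, without derivation, that the estimator's variance vanishes by boundedness, and invokes Chebyshev.

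You instead write $S_z^2$ as a smooth function of two arm-level moments, each linear in $W_\omega Z_\omega$. You get consistency of each moment from a conditional SRS variance bound plus the consistency half of Lemma~\ref{lemma:var-g}. The payoff is that the step the paper leaves undone never arises. Bounding $\Var[S_z^2]$ directly would involve the square of $\widehat{\bar Y}(z)$, and hence up to fourth-order joint inclusion probabilities of the non-overlapping design. Your reduction needs only the pairwise structure already established.

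Since $S_z^2$ is bounded, your convergence in probability also gives convergence in mean, so you recover the paper's asymptotic-unbiasedness calculation as a by-product. What the paper's route offers in return is an explicit display of the vanishing bias terms. One technical adjustment: because the population moments move with $n$, combine them via uniform continuity on the bounded range, or via $m_1^2-\mu_1^2=(m_1-\mu_1)(m_1+\mu_1)$, rather than the continuous mapping theorem proper.

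Drop one over-claim: ratio consistency does not follow from Assumption~\ref{assn:po}, which bounds unit-level variances away from zero. Under interference the group means $\bar Y_\omega(z)$ can be constant across potential groups while unit outcomes vary; take $Y_i(z,\mathcal{A})=x_i-M^{-1}\sum_{j\in\mathcal{A}}x_j$. So $\sigma_z^2$ need not be bounded below. This is why the paper adds a separate nondegeneracy condition when passing to ratio consistency in its Proposition 2. The lemma itself is an absolute statement and does not need it.
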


\begin{proof}
We first derive the variance estimator's expectation conditional on the sampling indicators. For the first term, we know that 
$$
\begin{aligned}
& \E\left[\frac{\sum_{g=1}^{G_{N}}Z_g\left(\bar Y_{g} - \widehat{\bar Y}(1)\right)^2}{G_{N1} }\mid \{W_{\omega}\}_{\omega = 1}^{|\mathcal{G}_{\mathcal{U}}(M)|}\right] \simeq \frac{\sum_{g=1}^{G_N}\left(\bar Y_{g}(1) - \bar Y_N(1) \right)^2}{G_N}.
\end{aligned}
$$
Next, we take expectation over the sampling indicators:
$$
\begin{aligned}
& \E\left[\frac{1}{G_N} \sum_{g=1}^{G_N}\left(\bar Y_{g}(1) - \bar Y_N(1)\right)^2\right] = \E\left[\frac{1}{G_N} \sum_{g=1}^{G_N}\bar Y_{g}^{2}(1) - \left(\bar Y_N(1)\right)^2\right] \\
= & \E\left[\frac{1}{G_N} \sum_{\omega \in \mathcal{G}_{\mathcal{U}}(M)}w_{\omega}\bar Y_{\omega}^{2}(1) - \left(\bar Y_N(1)\right)^2\right] \\
= & \frac{\pi_N}{G_N} \sum_{\omega \in \mathcal{G}_{\mathcal{U}}(M)}\bar Y_{\omega}^{2}(1) - \frac{1}{G_N^2}E\left[\left(\sum_{\omega \in \mathcal{G}_{\mathcal{U}}(M)}w_{\omega}\bar Y_{\omega}(1)\right)\left(\sum_{\omega \in \mathcal{G}_{\mathcal{U}}(M)}w_{\omega}\bar Y_{\omega}(1)\right)\right] \\
= & \frac{\pi_N}{G_N} \sum_{\omega \in \mathcal{G}_{\mathcal{U}}(M)}\bar Y_{\omega}^{2}(1) - \frac{\pi_N}{G_N^2} \sum_{\omega \in \mathcal{G}_{\mathcal{U}}(M)}\bar Y_{\omega}^{2}(1) - \frac{1}{G_N^2}\frac{\pi_N(G_N-1)}{|\mathcal{G}_{\mathcal{U}}(M) \setminus \mathcal{A}_{\omega}|}\sum_{\omega \in \mathcal{G}_{\mathcal{U}}(M)}\sum_{\omega' \cap \omega = \emptyset}\bar Y_{\omega}(1) \bar Y_{\omega'}(1) \\
= & \frac{1}{|\mathcal{G}_{\mathcal{U}}(M)|} \sum_{\omega \in \mathcal{G}_{\mathcal{U}}(M)}\bar Y_{\omega}^{2}(1) - \frac{1}{G_N|\mathcal{G}_{\mathcal{U}}(M)|} \sum_{\omega \in \mathcal{G}_{\mathcal{U}}(M)}\bar Y_{\omega}^{2}(1) \\
& - \left(1 - \frac{1}{G_N}\right) \frac{1}{|\mathcal{G}_{\mathcal{U}}(M)||\mathcal{G}_{\mathcal{U}}(M) \setminus \mathcal{A}_{\omega}|}\sum_{\omega \in \mathcal{G}_{\mathcal{U}}(M)}\sum_{\omega' \cap \omega = \emptyset}\bar Y_{\omega}(1) \bar Y_{\omega'}(1) \\
\simeq & \frac{1}{|\mathcal{G}_{\mathcal{U}}(M)|} \sum_{\omega \in \mathcal{G}_{\mathcal{U}}(M)}\bar Y_{\omega}^{2}(1) - \frac{1}{|\mathcal{G}_{\mathcal{U}}(M)||\mathcal{G}_{\mathcal{U}}(M) \setminus \mathcal{A}_{\omega}|}\sum_{\omega \in \mathcal{G}_{\mathcal{U}}(M)}\sum_{\omega' \cap \omega = \emptyset}\bar Y_{\omega}(1) \bar Y_{\omega'}(1) \\
\simeq & \frac{1}{|\mathcal{G}_{\mathcal{U}}(M)|} \sum_{\omega \in \mathcal{G}_{\mathcal{U}}(M)}\bar Y_{\omega}^{2}(1) - \bar Y^{2}(1) . 
\end{aligned}
$$
We can similarly show that $\E\left[\frac{\sum_{g=1}^{G_{N}}(1 - Z_g)\left(\bar Y_{g} - \widehat{\bar Y}\right)^2}{G_{N0} }\right] \simeq \frac{1}{|\mathcal{G}_{\mathcal{U}}(M)|} \sum_{\omega \in \mathcal{G}_{\mathcal{U}}(M)}\bar Y_{\omega}^{2}(0) - \bar Y^{2}(0)$. Putting together, we know that
$$
\begin{aligned}
G_N\E\left[\widehat{\Var}\left[\hat \tau_N\right]\right] \simeq & \frac{1}{p} \{ \frac{1}{|\mathcal{G}_{\mathcal{U}}(M)|} \sum_{\omega \in \mathcal{G}_{\mathcal{U}}(M)}\bar Y_{\omega}^{2}(1) - \bar Y^{2}(1)\} +\frac{1}{1-p}  \{\frac{1}{|\mathcal{G}_{\mathcal{U}}(M)|(1 - p)} \sum_{\omega \in \mathcal{G}_{\mathcal{U}}(M)}\bar Y_{\omega}^{2}(0) - \bar Y^{2}(0) \}\\
= & G_N \Var \left[\hat \tau_N\right].
\end{aligned}
$$
As the potential outcomes are uniformly bounded, it is straightforward to show that the estimator's variance converges to zero. The lemma then follows from Chebyshev's inequality.
\end{proof}

Next, we show that group-level sample averages converge to a normal distribution under the restricted group sampling design. Our proof couples the distribution of the group-level sample averages under the restricted design with the distribution under an independent group sampling design, showing that the total variation distance between these two distributions goes to zero asymptotically. We first state the following lemma for the independent group sampling design.

\begin{lemma}[Asymptotic normality of group-level sample averages under independent group sampling]\label{lemma:normality-ind-sampling}
Consider a hypothetical experiment where we independently sample $G_N$ groups one by one with replacement from the population of all groups, $\mathcal{G}_{\mathcal{U}}(M)$. For any $\tilde \tau_{\omega}$ defined for group $\omega$ such that $|\tilde \tau_{\omega}| < C$, $\check \tau = \frac{1}{G_N} \sum_{g=1}^{G_N} \tilde \tau_{g}$, and $\tau = \frac{1}{|\mathcal{G}_{\mathcal{U}}(M)|} \sum_{\omega=1}^{|\mathcal{G}_{\mathcal{U}}(M)|}\tilde \tau_{\omega}$,
$$
\sqrt{G_N}\big(\check \tau - \tau\big) \rightsquigarrow \mathcal{N}\left(0, \sigma^2_{\tau}\right)
$$
as $n, N$ grow to infinity under our asymptotic regime.
\end{lemma}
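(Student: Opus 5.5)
Under independent sampling with replacement, the $G_N$ drawn groups are i.i.d.\ uniform draws from $\mathcal{G}_{\mathcal{U}}(M)$, so the summands $\tilde\tau_g$, $g=1,\dots,G_N$, are i.i.d.\ given $n$. The population changes with $n$, so this is a triangular array. The plan is to apply the Lindeberg--Feller central limit theorem row by row.

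Define $X_{N,g} = (\tilde\tau_g - \tau)/\sqrt{G_N}$. Each $\tilde\tau_g$ is uniform over $\{\tilde\tau_\omega\}_{\omega}$, so $\E[\tilde\tau_g] = \tau$ and $\Var[\tilde\tau_g] = |\mathcal{G}_{\mathcal{U}}(M)|^{-1}\sum_\omega(\tilde\tau_\omega - \tau)^2 = \sigma_\tau^2$ exactly. By independence across $g$, $\Var[\sum_g X_{N,g}] = \sigma^2_\tau$. This matches the limit in Lemma~\ref{lemma:var-g}, which is as it should be: without the overlap restriction there are no covariance terms to discard.

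The Lindeberg condition follows from boundedness. Since $|\tilde\tau_\omega| < C$, we have $|X_{N,g}| \le 2C/\sqrt{G_N}$, so for any $\epsilon>0$ the indicator $\mathbf{1}\{|X_{N,g}|>\epsilon\sigma_\tau\}$ is identically zero once $G_N$ is large. The Lindeberg sum is then zero for all large $N$.

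For this we need $\sigma_\tau^2$ bounded away from zero. That is where Assumption~\ref{assn:po}(2), the non-degeneracy condition, enters. For the choices of $\tilde\tau_\omega$ actually used (group means of potential outcomes, or linear combinations of them), we take it to imply $\liminf \sigma^2_\tau > 0$.

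Lindeberg--Feller then gives $\sqrt{G_N}(\check\tau-\tau)/\sigma_\tau \rightsquigarrow \mathcal{N}(0,1)$. If $\sigma^2_\tau$ converges to some $\sigma^2>0$, this is the stated result. Otherwise the statement is read in the studentized form, which is the form needed downstream.

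The main obstacle is not the CLT, which is routine for bounded i.i.d.\ summands. It is the fact that $\sigma_\tau^2$ depends on $n$, so the limit must be read as a standardized statement along the sequence. The only genuine requirement is the lower bound on $\sigma_\tau^2$ supplied by Assumption~\ref{assn:po}. The sparse-sampling condition $N^2/n\to 0$ is not needed for this lemma. It enters only later, when the coupling bound $d_{TV}(Q,Q^*)\to 0$ transfers this limit to the restricted design of Design Case~1G.
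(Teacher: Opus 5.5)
This is the paper's argument. Its proof is a one-line appeal to Lindeberg--Feller for bounded i.i.d.\ summands, and you have written out the variance computation and the Lindeberg check it leaves implicit. One correction to your last paragraphs: Assumption~\ref{assn:po}(2) does not deliver $\liminf \sigma^2_\tau > 0$; take $\tilde\tau_\omega = \tau_\omega$ under a constant additive effect, so that $\sigma^2_\tau = 0$. The unstandardized statement does not need that bound anyway. Since $\sigma^2_\tau \le C^2$, pass to subsequences along which $\sigma^2_\tau \to \sigma^2$, and use Lindeberg--Feller if $\sigma^2 > 0$ and Chebyshev if $\sigma^2 = 0$; this also handles your concern about $\sigma^2_\tau$ failing to converge. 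A lower bound is needed only for ratio or studentized statements, and the paper adds non-degeneracy as a separate assumption (in the proof of Proposition~2) only for that purpose, rather than deriving it from Assumption~\ref{assn:po}.
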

\begin{proof}
Since $\{\tilde \tau_{g}\}_{g=1}^{G_N}$ are i.i.d. and $\tau_g$ is uniformly bounded, the Lindeberg--Feller central limit theorem applies. 
\end{proof}

\begin{lemma}[Asymptotic normality of group-level sample averages under the restricted group sampling design]\label{lemma:normality-sampling}
For any $\tilde \tau_{\omega}$ defined for group $\omega$ such that $|\tilde \tau_{\omega}| < C$, $\check \tau = \frac{1}{|\mathcal{G}_{\mathcal{U}}(M)|\pi_N} \sum_{\omega=1}^{|\mathcal{G}_{\mathcal{U}}(M)|}W_{\omega} \tilde \tau_{\omega}$, and $\tau = \frac{1}{|\mathcal{G}_{\mathcal{U}}(M)|} \sum_{\omega=1}^{|\mathcal{G}_{\mathcal{U}}(M)|}\tilde \tau_{\omega}$,
$$
\begin{aligned}
\sqrt{G_N}\left(\check \tau - \tau\right) \rightsquigarrow \mathcal{N}\left(0, G_N\Var\left[\check \tau\right]\right)
\end{aligned}
$$
as $n, N$ grow to infinity under our asymptotic regime.
\end{lemma}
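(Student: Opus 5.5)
The plan is to transfer the normal limit of Lemma~\ref{lemma:normality-ind-sampling} to the restricted design by a total-variation coupling.

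First, rewrite the statistic in terms of the ordered tuple of realized groups. Under Design Case 1G, let $Q$ denote the law of $(\mathcal{A}_1,\dots,\mathcal{A}_{G_N})$. Since $|\mathcal{G}_{\mathcal{U}}(M)|\pi_N = G_N$, we have $\check\tau = G_N^{-1}\sum_{g=1}^{G_N}\tilde\tau_{\mathcal{A}_g}$. So $\sqrt{G_N}(\check\tau-\tau) = f(\mathcal{A}_1,\dots,\mathcal{A}_{G_N})$ for a fixed measurable map $f$. Let $Q^*$ denote the law of the tuple under the independent with-replacement design of Lemma~\ref{lemma:normality-ind-sampling}, where the same map $f$ gives the statistic studied there.

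Second, bound $d_{TV}(Q,Q^*)$. Under $Q^*$ every ordered tuple has probability $|\mathcal{G}_{\mathcal{U}}(M)|^{-G_N}$. Under $Q$, a tuple of pairwise disjoint groups has probability $\prod_{r=0}^{G_N-1}\binom{n-rM}{M}^{-1}$, which is at least the $Q^*$ probability, and every other tuple has probability zero. Hence $\mathrm{supp}(Q)\subseteq\mathrm{supp}(Q^*)$, and
\[
d_{TV}(Q,Q^*) = Q^*(\text{some pair of groups overlaps}) = 1-\prod_{r=0}^{G_N-1}\frac{\binom{n-rM}{M}}{\binom{n}{M}}.
\]
For the bound on this quantity:
\begin{itemize}
\item By the union bound over the $\binom{G_N}{2}$ pairs, it is at most $\binom{G_N}{2}\bigl(1-\binom{n-M}{M}/\binom{n}{M}\bigr)$.
\item By the computation in the proof of Lemma~\ref{lemma:limits}, $1-\binom{n-M}{M}/\binom{n}{M}\le M^2/(n-M)$.
\item Together these give $d_{TV}(Q,Q^*)\lesssim M^2G_N^2/(2n)=N^2/(2n)\to 0$.
\end{itemize}
By the data-processing inequality, the laws of $f$ under $Q$ and $Q^*$ are then within $N^2/(2n)$ in total variation. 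It follows that $\sup_x|P_Q(f\le x)-P_{Q^*}(f\le x)|\to0$.

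Third, combine the pieces. Lemma~\ref{lemma:normality-ind-sampling} gives $f\rightsquigarrow\mathcal{N}(0,\sigma_\tau^2)$ under $Q^*$, so the same limit holds under $Q$. Lemma~\ref{lemma:var-g} gives $G_N\Var[\check\tau]\simeq\sigma_\tau^2$, so the limiting variance can be written as $G_N\Var[\check\tau]$, which is the statement.

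The main obstacle I expect is that $\sigma^2_\tau$ depends on $n$, so there may be no fixed limit. Assumption~\ref{assn:po} keeps $\sigma^2_\tau$ bounded. I would read the conclusion in the standardized form $(\check\tau-\tau)/\sqrt{\Var[\check\tau]}\rightsquigarrow\mathcal{N}(0,1)$, or pass to subsequences along which $\sigma^2_\tau$ converges. For the i.i.d.\ step I would verify the Lindeberg condition through the triangular-array form: the summands are bounded by $C$ and the variance is bounded below via the non-degeneracy part of Assumption~\ref{assn:po}, so the Lindeberg condition holds trivially.
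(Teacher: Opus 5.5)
Your proposal is correct and follows the paper's proof essentially step for step. You write $\sqrt{G_N}(\check\tau-\tau)$ as a fixed map of the ordered tuple of realized groups, use $\mathrm{supp}(Q)\subseteq\mathrm{supp}(Q^*)$ to get $d_{TV}(Q,Q^*)=1-\prod_{r}\binom{n-rM}{M}/\binom{n}{M}\to 0$, and transfer the normal limit of Lemma~\ref{lemma:normality-ind-sampling} through the data-processing inequality.

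Two of your additions go beyond the paper. First, your union bound gives a clean explicit rate ($\lesssim N^2/(2n)$) in place of the paper's chain of exponential approximations. Second, your appeal to Lemma~\ref{lemma:var-g}, together with a subsequence argument, handles the $n$-dependent limiting variance, a step the paper leaves implicit. Of your two readings, the subsequence route is the safer one, since Assumption~\ref{assn:po} does not by itself keep the variance of an arbitrary bounded $\tilde\tau_\omega$ away from zero.
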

\begin{proof}
The outcome of either the original or the hypothetical experiment can be represented as an ordered $G_N$-tuple: $\tilde{\mathcal{A}} = \left(\mathcal{A}_1,\mathcal{A}_2,\dots,\mathcal{A}_{G_N}\right)$. In the hypothetical experiment, the probability of observing any given tuple $\tilde{\mathcal{A}}$ is $q^* = \frac{1}{|\mathcal{G}_{\mathcal{U}}(M)|^{G_N}}$. Under the original sampling algorithm, the probability of observing the same tuple is
$$
q = \frac{1}{\prod_{r=0}^{G_N - 1}{n - rM \choose M}}.
$$
Let $Q^*$ and $Q$ denote the distribution of $\tilde{\mathcal{A}}$ the hypothetical and original experiments, respectively, and let $supp(Q)$ to denote the support of $Q$. Clearly, 
$$
supp(Q) \subset supp(Q^*), |supp(Q)| = \prod_{r=0}^{G_N - 1}{n - rM \choose M} = \frac{1}{q}.
$$

Let $\mathcal{F}^*$ and $\mathcal{F}$ denote the $\sigma$-algebras underlying $Q^*$ and $Q$, respectively. We further use $Q\left(\tilde{\mathcal{A}}\right)$ to represent the distribution of $\tilde{\mathcal{A}}$ under $Q$. The total variation distance between $Q^*$ and $Q$ is
$$
\begin{aligned}
d_{TV}(Q^*, Q) = & \frac{1}{2} \sum_{\tilde{\mathcal{A}}}|Q^*\left(\tilde{\mathcal{A}}\right) - Q\left(\tilde{\mathcal{A}}\right)| = \frac{1}{2}|supp(Q)| \left(q - q^*\right) + \frac{1}{2}|supp(Q^*) \setminus supp(Q)| q^*  \\
= & \frac{1}{2q}\left(q - q^*\right) + \frac{1}{2} - \frac{q^*}{2q} = 1 - \frac{q^*}{q} = 1 - \frac{\prod_{r=0}^{G_N - 1}{n - rM \choose M}}{|\mathcal{G}_{\mathcal{U}}(M)|^{G_N}} \\
= & 1 - \prod_{r=0}^{G_N - 1}\frac{{n - rM \choose M}}{{n \choose M}} \simeq 1 - \prod_{r=0}^{G_N - 1}\left(1 - \frac{rM}{n}\right)^M \\
\simeq & 1 - \prod_{r=0}^{G_N - 1}e^{-\frac{rM^2}{n}} = 1 - e^{-\frac{M^2}{n}\frac{G_N(G_N - 1)}{2}} \simeq \frac{M^2}{n}\frac{G_N(G_N - 1)}{2} \to 0,
\end{aligned}
$$

Let $T$ be any measurable map defined on the space of ordered $G_N$-tuples. Denote by $L_{T, Q}$ and $L_{T, Q^*}$ the induced distributions of $T$ under $Q$ and $Q^*$, respectively. Since total variation distance cannot increase under a measurable map (the data-processing inequality),
$$
d_{TV}(L_{T, Q}, L_{T, Q^*}) \;\le\; d_{TV}(Q, Q^*) \to 0.
$$

Since $T^* = \sqrt{G_N}\Bigg(\frac{1}{G_N}\sum_{g = 1}^{G_N}\tilde\tau_{g} - \tau\Bigg)$ and $T = \sqrt{G_N}\left(\check \tau - \tau\right)$ are the same map applied to $Q^*$ and $Q$, respectively, the total variation distance between them converges to zero. Therefore, $T$ converges in distribution to the same normal limit as $T^*$.
\end{proof}
When $\tilde\tau_{\omega} = \tau_{\omega}$, $\check \tau = \tau_{SAME}$ and $\tau = \tau_{PAME}$. Therefore, Lemma~\ref{lemma:normality-sampling} implies that $\sqrt{G_N}\left(\tau_{SAME} - \tau_{PAME}\right) \rightsquigarrow \mathcal{N}\left(0, \sigma_{\tau}^2\right)$ under our asymptotic regime.

\begin{prop}
\begin{it}
Given Assumption 1 and the Design Cases and Outcome Cases defined in the main text, we have the following as $n$ and $N$ go to infinity in our asymptotic regime:
\begin{enumerate}
    \item With fixed groups and no interference (Design-Outcome Case 0-0), $\hat \tau_N \xrightarrow{P} \tau_{ATE}$.
    \item With random groups and no interference (Design-Outcome Case 1-0), $\hat \tau_N \xrightarrow{P} \tau_{ATE}$.
    \item With fixed groups and interference (Design-Outcome Case 0-1), $\hat \tau_N \xrightarrow{P} TOT$.
    \item With random groups and interference (Design-Outcome Case 1-1), $\hat \tau_N \xrightarrow{P} PAME$.
\end{enumerate}
\end{it}
\end{prop}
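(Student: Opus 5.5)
The plan is to handle all four parts through one template. First show that $\E[\hat\tau_N]$ equals (or converges to) the target. Then show that $\Var[\hat\tau_N]\to 0$. Chebyshev's inequality then gives convergence in probability. The random-groups cases come directly from the group-level representation (Design Case 1G). The fixed-groups cases come from the same argument applied to a different population of ``potential groups'': the pre-existing partition of $\mathcal{U}$, sampled by simple random sampling of groups.

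\emph{Case 1-1.} This is already essentially proved. Lemma~\ref{lemma:var-zero} states $\hat\tau_N \xrightarrow{P} \tau_{PAME}$. Its proof combines unbiasedness ($\E[\hat\tau_N]=\tau_{PAME}$ via $\tau_{SAME}$, using Lemma~\ref{lemma:group-po} and $\E[W_\omega]=\pi_N$) with $G_N\Var[\hat\tau_N]=O(1)$. So I would simply cite it.

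\emph{Case 1-0.} Here I would specialize the same argument to $Y_i(z,\mathcal{A})=Y_i(z)$. Then $\bar Y_\omega(z)=M^{-1}\sum_{i\in\mathcal{A}_\omega}Y_i(z)$. By the counting identity in the proof of Lemma~\ref{lemma:var-sutva}, $\bar Y(z)=\bar Y_I(z)$, so $\tau_{PAME}=\bar Y_I(1)-\bar Y_I(0)=\mathrm{ATE}$. Consistency then follows from case 1-1, and Lemma~\ref{lemma:var-sutva} additionally gives $N\Var[\hat\tau_N]=O(1)$.

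\emph{Fixed-groups cases 0-1 and 0-0.} Let $\Omega_0$ be the set of pre-existing population groups, with $\bar Y_\omega(z)=M_\omega^{-1}\sum_{i\in\mathcal{A}_\omega}Y_i(z,\mathcal{A}_\omega)$. Design Case 0 draws $G_N$ of these groups by simple random sampling without replacement and then completely randomizes treatment. It is therefore an ordinary two-stage cluster design with no overlap restriction: the groups are disjoint, and Lemma~\ref{lemma:group-sampling} holds with the overlap term removed.

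With common sizes, conditioning on the sampled groups and applying the Neyman argument of Lemma~\ref{lemma:group-tau} gives $\E[\hat\tau_N\mid\text{sample}]=G_N^{-1}\sum_g\tau_g$. Averaging over SRS of groups yields $|\Omega_0|^{-1}\sum_{\omega}\tau_\omega$. When sizes are common this equals $n^{-1}\sum_i[Y_i(1,\mathcal{A}_{\omega_i})-Y_i(0,\mathcal{A}_{\omega_i})]=\mathrm{TOT}$.

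The variance follows from the law of total variance. The conditional Neyman variance is $O(1/G_N)$ by boundedness (Assumption~\ref{assn:po}). The variance from group sampling is $O(1/G_N)$ by the standard SRS formula; this is Lemma~\ref{lemma:var-g} with no overlapping pairs. Both terms therefore vanish, and Chebyshev's inequality gives $\hat\tau_N\xrightarrow{P}\mathrm{TOT}$. Case 0-0 is the special case $Y_i(z,\mathcal{A})=Y_i(z)$, in which $\mathrm{TOT}=\mathrm{ATE}$.

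\emph{Main obstacle.} I expect the main obstacle to be varying group sizes under fixed groups. There $\hat\tau_N$ is a ratio estimator: the denominators $N_1,N_0$ are random. Exact unbiasedness then fails, and the unit-weighted target TOT differs from the unweighted group average. I would handle this by writing $\bar Y_1=\big(G_{N1}^{-1}\sum_g Z_g M_g\bar Y_g\big)\big/\big(G_{N1}^{-1}\sum_g Z_gM_g\big)$. The numerator and denominator are each consistent by the SRS-plus-randomization argument above, for $|\Omega_0|^{-1}\sum_\omega M_\omega\bar Y_\omega(1)$ and $|\Omega_0|^{-1}\sum_\omega M_\omega$ respectively. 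Their ratio is $n^{-1}\sum_i Y_i(1,\mathcal{A}_{\omega_i})$ by the continuous mapping theorem, as in \citet{su2021model}. Throughout, I would keep track that $|\Omega_0|\to\infty$ and that the sampling fraction $G_N/|\Omega_0|\le N^2/n\to0$, so that the finite-population corrections are negligible.
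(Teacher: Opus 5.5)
Your proposal is correct and, for the novel case 1-1, follows the paper's proof exactly: unbiasedness through $\tau_{SAME}$ (Lemma~\ref{lemma:group-tau}), $G_N\Var[\hat\tau_N]=O(1)$ (Lemma~\ref{lemma:var-zero}), and then Chebyshev's inequality. For cases 0-0, 1-0, and 0-1 the paper only cites existing results for cluster- and unit-randomized designs, and your explicit arguments are a faithful, more self-contained filling-in of those citations rather than a different route: you reduce 1-0 to 1-1 via $\tau_{PAME}=\mathrm{ATE}$ under no interference, and you treat fixed groups as simple random sampling of disjoint clusters, with a ratio-estimator step for unequal sizes as in \citet{su2021model}.
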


\begin{proof}
The first three cases are covered by existing results for cluster and unit randomized designs.
The novel result for case 4 concerns the fourth case with random groups and interference.
The result follows from Chebyshev's inequality given unbiasedness and then convergence of the variance to zero, which is established in Lemmas~\ref{lemma:group-tau} and~\ref{lemma:var-zero}. 
\end{proof}

\begin{prop}
\begin{it}
Given Assumption 1 and the Design Cases and Outcome Cases defined in the main text, we have the following:
\begin{enumerate}
    \item In all four Design-Outcome Case combinations specified above,
    $$
    G_N\left(\hat{V}^{CR}_N - \Var(\hat \tau_N)\right) \xrightarrow{P} 0.
    $$
    \item With random groups and no interference (Design-Outcome Case 1-0),  $$
    N\left(\hat{V}^{HR}_N - \Var(\hat \tau_N)\right) \xrightarrow{P} 0.
    $$
\end{enumerate}
\end{it}
\end{prop}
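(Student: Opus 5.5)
Both parts will be handled the way the appendix handled the point estimator. I will reduce each design--outcome cell either to the group-level representation of Design Case 1G or to a known cluster-randomized result, and then combine a conditional-expectation calculation with a variance bound and Chebyshev's inequality.

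For part 1, I first take case 1-1, which is the novel cell. Under homogeneous sizes, $\hat V^{CR}_N$ coincides with the group-level HC2 estimator $S_1^2/G_{N1}+S_0^2/G_{N0}$. Lemma~\ref{lemma:var-est} then gives $G_N\E[\hat V^{CR}_N]\simeq G_N\Var[\hat\tau_N]$, with the limit identified by Lemma~\ref{lemma:var-zero}. What remains is to show $\Var[G_N\hat V^{CR}_N]\to 0$, for which I will use a two-stage decomposition. Conditional on $\{W_\omega\}$, $S_z^2$ is a sample variance of bounded group means drawn by complete randomization, so its conditional variance is $O(1/G_N)$. The conditional mean is $G_N^{-1}\sum_g(\bar Y_g(z)-\bar Y_N(z))^2$ up to $O(1/G_N)$. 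This is a smooth function of two restricted-sampling averages $\check\tau$ with $\tilde\tau_\omega=\bar Y_\omega(z)^2$ and $\tilde\tau_\omega=\bar Y_\omega(z)$. By Lemma~\ref{lemma:var-g}, each of these has variance $O(1/G_N)$ and converges in probability to its population counterpart. Hence $G_N\hat V^{CR}_N - G_N\Var[\hat\tau_N]\xrightarrow{P}0$.

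Case 1-0 is the same argument, since it is a special case of the representation with $Y_i(z,\mathcal A)=Y_i(z)$. Cases 0-0 and 0-1 use fixed groups sampled by simple random sampling of clusters, followed by cluster randomization, with group-level potential outcomes $\bar Y_g(z)$ (equal to $\bar Y_g(z,\mathcal A_g)$ in case 0-1). There I invoke the cluster-randomized results of \citet{su2021model} and \citet{abadie2023should}. Because the sampling fraction $G_N/(\text{number of population groups})\to 0$, the finite-population correction in the sampling stage vanishes, and the same HC2 calculation goes through. This uses neither the overlap restriction nor Lemma~\ref{lemma:limits}.

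For part 2 (case 1-0), Lemma~\ref{lemma:var-sutva} shows that $N\Var[\hat\tau_N]$ converges to the unit-level Neyman form $\frac{1}{np}\sum_i(Y_i(1)-\bar Y_I(1))^2+\frac{1}{n(1-p)}\sum_i(Y_i(0)-\bar Y_I(0))^2$. I then show that the within-arm unit-level sample variances $N_z^{-1}\sum_i\mathbf 1\{Z_i=z\}(Y_i-\bar Y_z)^2$ converge in probability to the population variances $n^{-1}\sum_i(Y_i(z)-\bar Y_I(z))^2$. I will write each as a group-level average of $\tilde\tau_\omega=M^{-1}\sum_{i\in\mathcal A_\omega}Y_i(z)^2$ minus the square of an average of $\bar Y_\omega(z)$. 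Both averages are $\check\tau$-type statistics over the treated (or control) groups, so Lemma~\ref{lemma:var-g} and its conditional-on-assignment analogue give consistency. By the identity used in the proof of Lemma~\ref{lemma:var-sutva}, their population means equal the unit-level quantities. The result then follows from $N/N_z\to 1/p$ or $1/(1-p)$ and $N_z/(N_z-1)\to1$.

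I expect the main obstacle to be the variance-of-the-variance-estimator step in case 1-1. The paper only asserts that this step is straightforward, but the restricted sampling induces dependence among the $W_\omega$, so squared terms such as $(\sum_\omega W_\omega\bar Y_\omega)^2$ involve fourth-order joint inclusion probabilities that Lemma~\ref{lemma:group-sampling} does not supply. I would first sidestep this with the coupling from Lemma~\ref{lemma:normality-sampling}. Convergence in probability of a bounded function of the sampled tuple transfers across laws whose total variation distance vanishes, and under i.i.d.\ sampling with replacement the weak law of large numbers gives the needed limits at once. If that transfer fails for unbounded rescalings, the fallback is to bound the fourth-order covariances directly by the overlap share $O(M^2/n)$, as in Lemma~\ref{lemma:limits}, which is negligible because $N^2/n\to0$.
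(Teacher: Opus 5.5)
Your proposal is correct and follows essentially the paper's route. The target variance comes from Lemmas~\ref{lemma:var-zero} and~\ref{lemma:var-sutva}, and CR2 consistency in the random-group cells comes from the conditional-expectation calculation of Lemma~\ref{lemma:var-est} plus a vanishing-variance step and Chebyshev. The fixed-group cells rest on standard cluster-randomized results. Your two-stage argument and your group-level proof for $\hat V^{HR}_N$ supply details that the paper either calls ``straightforward'' or cites to \citet{su2021model}.

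The fourth-order inclusion probabilities you flag as the main obstacle never arise. Because $\E[\bar Y_N(z)] = \bar Y(z)$ and outcomes are bounded, $\Var[\bar Y_N(z)^2] \le 4C^2\Var[\bar Y_N(z)]$. So the second-order calculation of Lemma~\ref{lemma:var-g} already suffices, and the coupling fallback is unnecessary.
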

\begin{proof}
The scaled absolute-consistency statements follow from Lemmas~\ref{lemma:var-zero}-\ref{lemma:var-est}. To obtain the ratio-consistency result in Theorem 2, additionally assume that the corresponding scaled asymptotic variance is nondegenerate: for some $c>0$, $G_N Var(\hat \tau_N) \ge c$ for all sufficiently large N, and, in Case 1-0, $N Var(\hat \tau_N) \ge c$. Then 
$$
\frac{\hat V_n^{CR}}{Var(\hat \tau_N)}-1=\frac{G_N\{\hat V^{CR}_N-Var(\hat \tau_N)\}}{G_N Var(\hat \tau_N)}\xrightarrow{P} 0.
$$
\end{proof}

\begin{prop}
\begin{it}
Given Assumption 1 and the Design Cases and Outcome Cases defined in the main text, indexed respectively by $d\in\{0,1\}$ and $o \in \{0,1\}$, we have the following:
\begin{enumerate}
    \item In all four Design-Outcome Case combinations specified above we have
    $$
\sqrt{G_N}(\hat{\tau}_N - \tau_{do}) \rightsquigarrow \N(0,G_N\Var(\hat \tau_N)).
    $$
        \item With random groups and no interference (Design-Outcome Case 1-0),  $$
\sqrt{N}(\hat{\tau}_N - \tau_{10}) \rightsquigarrow \N(0,N\Var(\hat \tau_N)).
    $$
\end{enumerate}
\end{it}
\end{prop}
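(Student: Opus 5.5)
The plan is to decompose $\hat\tau_N - \tau_{do}$ into a sampling part and an assignment part, and handle the novel Case 1-1 first. Write
\[
\sqrt{G_N}(\hat\tau_N - \tau_{PAME}) = \sqrt{G_N}(\hat\tau_N - \tau_{SAME}) + \sqrt{G_N}(\tau_{SAME} - \tau_{PAME}).
\]
For the second term I will use Lemma~\ref{lemma:normality-sampling} with $\tilde\tau_\omega = \tau_\omega$, which gives a normal limit with variance $\sigma^2_\tau$.

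For the first term, I condition on the realized groups $\{W_\omega\}$. Given these groups, $\hat\tau_N - \tau_{SAME}$ is the group-level difference in means under complete randomization of $G_{N1}$ of $G_N$ bounded group outcomes $(\bar Y_g(1),\bar Y_g(0))$. The finite-population CLT (H\'ajek) then applies conditionally. The conditional Neyman variance is
\[
G_N\Var[\hat\tau_N\mid W] = \frac{S^2_{N}(1)}{p} + \frac{S^2_{N}(0)}{1-p} - S^2_{N}(\tau).
\]
By Lemma~\ref{lemma:var-g}, applied to $\bar Y_\omega(z)$, $\bar Y_\omega(z)^2$ and $\tau_\omega^2$, this converges in probability to a deterministic limit $v_Z$. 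That limit is bounded away from zero by Assumption~\ref{assn:po}.

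To combine the two pieces, I will argue at the level of characteristic functions, as in \citet{ohlsson1989asymptotic}. Write $\sqrt{G_N}(\tau_{SAME}-\tau_{PAME}) = U_N$, which is measurable with respect to $W$, and write the assignment part as $V_N$. Then
\[
\E[e^{it(U_N+V_N)}] = \E\bigl[e^{itU_N}\,\E[e^{itV_N}\mid W]\bigr].
\]
The inner conditional characteristic function converges in probability to $e^{-t^2 v_Z/2}$. By bounded convergence, the whole expression converges to $e^{-t^2(\sigma^2_\tau + v_Z)/2}$. Lemma~\ref{lemma:var-zero} identifies $\sigma^2_\tau + v_Z$ as the limit of $G_N\Var[\hat\tau_N]$, which yields part 1 in Case 1-1.

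The remaining cases follow from the same template. In Case 1-0, the group-level potential outcomes are averages of $Y_i(z)$, so the argument above applies verbatim with $\tau_{PAME} = \mathrm{ATE}$. Since $N = MG_N$ with $M$ fixed, multiplying by $\sqrt{M}$ and using Lemma~\ref{lemma:var-sutva} gives part 2. In Cases 0-0 and 0-1, groups are drawn by simple random sampling without replacement from the fixed population partition. The sampling stage is then a standard SRS CLT for bounded group means, and the sparse regime makes the finite-population correction vanish. The assignment stage is unchanged, so the same characteristic-function combination applies, with targets $\mathrm{ATE}$ and TOT respectively. With unequal fixed sizes one would use the ratio-estimator linearization of \citet{su2021model}.

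The main obstacle is the conditional step. The H\'ajek/Lindeberg condition must hold in probability along the random sequence of realized groups, and $p$ must stay bounded away from 0 and 1. Boundedness of outcomes handles Lindeberg trivially. What remains is showing that the conditional variance converges in probability rather than only in expectation; I will obtain this by applying Lemma~\ref{lemma:var-g}'s Chebyshev argument to each empirical second moment.
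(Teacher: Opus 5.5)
Your proposal is correct. It shares the paper's decomposition $\sqrt{G_N}(\hat\tau_N-\tau_{PAME})=S_1+S_2$, and it handles $S_2=\sqrt{G_N}(\tau_{SAME}-\tau_{PAME})$ the same way, through the total-variation coupling lemma. Where you depart is the assignment stage and how the two stages are glued together.

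The paper never applies a permutation CLT to complete randomization directly. It introduces an auxiliary Bernoulli assignment $Z_g^*$. Conditional on $\sigma(W)$, the summands $G_N^{-1/2}(Z_g^*-p)(\tilde Y_g-\tilde Y)$ are then conditionally independent, so they form a martingale difference array with an explicit conditional variance. The paper invokes Ohlsson's Theorem A.1 to get $S_1^*+S_2$ asymptotically normal. It then uses a nested H\'ajek-type coupling of the Bernoulli and complete-randomization assignments, with $\E|G_{N1}-G_{N1}^*|=O(\sqrt{G_N})$, to show $S_1-S_1^*\to 0$ in $L^2$.

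You instead apply H\'ajek's finite-population CLT to complete randomization conditionally on the realized groups. You then combine it with the $W$-measurable term $U_N$ through $\E[e^{itU_N}\E[e^{itV_N}\mid W]]$. This is essentially a direct proof of the two-stage combination step that Ohlsson's theorem packages. Your route is more self-contained and avoids both the Bernoulli detour and the assignment coupling.

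The cost is that your conditional CLT must hold in probability along the random sequence of realized groups. You need a Berry--Esseen bound for sampling without replacement, or a subsequence argument. You correctly flag this, and bounded outcomes plus in-probability convergence of the conditional Neyman variance via the Chebyshev argument of Lemma~\ref{lemma:var-g} do handle it. The paper's route gets conditional independence across groups for free and only needs the martingale's conditional variance to converge.

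Two minor notes. First, you do not actually need $v_Z>0$: if it vanishes, $V_N\to 0$ in probability and the conditional characteristic function tends to one. Second, your explicit treatment of Cases 0-0 and 0-1 (an SRS CLT over fixed groups with vanishing finite-population correction) and of part 2 (rescaling by $\sqrt{M}$ and invoking Lemma~\ref{lemma:var-sutva}) fills in what the paper's proof leaves to existing results.
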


\begin{proof}
First, observe that
$$
\sqrt{G_N}(\hat \tau_N - \tau_{PAME}) = \sqrt{G_N}(\hat \tau_N - \tau_{SAME}) + \sqrt{G_N}(\tau_{SAME} - \tau_{PAME}) = S_1 + S_2.
$$
From Lemma~\ref{lemma:normality-sampling}, we know that $S_2$ is asymptotically normal. We prove the asymptotic normality of $S_1 + S_2$ using the two-stage central limit theorem in Ohlsson (1989).

Let's define $\tilde{Y}_g = \frac{\bar Y_{g}(1)}{p} + \frac{\bar Y_{g}(0)}{1 - p}$ and $\tilde{Y}= \frac{1}{G_N}\sum_{g = 1}^{G_{N}}\xi_g = \frac{1}{|\mathcal{G}_{\mathcal{U}}(M)|\pi_N}\sum_{\omega = 1}^{|\mathcal{G}_{\mathcal{U}}(M)|}W_{\omega}\tilde{Y}_{\omega}$, then
$$
\begin{aligned}
    & S_1 = \sqrt{G_N}(\hat \tau_N - \tau_{SAME}) = \sum_{g = 1}^{G_{N}}\sqrt{G_N}\left(\frac{Z_{g}\bar Y_{g}}{G_{N}p} - \frac{(1 - Z_{g})\bar Y_{g}}{G_{N}(1-p)} - \frac{\tau_{g}}{G_N}\right) \\
    = & \sum_{g = 1}^{G_{N}}\frac{\sqrt{G_N}}{G_N}\left(Z_g - p\right) \left(\frac{\bar Y_{g}(1)}{p} + \frac{\bar Y_{g}(0)}{1 - p}\right)  = \sum_{g = 1}^{G_{N}}\frac{\sqrt{G_N}}{G_N}\left(Z_g - p\right) \left(\tilde{Y}_g - \tilde{Y}\right).
\end{aligned}
$$ 
The last equality uses the fact that $\sum_{g = 1}^{G_{N}}\left(Z_g - p\right) = 0$.

Consider an alternative experimental design where each group $\mathcal{A}_{\omega} \in \mathcal{G}_{\mathcal{U}}(M)$ is assigned with the treatment independently with probability $p$ (i.e., a Bernoulli trial). We denote the treatment status for group $\mathcal{A}_{\omega}$ in this design as $Z_{\omega}^*$ and define $S_1^* = \sum_{g = 1}^{G_{N}}\frac{\sqrt{G_N}}{G_N}\left(Z_g^* - p\right) \left(\tilde{Y}_g - \tilde{Y}\right)$.

Let $\mathcal{F}_0$ denote the $\sigma$-algebra generated by $\{W_{\omega}\}_{\omega \in \mathcal{G}_{\mathcal{U}}(M)}$. For $g=1,\ldots,G_N$, define the filtration $\{\mathcal{F}_g\}_{g=0}^{G_N}$ by
\[
\mathcal{F}_g = \mathcal{F}_0 \vee \sigma\!\left(\left\{\frac{\sqrt{G_N}}{G_N}\left(Z_{g'}^*-p\right)\left(A_{g'}-A\right)\right\}_{g'=1}^{g}\right),
\]
where $\sigma(\cdot)$ denotes the $\sigma$-algebra generated by the enclosed random variables, and $\vee$ denotes the smallest $\sigma$-algebra containing both arguments. Then, we can verify that
$$
\E\left[\frac{\sqrt{G_N}}{G_N}\left(Z_g^* - p\right) \left(\tilde{Y}_g - \tilde{Y}\right) \mid \mathcal{F}_{g-1}\right] = 0.
$$
Therefore, $\Big\{\frac{\sqrt{G_N}}{G_N}\left(Z_g^* - p\right) \left(\tilde{Y}_g - \tilde{Y}\right)\Big\}_{g = 1}^{G_N}$ is a martingale difference sequence, and
$$
\begin{aligned}
    & \frac{1}{G_N}\sum_{g=1}^{G_N}\E\left[\left(Z_{g}^* - p\right)^2\left(A_{g} - A\right)^2 \mid \mathcal{F}_{g-1}\right] = \frac{\sum_{g = 1}^{G_N}\left(\bar Y_{g}(1) - \bar Y_N(1) \right)^2}{pG_N} \\
    & + \frac{\sum_{g = 1}^{G_N}\left(\bar Y_{g}(0) - \bar Y_N(0) \right)^2}{(1-p)G_N} - \frac{\sum_{g= 1}^{G_N}\left(\tau_{g} - \tau_{SAME} \right)^2}{G_N} = O(1).
\end{aligned}
$$
From Theorem A.1 in \citet{ohlsson1989asymptotic}, we know that $S_1^* + S_2$ converges to a normal distribution. We next transfer this result to $S_1 + S_2$ using H{\'a}jek's coupling \citep{hajek1960limiting}. Let $G_{N1}^*$ denote the number of treated groups under the Bernoulli assignment. We first draw $G_{N1}^*$ from a binomial distribution $Bin(G_N, p)$. Conditional on $G_{N1}^*$, we construct the Bernoulli and complete-randomization assignments on a common probability space as follows. If $G_{N1}^*>G_{N1}$, we first draw a simple random sample of $G_{N1}^*$ treated groups from the $G_N$ groups and then obtain the complete-randomization assignment by drawing a simple random subsample of size $G_{N1}$ from these treated groups. Conversely, if $G_{N1}^*<G_{N1}$, we first draw a simple random sample of $G_{N1}$ treated groups from the $G_N$ groups and then obtain the Bernoulli assignment by drawing a simple random subsample of size $G_{N1}^*$ from these treated groups. This construction gives the correct marginal distributions for both the complete-randomization assignment and the alternative assignment.

Conditional on $G_{N1}^*$, the randomness in $S_1-S_1^*$ arises solely from this simple random sampling step. Moreover, $\E[S_1-S_1^*\mid G_{N1}^*]=0$. By construction, the set $\{g:Z_g\neq Z_g^*\}$ is a simple random sample of size $|G_{N1}-G_{N1}^*|$ from the $G_N$ groups. Therefore,
$$
\begin{aligned}
    & \E\left[\left(S_1 - S_1^*\right)^2 \mid G_{N1}^*, \{W_{\omega}\}_{\omega \in \mathcal{G}_{\mathcal{U}}(M)}\right] \\
    = &  G_N\E\left[\frac{1}{G_N^2}\left(\sum_{g = 1}^{G_N} \left(\tilde{Y}_g - \tilde{Y}\right) \left(Z_{g} - Z_{g}^*\right)\right)^2 \mid G_{N1}^*, \{W_{\omega}\}_{\omega \in \mathcal{G}_{\mathcal{U}}(M)}\right] \\
    = & G_N\Var\left[\frac{1}{G_N}\left(\sum_{g: Z_{g} \ne Z_{g}^*} \left(\tilde{Y}_g - \tilde{Y}\right) \left(Z_{g} - Z_{g}^*\right)\right) \mid G_{N1}^*, \{W_{\omega}\}_{\omega \in \mathcal{G}_{\mathcal{U}}(M)}\right]   \\
    = & \frac{1}{G_N} \frac{|G_{N1} - G_{N1}^*|}{G_N}\frac{G_N - |G_{N1} - G_{N1}^*|}{G_N - 1}\sum_{g = 1}^{G_N}\left( \tilde{Y}_g - \tilde{Y}\right)^2 \le \frac{|G_{N1} - G_{N1}^*|}{G_N} \frac{\sum_{g = 1}^{G_N}\left( \tilde{Y}_g - \tilde{Y}\right)^2}{G_N} \\
    \asymp & \frac{|G_{N1} - G_{N1}^*|}{G_N},
\end{aligned}
$$
where the third equality follows from the variance formula for the sample mean under simple random sampling without replacement. Since $\E|G_{N1}-G_{N1}^*|=O(\sqrt{G_N})$, the tower property implies that $\E\left[\left(S_1 - S_1^*\right)^2\right] \to 0$. Hence $S_1-S_1^*\to0$ in $L^2$, and therefore also in probability. By Slutsky's theorem, $S_1 - S_1^* \to 0$, $S_1 + S_2 - (S_1^* + S_2) \to 0$, and $S_1 + S_2$ converges to a normal distribution. The remainder of the proof follows from Lemmas \ref{lemma:var-zero}-\ref{lemma:var-est} and Slutsky's theorem.

\end{proof}
The combination of Proposition 1 and Proposition 3 leads to Theorem 1 in the main text for homogeneous group sizes. Proposition 2 implies Theorem 2 in the main text for homogeneous group sizes.


\subsection{Varying group sizes}
We now generalize the analysis to account for varying group sizes.
We assume that there are $K$ admissible group sizes, $\mathcal{M} = \{m_1,...,m_K\}$. A sample of $N$ units is randomly sampled from the population $\mathcal{U}$ and assigned to groups $g=1,...,G_N$ through random partition. The partition is created by a random permutation of an index vector of length $N$ of the form $\underbrace{1,...,1}_{M_1},...,\underbrace{g,...,g}_{M_g},...,\underbrace{G_N,...,G_N}_{M_{G_N}}$, in which each index value $g$ appears $M_g$ times, so that each unit is randomly assigned to a group. We write $G_N(m)$ for the number of groups with $M_g = m \in \mathcal{M}$. Among them, the first $G_{N1}(m)$ groups are assigned to ``treatment'' and the remaining $G_{N0}(m)$ groups are under ``control,'' with group treatment indicator $Z_g \in \{0,1\}$ equal to $1$ for treated groups and $0$ for control groups. This design is more general than Design Case 1 in the main text as it allows the probability of treatment to vary by group sizes, which we denote with $p(m_k)$ for the $k$th admissible group size. $p(m_k)$ can be 0 or 1 for specific $k$, thus admissible group sizes can differ across arms. We let $\mathcal{M}_1$ and $\mathcal{M}_0$ denote the distinct group sizes appearing in the treatment and control arms, respectively. $G_{N1}$ and $G_{N0}$ represent the total number of groups under treatment and control. $N_1$ and $N_0$ denote the numbers of units under treatment and control, respectively, while $N_z(m)$ denotes the number of units in size-m groups assigned to treatment status $z$.

\subsubsection{Generalizing the estimands with varying group sizes}
We now generalize the marginalized estimands to varying group sizes. Let $\mathcal{G}_{\mathcal{U},j}(m_k) = \{\mathcal{S} \subseteq \mathcal{U} : j \in \mathcal{S},\, |\mathcal{S}| = m_k\}$ denote the set of candidate size-$m_k$ groups in the population containing $j$, and define the within-size marginalized potential outcome
$$
\mu_j(z;m_k) = \frac{1}{\binom{n-1}{m_k-1}}\sum_{\mathcal{A} \in \mathcal{G}_{\mathcal{U},j}(m_k)} Y_j(z,\mathcal{A}),
$$
the uniform average of $Y_j(z,\mathcal{A})$ over those candidates. This is uniform marginalization over potential groups within a size class. By the symmetry of simple random sampling composed with random partitioning, $\mu_j(z;m_k)$ is also the design-conditional expectation given size $m_k$, $\mu_j(z;m_k) = \E[Y_j(z,\mathcal{A}_{A_j}) \mid Z_j = z,\, |\mathcal{A}_{A_j}| = m_k]$; this within-size correspondence between design distribution and uniform average carries over from the homogeneous case unchanged. 

We combine the within-size outcomes through a weighted average,
$$
\bar\mu^{(\phi)}_i(z) = \sum_{m_k \in \mathcal{M}_z}\phi(z;m_k)\, \mu_i(z;m_k),
$$
where the weights $\phi(z;m_k) \ge 0$ sum to one over $m_k \in \mathcal{M}_z$ and encode how much each admissible size counts toward the estimand. The unit-level marginalized exposure effect under these weights is $\tau^{(\phi)}_i = \bar\mu^{(\phi)}_i(1) - \bar\mu^{(\phi)}_i(0)$, and the population average marginalized effect under $\phi$ is 
$$
\tau^{(\phi)}_{PAME} = \frac{1}{n}\sum_{i \in \mathcal{U}}\tau^{(\phi)}_i.
$$ 
Different choices of $\phi$ give different forms of this single estimand. We discuss four of them. 

\paragraph{PAME with design-induced weights.}  Let
$$
\phi(z;m_k) = \Pr\!\left(|\mathcal{A}_{A_j}| = m_k \mid Z_j = z\right)
$$
be the probability under the design that unit $j$ is in a group of size $m_k$ when assigned to condition $z$. If the design assigns units to groups using the uniform permutation approach defined in the main text, then the $\phi(z;m_k)$ probabilities may vary by group size. As a toy example, if the design designates that there should be 2 groups of size 2 and 2 groups of size 1 in the treatment arm, then conditional on being in the treatment arm, a unit is twice as likely to be in a group of size 2 than of size 1 (since there are 4 ``slots'' available for groups of size 2 but only 2 slots available for groups of size 1). By the law of iterated expectations, $\bar\mu^{(\phi)}_i(z) = \E[Y_j(z,\mathcal{A}_{A_j}) \mid Z_j = z]$, the design-conditional expectation of unit $j$'s potential outcome. $\tau^{(\phi)}_{PAME}$ in this case describes the realized effect under the design's specific allocation across size classes, weighting size $M_k$ by the share of units the design assigns to that size in arm $z$. 

\paragraph{PAME with even weights across group sizes.} We can weight $\mu_i(z;m_k)$ uniformly over admissible size classes by setting
$$
\phi(z;m_k) = 1/|\mathcal{M}_z|.
$$
The PAME defined this way is a property of the population: changing the design's allocation across size classes $\phi_j(z;m_k)$  does not change $\tau_{PAME}$ as long as the same size classes remain admissible. The PAME defined this way also has stable interpretations over populations that grow to infinity. The first two estimands coincide whenever (i) $K=1$ (homogeneous) or (ii) the design is balanced across size classes so $\Pr\!\left(|\mathcal{A}_{A_j}| = m_k \mid Z_j = z\right) = 1/|\mathcal{M}_z|$. But they differ when more than one size class is admissible in an arm and the design's allocation across classes within that arm is not uniform. In such cases, the first estimand has the property that it would differ depending on the design (e.g., in our toy example above, it would change when going from having 2 groups of 1 unit to 4 groups of 1 unit, thereby evening out the group assignment probabilities to groups of size 2 and size 1). Whether this is desirable or not depends on the substantive question at hand.

\paragraph{PAME with even weights across groups.} Another alternative estimand could be based on uniformly averaging all potential groups, pooling potential groups from all admissible size classes: 
$$
\phi(z;m_k) = \binom{n-1}{m_k-1}/\sum_{m_{k'} \in \mathcal{M}_z}\binom{n-1}{m_{k'}-1}.
$$ 
This object is also identified by the design. But its weights depend on $n$, and as $n \to \infty$ with the set of size classes held fixed, a population average causal effect based on this kind of averaging converges to the within-class effect at the largest admissible size. To see this, consider the case where the admissible groups are of either size 1 or 2.  For each unit $j$ in the population, the number of potential groups of size 2 that it could be part of grows without bound in the population size  $n$, while the number of potential groups of size 1 remains fixed at 1.  For large $n$, the contribution of the potential outcome for a group of size 1 vanishes. This phenomenon generalizes to arbitrary admissible group sizes, in which case the estimand degenerates to the marginal effect over the largest group size. 

\paragraph{Within-size conditional contrasts.} Finally, when group sizes differ across arms (as in the example given above, where size-$m_k$ groups appear only in one arm) the within-size-$m_k$ difference in means is undefined for some $k$, and $\tau_{PAME}$ is instead targeted by cross-condition contrasts, which we can define by setting
$$
\phi(z;m_k) = 1, \phi(z';m_{k'}) = 1, \text{ for given }  z, z', m_k, m_{k'}, \text{ and } \phi(\cdot) = 0 \text{ otherwise}.
$$
Then,
$$
\begin{aligned}
& \tau^{(\phi)}_{PAME} = \frac{1}{n}\sum_{i \in \mathcal{U}}\left(\mu_i(z; m_k) - \mu_i(z'; m_{k'})\right),
\end{aligned}
$$
is well-defined when $\phi(z;m_k) \ne 0$ and $\phi(z';m_{k'}) \ne 0$ for all $i \in \mathcal{U}$.  In the \citet{mendelberg2014does} example discussed in the main text, this estimand would capture pairwise contrasts between outcomes in the various treated and control cells shown in Table 2.  Differences between these estimands capture interaction effects between treatment and group size, as is done in the analysis of the \citet{mendelberg2014does} example. 

\paragraph{Estimation.} To target $\tau^{(\phi)}_{PAME}$ under weights $\phi$, we first estimate, for each arm $z$ and admissible size $m_k \in \mathcal{M}_z$, the within-size mean outcome
$$
\hat\mu(z;m_k) = \frac{\sum_{i \in \mathcal{N}}\mathbf{1}\{|\mathcal{A}_{A_i}| = m_k\}\,\mathbf{1}\{Z_i = z\}\,Y_i}{G_{Nz}(m_k)\,m_k}.
$$
The plug-in estimator for $\tau^{(\phi)}_{PAME}$ is then
$$
\hat\tau_N = \sum_{m_k \in \mathcal{M}_1}\phi(1;m_k)\,\hat\mu(1;m_k) \;-\; \sum_{m_k \in \mathcal{M}_0}\phi(0;m_k)\,\hat\mu(0;m_k).
$$
Setting $\phi$ to the realized design shares $\hat\phi(z;m_k) = N_z(m_k)/N_z$ makes the within-size means aggregate back to the arm means, so $\hat\tau_N = \bar Y_1 - \bar Y_0$ recovers the difference in means and targets the design-weighted form. Setting $\phi(z;m_k) = 1/|\mathcal{M}_z|$ instead targets the PAME with even weights across group sizes. The PAME with even weights across groups and  within-size conditional contrasts can be estimated similarly.

In the asymptotic analysis below, we consider arbitrarily weighted estimands and analogue estiamtors, covering the specific examples shown here as special cases. We develop the analysis for arbitrarily weighted estimands and estimators using the group-level representation.  We first present the translation of the design to the group-level representation, and then introduce the arbitrarily weighted estimand and estimator.

\subsubsection{Translation to the group-level representation}\label{sec:scheme}
As we did above for the homogeneous groups case, we can translate the design with varying group sizes into a group-level sampling design.

\begin{description}
    \item[Design Case 2G:] Sample $\mathcal{A}_1$ uniformly from all groups of size $M_1$ that could be formed by units in $\mathcal{U}$; then sample $\mathcal{A}_2$ from all groups of size $M_2$ that could be formed by units in $\mathcal{U} \setminus \mathcal{A}_1$. Continue sequentially: at step $g$, sample $\mathcal{A}_g$ uniformly from all groups of size $M_g$ in $\mathcal{U} \setminus \bigcup_{h<g} \mathcal{A}_h$, where $M_g \in \mathcal{M} = \{m_1, m_2, \ldots, m_K\}$. Among groups with size $m$, the first $G_{N1}(m)$ are assigned to treatment ($z_g = 1$) and the remaining $G_{N0}(m)$ ones are under control ($z_g = 0$). $G_{N1}(m) / G_{N}(m) = p(m)$.
\end{description}

For any realized partition $a=(\mathcal{A}_1,\mathcal{A}_2,...,\mathcal{A}_{G_N})$, we use $\mathbb{P}_I(a)$ and $\mathbb{P}_{G}(a)$ to denote the probability of the realized partition $a$. The following proposition shows that two sampling schemes are equivalent in the sense that they have the same probability distribution over the realized $(\mathcal{A}_1,\mathcal{A}_2,...,\mathcal{A}_{G_N})$.

\begin{lemma}
	For any realized partition $a=(\mathcal{A}_1,\mathcal{A}_2,...,\mathcal{A}_{G_N})$, both designs have the same probability:
	$$
	\mathbb{P}_I(a)=\mathbb{P}_G(a)=\frac{(n-N)!\prod_{g=1}^{G_N}M_g!}{n!}
	$$
\end{lemma}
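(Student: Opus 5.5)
The plan is to compute both probabilities directly by counting, treating the realized partition $a=(\mathcal{A}_1,\dots,\mathcal{A}_{G_N})$ as an ordered tuple of disjoint sets with $|\mathcal{A}_g|=M_g$. For the individual-level design (Design Case 1), the event that produces $a$ has two parts. First, simple random sampling must select exactly the unit set $\bigcup_g \mathcal{A}_g$. Second, the random permutation of the index vector must send those units to the prescribed labels.

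For the first part, the sampled set is the union of the groups with probability $1/\binom{n}{N}$. For the second part, the permutation of the index vector $(1^{M_1},\dots,G_N^{M_{G_N}})$ is uniform over its $N!/\prod_g M_g!$ distinct arrangements. Exactly one arrangement assigns label $g$ to precisely the units of $\mathcal{A}_g$ for every $g$. Hence
\[
\mathbb{P}_I(a)=\frac{1}{\binom{n}{N}}\cdot\frac{\prod_g M_g!}{N!}=\frac{(n-N)!\,N!}{n!}\cdot\frac{\prod_g M_g!}{N!}=\frac{(n-N)!\prod_{g}M_g!}{n!}.
\]
If one prefers to model sampling as ordered draws, the same value results, since each ordered sample contributes equally. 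I will state the unordered convention explicitly to avoid double counting.

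For the group-level design (Design Case 2G), I will use the chain rule over the sequential draws. At step $g$, the remaining pool has $n-\sum_{h<g}M_h$ units, and $\mathcal{A}_g$ is drawn uniformly among its $M_g$-subsets. So
\[
\mathbb{P}_G(a)=\prod_{g=1}^{G_N}\binom{n-\sum_{h<g}M_h}{M_g}^{-1}.
\]
Writing each binomial as a ratio of factorials, the product telescopes. The numerators $(n-\sum_{h\le g}M_h)!$ cancel the next denominator's $(n-\sum_{h<g+1}M_h)!$, leaving
\[
\prod_g\binom{n-\sum_{h<g}M_h}{M_g}=\frac{n!}{(n-N)!\prod_g M_g!}.
\]
The two probabilities therefore coincide. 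Setting $M_g\equiv M$ recovers the homogeneous-size lemma stated earlier.

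The only real obstacle is bookkeeping: making sure the labeling conventions match across the two designs. In Design Case 1, group labels (and thus sizes and treatment positions) are fixed by the index vector. In 2G, the $g$-th draw has the prescribed size $M_g$. Both therefore give distributions over the same ordered tuples, and no extra symmetrization factor (such as $G_N(m)!$ for same-size groups) appears. I will note this explicitly. Finally, treatment assignment is a deterministic function of the labels in both designs, so the joint law of groups and treatments agrees as well.
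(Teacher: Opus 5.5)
Your proposal is correct and follows the paper's proof essentially step for step. You multiply the $1/\binom{n}{N}$ sampling probability by the multinomial factor $\prod_g M_g!/N!$ for the individual design, and you apply the chain rule over the sequential draws and telescope the factorials for the group design.
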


\begin{proof}
	Fix $a=(\mathcal{A}_1,\mathcal{A}_2,...,\mathcal{A}_{G_N})$. Consider the Individual Design first. Let $S(a)$ be the sample containing individuals in the realized partition $a$: $\cup_{g=1}^{G_N}\mathcal{A}_g$. The probability of this sample $S(a)$ being drawn is 
	$$
	\mathbb{P}_I[S(a)]=\frac{1}{{n \choose N}}.
	$$
	
	Conditional on $S(a)$, the number of labeled partitions of sizes $(M_1,M_2,...,M_{G_N})$ is
	$$
	{N \choose M_1 \ M_2 \ ...\ M_{G_N}}=\frac{N!}{\prod_{g=1}^{G_N}(M_g!)}
	$$ Therefore,  $\mathbb{P}_I[(\mathcal{A}_1,\mathcal{A}_2,...,\mathcal{A}_{G_N})=a|S(a)]=\frac{\prod_{g=1}^{G_N}(M_g!)}{N!}$.
	
	Combining these together, we conclude
	$$
	\mathbb{P}_I(a)=\frac{N!(n-N)!}{n!}\frac{\prod_{g=1}^{G_N}(M_g!)}{N!}=\frac{(n-N)!\prod_{g=1}^{G_N}(M_g!)}{n!}
	$$
	
	Now, consider the Group Design. Given $a=(\mathcal{A}_1,\mathcal{A}_2,...,\mathcal{A}_{G_N})$, it is straightforward to see that the probability of $\mathcal{A}_1$ being drawn from all $m_1$ size groups that could be formed by the units in $\mathcal{U}$ is ${n \choose M_1}^{-1}$. Generally, at step $g$, the probability of drawing $\mathcal{A}_g$ uniformly from all $m_g$-size groups of $U-\cup_{h<g}\mathcal{A}_h$ is equivalent to drawing $\mathcal{A}_g$ from $U-\bigcup_{h<g}\mathcal{A}_h$.
	
	Therefore, we conclude that
	$$
	\mathbb{P}_G(a)=\prod_{g=1}^{G_N} \frac{1}{{n-\sum_{h<g}M_h \choose M_g}}=\prod_{g=1}^{G_N} \frac{M_g!(n-\sum_{h\leq g}M_h)!}{(n-\sum_{h<g}M_h)!}
	$$
	
	Define $s_g=\sum_{h=1}^g M_h$ and $s_0=0$. Note that $s_{G_N}=N$, and
	$$
	\begin{aligned}
		& \prod_{g=1}^{G_N} \frac{M_g!(n-\sum_{h\leq g}M_h)!}{(n-\sum_{h<g}M_h)!} = \prod_{g=1}^{G_N} M_g!\prod_{g=1}^{G_N} \frac{(n-\sum_{h\leq g}M_h)!}{(n-\sum_{h<g}M_h)!}\\
		= & \prod_{g=1}^{G_N} M_g! \prod_{g=1}^{G_N}\frac{(n-s_g)!}{(n-s_{g-1})!} =\prod_{g=1}^{G_N} M_g! \frac{(n-s_{G_N})!}{(n-s_{0})!}=\prod_{g=1}^{G_N} M_g! \frac{(n-N)!}{n!}
	\end{aligned}
	$$
	
	Therefore, we conclude that
	$$
	\mathbb{P}_G(a)=\mathbb{P}_I(a)
	$$
\end{proof}

We similarly define $\pi(m) = \frac{G_N(m)}{|\mathcal{G}_{\mathcal{U}}(m)|} = \E\left[W_{\omega} \mid |\mathcal{A}_{\omega}| = m\right]$ as in the homogeneous case, where $\omega$ indexes the potential groups in the population, and $W_{\omega}$ indicates whether group $\omega$ is included in the sample. Based on this result, we can generalize Lemmas~\ref{lemma:group-sampling} and~\ref{lemma:group-po} to the case with varying group sizes.

We now show how estimands and estimators can be translated to the group level. For each treatment condition $z \in \{0,1\}$ and group size $m_k$, define the population mean of group-level potential outcomes under condition $z$,
$$
\mu(z;m_k) = \frac{\sum_{\omega: |\mathcal{A}_{\omega}| = m_k} \bar Y_{\omega}(z)}{|\mathcal{G}_{\mathcal{U}}(m_k)|}.
$$
By the same exchanging of summations used in the proof of Lemma~\ref{lemma:group-ame}, $\mu(z;m_k)$ coincides with $\frac{1}{n}\sum_{i \in \mathcal{U}}\mu_i(z;m_k)$, so this group-level definition is equivalent to the individual-level definition from the previous subsection. We can write the PAME under weights $\phi$ as
$$
\begin{aligned}
\tau^{(\phi)}_{PAME} = & \sum_{m_k \in \mathcal{M}_1} \big[\phi(1;m_k)\,\mu(1;m_k)\big] - \sum_{m_k \in \mathcal{M}_0}\big[\phi(0;m_k)\,\mu(0;m_k)\big].
\end{aligned}
$$
which justifies its estimator presented above.

We also define the \emph{sample average marginalized effect} (SAME) for the analysis. First define the per-arm-per-size quantity
$$
\check\mu(z;m_k) = \frac{\sum_{\omega: |\mathcal{A}_{\omega}| = m_k} W_{\omega}\,\bar Y_{\omega}(z)}{G_N(m_k)}
= \frac{\sum_{\omega: |\mathcal{A}_{\omega}| = m_k} W_{\omega}\,\bar Y_{\omega}(z)}{|\mathcal{G}_{\mathcal{U}}(m_k)|\,\pi_N(m_k)},
$$
for $(z,m_k)$ with $\phi(z;m_k) \ne 0$. Then, the SAME is defined as
$$
\tau^{(\phi)}_{SAME} = \sum_{m_k \in \mathcal{M}_1}\phi(1;m_k) \check \mu(1;m_k) - \sum_{m_k \in \mathcal{M}_0}\phi(0;m_k) \check \mu(0;m_k).
$$

\subsubsection{Translating analytical results to the varying group size case}
We now generalize the remaining lemmas developed in the case of homogeneous group sizes to the case of varying group sizes. The generalization of Lemmas~\ref{lemma:group-ame}-\ref{lemma:limits} is straightforward. Following the proof of Lemma~\ref{lemma:limits}, we have:
$$
\begin{aligned}
\frac{|\mathcal{G}_{\mathcal{U}}(m_{k'})\setminus \mathcal{A}_{\omega}|}{|\mathcal{G}_{\mathcal{U}}(m_{k'})|} & = \frac{{n-m_{k}\choose m_{k'}}}{{n\choose m_{k'}}} = \prod_{p=0}^{m_{k'}-1} \left(1 - \frac{m_{k}}{n - p}\right) \\
& \simeq \left(1 - \frac{m_{k}}{n}\right)^{m_{k'}} \simeq 1 - \frac{m_{k'}m_{k}}{n} \to 1.
\end{aligned}
$$


Lemmas~\ref{lemma:var-g-1} is an extension of Lemma~\ref{lemma:var-g}. Again, we derive the result for a general random variable $\tilde \tau_{\omega}(m)$ defined for group $\omega$ with size $m$.
\begin{lemma}[Variance from group sampling]\label{lemma:var-g-1}
For any $\tilde \tau_{\omega}(m)$ with $|\tilde \tau_{\omega}(m)| \le C$ defined for group $\omega$ with size $m$, $\check \tau = \sum_{m \in \mathcal{M}}\phi(m)\frac{\sum_{\omega = 1}^{|\mathcal{G}_{\mathcal{U}}(m)|}W_{\omega}\tilde \tau_{\omega}(m)}{|\mathcal{G}_{\mathcal{U}}(m)|\pi(m)} = \sum_{m \in \mathcal{M}}\phi(m)\check \tau(m)$, and  $\tau = \sum_{m \in \mathcal{M}}\phi(m)\frac{\sum_{\omega = 1}^{|\mathcal{G}_{\mathcal{U}}(m)|}\tilde \tau_{\omega}(m)}{|\mathcal{G}_{\mathcal{U}}(m)|} = \sum_{m \in \mathcal{M}}\phi(m) \tau(m)$, where $\phi(m) \ge 0$ and $\sum_{m \in \mathcal{M}} \phi(m) = 1$,
$$
\begin{aligned}
G_N\,\Var[\check \tau] & \simeq \sum_{m \in \mathcal{M}} \frac{G_N \phi(m)^2}{G_N(m)}\,\frac{\sum_{\omega: |\mathcal{A}_\omega| = m}\big(\tilde\tau_\omega(m) - \tau(m)\big)^2}{|\mathcal{G}_\mathcal{U}(m)|},
\end{aligned}
$$
and $\check\tau \overset{P}{\to} \tau$, as $n, N$ grow to infinity under our asymptotic regime. 
\end{lemma}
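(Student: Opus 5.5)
The plan is to reduce Lemma~\ref{lemma:var-g-1} to the homogeneous-size computation in Lemma~\ref{lemma:var-g}. We apply that computation within each size class and then show that the cross-class covariances are negligible. First, write $\check\tau = \sum_{m}\phi(m)\check\tau(m)$, so that
\[
\Var[\check\tau] = \sum_{m}\phi(m)^2\Var[\check\tau(m)] + \sum_{m\neq m'}\phi(m)\phi(m')\Cov[\check\tau(m),\check\tau(m')].
\]
For the diagonal terms, we need the varying-size analogue of Lemma~\ref{lemma:group-sampling} under Design Case 2G. This gives $\E[W_\omega]=\pi(m)$ for $|\mathcal{A}_\omega|=m$. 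It also gives $\E[W_\omega W_{\omega'}]=0$ for overlapping groups. For disjoint groups of sizes $m,m'$, the joint inclusion probability is a ratio of counts of the form $G_N(m)(G_N(m')-\mathbf{1}\{m=m'\})/\bigl(|\mathcal{G}_{\mathcal{U}}(m)|\,|\mathcal{G}_{\mathcal{U}}(m')\setminus\mathcal{A}_\omega|\bigr)$ times a correction factor that tends to one. We would obtain this from the exchangeability in the lemma showing $\mathbb{P}_I=\mathbb{P}_G$: the joint law of any two realized groups is uniform over disjoint ordered pairs of the required sizes. With these moments, the algebra in the proof of Lemma~\ref{lemma:var-g} goes through verbatim within class $m$, with $G_N$ replaced by $G_N(m)$. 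Lemma~\ref{lemma:limits}, generalized through the displayed ratio $\binom{n-m_k}{m_{k'}}/\binom{n}{m_{k'}}\simeq 1-m_km_{k'}/n$, kills the remainder terms. The result is
\[
G_N(m)\Var[\check\tau(m)]\simeq |\mathcal{G}_{\mathcal{U}}(m)|^{-1}\sum_{|\mathcal{A}_\omega|=m}(\tilde\tau_\omega(m)-\tau(m))^2.
\]
Multiplying by $G_N\phi(m)^2/G_N(m)$ gives the stated diagonal contribution.

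The main obstacle is the cross-class terms, where $m\neq m'$. Here
\[
\Cov[\check\tau(m),\check\tau(m')] = \frac{1}{|\mathcal{G}(m)||\mathcal{G}(m')|\pi(m)\pi(m')}\sum_{\omega,\omega'}\tilde\tau_\omega\tilde\tau_{\omega'}\Cov[W_\omega,W_{\omega'}].
\]
We split the sum into overlapping and disjoint pairs. For overlapping pairs, $\Cov=-\pi(m)\pi(m')$, and the share of such pairs is $O(mm'/n)$. Using $|\tilde\tau|\le C$, the contribution is therefore $O(C^2 mm'/n)$. For disjoint pairs, the covariance equals $\pi(m)\pi(m')\bigl(|\mathcal{G}(m')|/|\mathcal{G}(m')\setminus\mathcal{A}_\omega|-1\bigr)=\pi(m)\pi(m')\,O(mm'/n)$, so this contribution is also $O(C^2mm'/n)$. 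After scaling by $G_N$, both pieces are $O(G_N/n)$, which is $o(1)$ because $N^2/n\to0$. The fiddly part will be tracking the exact joint inclusion probability and confirming that its relative error is uniformly $O(N/n)$, not merely pointwise small. We would first try to verify this by computing $\E[W_\omega W_{\omega'}]$ as (number of ordered position pairs with the right sizes) divided by the number of disjoint ordered pairs. If exact constants prove messy, we would bound the ratio between $(1-N/n)^{m+m'}$ and $1$, which suffices.

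Finally, the bracket expression in the lemma is uniformly bounded. We would assume $G_N(m)/G_N$ is bounded away from zero for each $m$ with $\phi(m)>0$, as is implicit in fixing $\mathcal{M}$ with $G_N(m)\to\infty$. Then $\Var[\check\tau]=O(G_N^{-1})\to0$. Since $\E[\check\tau]=\tau$ by $\E[W_\omega]=\pi(m)$, Chebyshev's inequality gives $\check\tau\xrightarrow{P}\tau$, exactly as at the end of Lemma~\ref{lemma:var-g}.
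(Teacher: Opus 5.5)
Your proposal is correct and follows essentially the same route as the paper. Both use the same variance decomposition: within-class terms handled by rerunning the Lemma~\ref{lemma:var-g} algebra with $G_N(m)$ in place of $G_N$, cross-class covariances split into overlapping and disjoint pairs and bounded by $O(G_N m m'/n)=o(1)$, and then Chebyshev. Your exchangeability argument gives the joint inclusion probability exactly, $\E[W_\omega W_{\omega'}]=G_N(m)\bigl(G_N(m')-\mathbf{1}\{m=m'\}\bigr)/\bigl(\binom{n}{m}\binom{n-m}{m'}\bigr)$, with no correction factor, so the uniformity worry disappears. Your explicit assumption that $G_N(m)/G_N$ stays bounded away from zero also makes precise a condition the paper leaves implicit.
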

\begin{proof}
We know that
$$
\Var[\check\tau] = \sum_{m \in \mathcal{M}} \phi(m)^2 \Var[\check\tau(m)] + \sum_{m,m' \in \mathcal{M}, m \ne m'}\phi(m)\phi(m') \Cov[\check\tau(m), \check\tau(m')].
$$

\emph{Within-$m$ variance.} The sampling probabilities $\Var[W_\omega] = \pi_N(m)(1-\pi_N(m))$ and $\Cov[W_\omega, W_{\omega'}]$ depend only on the size class and on whether $\mathcal{A}_\omega \cap \mathcal{A}_{\omega'} = \emptyset$, not on the values $\tilde\tau_\omega(m)$. So the calculation in the proof of Lemma~\ref{lemma:var-g} carries through verbatim with $\tau_\omega$ replaced by $\tilde\tau_\omega(m)$. Expanding $\Var[\check\tau(m)]$ and collecting terms,
$$
\begin{aligned}
G_N\,\Var[\check\tau(m)] = \;& \frac{G_N\,\pi_N(m)(1-\pi_N(m))}{G_N^2(m)}\sum_{\omega:|\mathcal{A}_\omega|=m}\tilde\tau_\omega(m)^2 \\
& + \frac{G_N\,\pi_N(m)\Big(\tfrac{G_N(m)-1}{|\mathcal{G}_{\mathcal{U}\setminus\mathcal{A}_\omega}(m)|} - \pi_N(m)\Big)}{G_N^2(m)} \sum_{\omega:|\mathcal{A}_\omega|=m}\sum_{\substack{\omega':|\mathcal{A}_{\omega'}|=m\\\mathcal{A}_\omega\cap\mathcal{A}_{\omega'}=\emptyset}}\tilde\tau_\omega(m)\tilde\tau_{\omega'}(m) \\
& - \frac{G_N\,\pi_N^2(m)}{G_N^2(m)}\sum_{\omega:|\mathcal{A}_\omega|=m}\sum_{\substack{\omega':|\mathcal{A}_{\omega'}|=m\\\mathcal{A}_\omega\cap\mathcal{A}_{\omega'}\neq\emptyset}}\tilde\tau_\omega(m)\tilde\tau_{\omega'}(m),
\end{aligned}
$$
which simplifies---by the same algebra as in the proof of Lemma~\ref{lemma:var-g}---to
$$
G_N\,\Var[\check\tau(m)] \simeq \frac{G_N}{G_N(m)}\,\frac{\sum_{\omega:|\mathcal{A}_\omega|=m}\big(\tilde\tau_\omega(m) - \tau(m)\big)^2}{|\mathcal{G}_\mathcal{U}(m)|}.
$$

\emph{Cross-$m$ covariance.} For $m \ne m'$,
$$
\begin{aligned}
\Cov[\check\tau(m), \check\tau(m')] = \frac{1}{G_N(m)G_N(m')} & \sum_{\omega:|\mathcal{A}_\omega|=m}\sum_{\omega':|\mathcal{A}_{\omega'}|=m'} \Cov[W_\omega, W_{\omega'}]\,\tilde\tau_\omega(m)\,\tilde\tau_{\omega'}(m').
\end{aligned}
$$
Splitting into disjoint ($\mathcal{A}_\omega \cap \mathcal{A}_{\omega'} = \emptyset$) and overlapping cases, and using $|\tilde\tau_\omega(m)| \le C$ ,
$$
\begin{aligned}
& G_N\,\big|\Cov[\check\tau(m), \check\tau(m')]\big| \leq G_N C^2[(\frac{1}{|\mathcal{G}_{\mathcal{U}\setminus\mathcal{A}_\omega}(M_{m'})|}-\frac{1}{ |\mathcal{G}_\mathcal{U}(M_{m'})|}|\mathcal{G}_{\mathcal{U}\setminus\mathcal{A}_\omega}(M_{m'})|) \\
&+   \frac{|\mathcal{G}_\mathcal{U}(M_{m'})|-|\mathcal{G}_{\mathcal{U}\setminus\mathcal{A}_\omega}(M_{m'})| }{|\mathcal{G}_\mathcal{U}(M_{m'})|}  ]\\
&=2G_NC^2(1-\frac{|\mathcal{G}_{\mathcal{U}\setminus\mathcal{A}_\omega}(M_{m'})|}{|\mathcal{G}_\mathcal{U}(M_{m'})|})\\
&=O(\frac{G_N mm'}{n})=o(1)
\end{aligned}
$$

Combining within-$m$ and cross-$m$ pieces proves the lemma.
\end{proof}

From Lemma~\ref{lemma:var-g-1}, we can generalize Lemma~\ref{lemma:var-zero}. Consider the plug-in estimator for $\tau^{(\phi)}_{PAME}$:
$$
\hat\tau_N = \sum_{m_k \in \mathcal{M}_1}\phi(1;m_k)\,\hat\mu(1;m_k) \;-\; \sum_{m_k \in \mathcal{M}_0}\phi(0;m_k)\,\hat\mu(0;m_k).
$$
Define the within-size-$m_k$ population variances
$$
S_1^2(m_k) = \frac{\sum_{\omega: |\mathcal{A}_\omega|=m_k}\big(\bar Y_\omega(1) - \mu(1;m_k)\big)^2}{|\mathcal{G}_\mathcal{U}(m_k)|}, \quad S_0^2(m_k) = \frac{\sum_{\omega: |\mathcal{A}_\omega|=m_k}\big(\bar Y_\omega(0) - \mu(0;m_k)\big)^2}{|\mathcal{G}_\mathcal{U}(m_k)|}.
$$
The asymptotic variance of $\hat\tau$ is
$$
G_N\Var\left[\hat \tau\right] \simeq \sum_{m_k \in \mathcal{M}_1} \frac{G_N}{G_N(m_k)} \frac{\phi(1, m_k)^2\,S_1^2(m_k)}{p(m_k)} + \sum_{m_k \in \mathcal{M}_0} \frac{G_N}{G_N(m_k)} \frac{\phi(0, m_k)^2\,S_0^2(m_k)}{1-p(m_k)} =: \sigma_{\tau}^2.
$$
The cross-arm covariance between $\hat \mu(1;m_k)$ and $\hat \mu(0;m_k)$ vanishes asymptotically because each sampled group $\omega$ contributes to only one of the two arms, and the cross-group-size covariance terms ($m \ne m'$) are $o_P(1)$ by the same arguments used in the proof of Lemma~\ref{lemma:var-g-1}. The generalization of Lemmas~\ref{lemma:var-sutva} and~\ref{lemma:var-est} then follows.

Lemma~\ref{lemma:normality-sampling-1} is the generalization of Lemmas~\ref{lemma:normality-ind-sampling} and~\ref{lemma:normality-sampling}.
\begin{lemma}[Asymptotic normality for group-level sample averages]\label{lemma:normality-sampling-1}
For any $\tilde \tau_{\omega}(m)$ with $|\tilde \tau_{\omega}(m)| \le C$ defined for group $\omega$ with size $m$, $\check \tau = \sum_{m \in \mathcal{M}}\phi(m)\frac{\sum_{\omega = 1}^{|\mathcal{G}_{\mathcal{U}}(m)|}W_{\omega}\tilde \tau_{\omega}(m)}{|\mathcal{G}_{\mathcal{U}}(m)|\pi(m)} = \sum_{m \in \mathcal{M}}\phi(m)\check \tau(m)$, and  $\tau = \sum_{m \in \mathcal{M}}\phi(m)\frac{\sum_{\omega = 1}^{|\mathcal{G}_{\mathcal{U}}(m)|}\tilde \tau_{\omega}(m)}{|\mathcal{G}_{\mathcal{U}}(m)|} = \sum_{m \in \mathcal{M}}\phi(m) \tau(m)$, where $|\mathcal{M}| = K$, $\phi(m) \ge 0$, and $\sum_{m \in \mathcal{M}} \phi(m) = 1$,
$$
\begin{aligned}
\sqrt{G_N}\left(\check \tau - \tau\right) \rightsquigarrow \mathcal{N}\left(0, G_N\Var\left[\check \tau\right]\right)
\end{aligned}
$$
as $n, N$ grow to infinity under our asymptotic regime.
\end{lemma}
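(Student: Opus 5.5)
The plan is to reproduce the coupling argument of Lemma~\ref{lemma:normality-sampling} on the sequential Design Case 2G, and then reduce the multi-size statistic to a weighted sum of independent within-size averages. Fix the (deterministic) sequence of target sizes $(M_1,\dots,M_{G_N})$. Under Design Case 2G the outcome is an ordered tuple $\tilde{\mathcal{A}}=(\mathcal{A}_1,\dots,\mathcal{A}_{G_N})$ with $|\mathcal{A}_g|=M_g$. The auxiliary design draws each $\mathcal{A}_g$ independently and uniformly from $\mathcal{G}_{\mathcal{U}}(M_g)$, with replacement and ignoring overlaps. Call the two laws $Q$ and $Q^*$. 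As in the homogeneous case, $Q$ is uniform on its support, $\mathrm{supp}(Q)\subseteq\mathrm{supp}(Q^*)$, and $q^*/q=\prod_g \binom{n-s_{g-1}}{M_g}/\binom{n}{M_g}$, where $s_g=\sum_{h\le g}M_h$. Therefore
\[
d_{TV}(Q,Q^*)=1-\prod_{g=1}^{G_N}\frac{\binom{n-s_{g-1}}{M_g}}{\binom{n}{M_g}}\le \sum_{g}\frac{M_g s_{g-1}}{n-N}\le \frac{\bar m^2 G_N^2}{n-N}\to 0,
\]
with $\bar m=\max\mathcal{M}$. This uses the product bound $\binom{n-s}{m}/\binom{n}{m}\ge (1-s/(n-m))^m$ together with $1-\prod(1-x_g)\le\sum x_g$, and it tends to zero because $N^2/n\to 0$. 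Since $\check\tau$ is a measurable function of $\tilde{\mathcal{A}}$, the data-processing inequality transfers any distributional limit of $\sqrt{G_N}(\check\tau-\tau)$ under $Q^*$ to the same limit under $Q$.

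Under $Q^*$, $\check\tau(m)$ is the mean of $G_N(m)$ i.i.d.\ bounded draws $\tilde\tau_{\omega}(m)$ with $\omega$ uniform on $\mathcal{G}_{\mathcal{U}}(m)$. Draws for different sizes are mutually independent. Hence
\[
\sqrt{G_N}(\check\tau-\tau)=\sum_{m\in\mathcal{M}}\phi(m)\sqrt{G_N/G_N(m)}\,\sqrt{G_N(m)}\bigl(\check\tau(m)-\tau(m)\bigr)
\]
is a fixed finite sum ($K$ terms) of independent components. I would apply Lindeberg--Feller directly to the whole triangular array of $G_N$ independent bounded summands $\phi(M_g)\sqrt{G_N}\,(\tilde\tau_g-\tau(M_g))/G_N(M_g)$. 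This avoids needing each $G_N(m)/G_N$ to converge. The Lindeberg condition holds whenever the total variance $v_N=\sum_m \phi(m)^2 (G_N/G_N(m))\,\sigma^2_m$ is bounded away from zero and each summand is $o(1)$. The latter requires $\sqrt{G_N}/G_N(m)\to 0$ for every $m$ with $\phi(m)>0$, which holds if $G_N(m)\asymp G_N$, the natural standing assumption implicit in the main text's fixed design. Nondegeneracy comes from Assumption~\ref{assn:po}. The conclusion is $(\check\tau-\tau)/\sqrt{v_N/G_N}\rightsquigarrow\mathcal{N}(0,1)$ under $Q^*$.

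The remaining step is to identify the variance. By Lemma~\ref{lemma:var-g-1}, $G_N\Var_Q[\check\tau]\simeq v_N$; that is, the restricted-design variance has the same leading term as the independent-design variance, since the cross-size covariances are $O(G_N\bar m^2/n)=o(1)$. So the normalization in the statement can be written with $G_N\Var[\check\tau]$, interpreted as ratio-standardized, exactly as in the homogeneous case.

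The main obstacle is the interaction between the coupling and the varying allocation $G_N(m)$. The TV bound above is uniform in the ordering of sizes, so that step is fine. The delicate point is that $v_N$ may not converge, so the limit must be stated as a standardized convergence, and the Lindeberg condition needs every positively weighted cell to grow. I would first try to impose $\liminf G_N(m)/G_N>0$ for each $m$ with $\phi(m)>0$, which is presumably the intended regime. If a cell is allowed to grow more slowly, I would instead normalize by $\Var[\check\tau]^{1/2}$ rather than $\sqrt{G_N}$ and verify Lindeberg using $\max_g|\text{summand}|/\sqrt{v_N}\to 0$, which still holds provided $\min_m G_N(m)\to\infty$.
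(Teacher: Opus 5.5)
Your proposal is correct and follows essentially the same route as the paper. You couple the sequential restricted design with the independent with-replacement design, show $d_{TV}(Q,Q^*) = 1 - q^*/q \to 0$ under $N^2/n \to 0$, obtain normality under $Q^*$ from a CLT for independent bounded draws, and transfer the limit back to $Q$. Beyond that shared route, your version adds rigour the paper lacks:
\begin{itemize}
\item an explicit inequality for the TV bound, where the paper uses a chain of $\simeq$ approximations;
\item a single Lindeberg--Feller argument on the pooled triangular array, which removes any need for $G_N(m)/G_N$ to converge;
\item an explicit matching of the $Q^*$ variance to $G_N\Var[\check\tau]$ through the varying-size sampling-variance lemma;
\item the side conditions $G_N(m)\to\infty$ and nondegeneracy, which the paper leaves implicit.
\end{itemize}
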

\begin{proof}
Again, we consider the original and the hypothetical experiment, where the outcome is an ordered $G_N$-tuple: $\tilde{\mathcal{A}} = \left(\mathcal{A}_1,\mathcal{A}_2,\dots,\mathcal{A}_{G_N}\right)$. In the hypothetical experiment, the probability of observing any given tuple $\tilde{\mathcal{A}}$ is $q^* = \frac{1}{\prod_{k=1}^K |\mathcal{G}_{\mathcal{U}}(m_k)|^{G_N(m_k)}}$. Under the original sampling algorithm, the probability of observing the same tuple is
$$
q = \frac{1}{\prod_{k=1}^K \prod_{r=0}^{G_N(m_k) - 1}{n - \sum_{k'=1}^{k-1} G_N(m_{k'})m_{k'} -  rm_k \choose m_k}}.
$$
Remember that $Q^*$ and $Q$ denote the distribution of $\tilde{\mathcal{A}}$ in the hypothetical and original experiments, respectively. The following still holds:
$$
supp(Q) \subset supp(Q^*), |supp(Q)| = \prod_{k=1}^K \prod_{r=0}^{G_N(m_k) - 1}{n - \sum_{k'=1}^{k-1} G_N(m_{k'})m_{k'} -  rm_k \choose m_k} = \frac{1}{q}.
$$
The total variation distance between $Q^*$ and $Q$ is
$$
\begin{aligned}
& d_{TV}(Q^*, Q) =  1 - \frac{q^*}{q} = 1 - \frac{\prod_{k=1}^K \prod_{r=0}^{G_N(m_k) - 1}{n - \sum_{k'=1}^{k-1} G_N(m_{k'})m_{k'} -  rm_k \choose m_k}}{\prod_{k=1}^K |\mathcal{G}_{\mathcal{U}}(m_k)|^{G_N(m_k)}} \\
= & 1 - \prod_{k=1}^K\prod_{r=0}^{G_N(m_k) - 1}\frac{{n - \sum_{k'=1}^{k-1} G_N(m_{k'})m_{k'} -  rm_k \choose m_k}}{{n \choose m_k}} \simeq 1 - \prod_{k=1}^K\prod_{r=0}^{G_N(m_k) - 1}\left(1 - \frac{\sum_{k'=1}^k G_N(m_{k'})m_{k'} +  rm_k}{n}\right)^{m_k} \\
\simeq & 1 - \prod_{k=1}^K\prod_{r=0}^{G_N(m_k) - 1}e^{-\frac{\left(\sum_{k'=1}^{k-1} G_N(m_{k'})m_{k'} +  rm_k\right)m_k}{n}} \simeq \frac{\sum_{k=1}^K\sum_{r=0}^{G_N(m_k) - 1}\left(\sum_{k'=1}^{k-1} G_N(m_{k'})m_{k'} +  rm_k\right)m_k}{2n} \\
= & \frac{\sum_{k=1}^KG_N(m_k) m_k\sum_{k'=1}^{k-1} G_N(m_{k'})m_{k'} + \sum_{k=1}^K m_k^2\sum_{r=0}^{G_N(m_k) - 1} r}{2n} \simeq \frac{\left(\sum_{k=1}^K G_N(m_k) m_k\right)^2}{2n} \to 0,
\end{aligned}
$$
It remains to establish the limiting distribution under $Q^*$ and transfer it to $Q$ via the TV bound.

\emph{Limit under $Q^*$.} Under the hypothetical experiment $Q^*$, groups are sampled with replacement from $\mathcal{G}_\mathcal{U}(m_k)$ for each size class, and the samples in different size classes are independent. Within each $k$, $\check\tau(m_k) = \frac{1}{G_N(m_k)}\sum_{g=1}^{G_N(m_k)}\tilde\tau_{g}(m_k)$ is the average of $G_N(m_k)$ i.i.d.\ draws from $\{\tilde\tau_\omega : |\mathcal{A}_\omega| = m_k\}$. The boundedness $|\tilde\tau_\omega(m_k)| \le C$ ensures the Lindeberg condition is trivially satisfied, so by the classical CLT,
$$
\sqrt{G_N(m_k)}\,\big(\check\tau(m_k) - \tau(m_k)\big) \rightsquigarrow \mathcal{N}(0, G_N(m_k)\Var\left[\check\tau(m_k)\right]) \quad \text{under } Q^*.
$$
Independence across $k$ under $Q^*$, together with $\sqrt{G_N}(\check\tau - \tau) = \sum_{k=1}^K \sqrt{G_N/G_N(m_k)}\phi(m_k)\cdot\sqrt{G_N(m_k)}(\check\tau(m_k) - \tau(m_k))$, gives
$$
\sqrt{G_N}\,(\check\tau - \tau) \rightsquigarrow \mathcal{N}\bigg(0, \sum_{k=1}^K \frac{G_N\phi^2(m_k)}{G_N(m_k)}\,G_N(m_k) \Var\left[\check\tau(m_k)\right]\bigg) = \mathcal{N}\left(0, G_N\Var\left[\check \tau\right]\right)
\quad \text{under } Q^*.
$$

\emph{Transfer to $Q$.} Since $d_{TV}(Q^*, Q) \to 0$, for any bounded continuous $f$, $\big|\E_{Q^*}[f(\sqrt{G_N}(\check\tau - \tau))] - \E_Q[f(\sqrt{G_N}(\check\tau - \tau))]\big| \le 2\|f\|_\infty\,d_{TV}(Q^*, Q) \to 0$, so the same weak limit holds under the original sampling distribution $Q$.
\end{proof}

When $\tilde \tau_{\omega}(m) = \tau_{\omega}(m)$,  Lemma~\ref{lemma:normality-sampling-1} implies that 
$$
\sqrt{G_N}\left(\tau^{(\phi)}_{SAME} - \tau^{(\phi)}_{PAME}\right) \rightsquigarrow \mathcal{N}\left(0, \sigma_{\tau}^2\right).
$$
We can similarly derive the asymptotic distribution of $\hat \tau_N$ using the combination of the CLT in \citet{ohlsson1989asymptotic} and the coupling argument from \citet{hajek1960limiting}. So, Theorem 1 in the main text can be proved in the case of varying group sizes.

\subsection{Monte Carlo simulation design}

The first Monte Carlo study, implemented in \texttt{group interaction simulation-new.R}, examines variance estimation and confidence-interval coverage. Its results are reported in Tables~\ref{tab:sim-outcomes}, \ref{tab:sim-sdv}, and \ref{tab:int_test} and Figure~\ref{fig:sim-vratio} of the main text. The second study, implemented in \texttt{group interaction simulation-s-covariate.R}, examines covariate adjustment and is summarized in Figure~\ref{fig:sim-cov}. The two studies use related data-generating processes, with the covariate-adjustment study adding a second individual covariate and comparing several covariate-adjusted estimators.

\subsubsection{Simulation study 1: variance estimation and coverage}
\label{app:sim1}

\subsubsection*{Population and potential outcomes}

For each sample size
\[
N\in\{80,200,300,400\},
\]
we construct a fixed finite population $\mathcal{U}$ with respective population sizes
\[
n\in\{404{,}772,\ 4{,}000{,}000,\ 11{,}022{,}704,\ 22{,}627{,}420\}.
\]
These values follow $n\propto N^{5/2}$, with each population size rounded up to be divisible by the common group size $M=4$. The resulting values of $N^2/n$ are $0.0158$, $0.0100$, $0.0082$, and $0.0071$, respectively. Before sampling, the population is partitioned into $n/M$ pre-existing groups of size $M$.

In each Monte Carlo replication, the sampling procedure depends on the design case. Under the fixed-group design, we sample $G_N=N/M$ pre-existing population groups without replacement and retain all of their members. Under the random-group design, we instead draw $N$ individual units by simple random sampling without replacement and randomly partition them into $G_N=N/M$ experimental groups of size $M$. Thus, pre-existing population groups determine sampling and experimental group membership only under the fixed-group design.

Each unit $i$ carries a scalar covariate drawn independently from the standard uniform distribution:
\[
X_i \sim \mathrm{Uniform}(0,1).
\]
The covariates and population-group random effects are drawn once for each $(N,n)$ setting and held fixed across the $2{,}000$ Monte Carlo replications. Control potential outcomes combine a group-level random effect, which induces population homophily, with an individual covariate term:
\[
Y_i(0)
=
\tfrac{1}{2}\,\zeta_{g(i)}
+
\tfrac{1}{2}\,X_i,
\qquad
\zeta_g \sim \mathcal{N}(0,1),
\]
where $\zeta_{g(i)}$ is shared by all members of population group $g(i)$. Because the shared component $\tfrac{1}{2}\zeta_g$ dominates the within-group-independent component $\tfrac{1}{2}X_i$, the intraclass correlation of $Y(0)$ within the population groups is approximately $0.9$, consistent with Table~\ref{tab:sim-outcomes}.

The treated potential outcome is $Y_i(1) = Y_i(0) + \tau_i$, where the unit-level treatment effect
$\tau_i$ is the way in which effect heterogeneity and interference enter. Under the no-interference
(SUTVA) specification,
\[
\tau_i \;\sim\; \mathcal{N}\!\left(X_i,\,1\right),
\]
so each unit's effect is centered at its own covariate value with idiosyncratic noise but does not depend
on groupmates. Under the interference specification, $\tau_i$ additionally depends on the covariates of
unit $i$'s $M-1 = 3$ groupmates $\mathcal{P}_i$ through a quadratic peer-exposure function:
\[
\tau_i \;=\; \underbrace{\mathcal{N}\!\left(X_i,\,1\right)}_{\text{own effect}}
\;+\; \sqrt{X_i}\,\Bigg(\sum_{j \in \mathcal{P}_i}\beta_{1j}\,X_j
\;+\; \sum_{j \in \mathcal{P}_i}\sum_{\ell \in \mathcal{P}_i}\beta_{2,j\ell}\,X_j X_\ell\Bigg).
\]
The coefficient vectors are themselves drawn once per simulation call as
half-normals,
\[
\beta_1 \in \mathbb{R}^{M-1},\quad \beta_{1j} = \tfrac{1}{2}\,\lvert\mathcal{N}(0,1)\rvert;
\qquad
\beta_2 \in \mathbb{R}^{(M-1)^2},\quad \beta_{2,j\ell} = \tfrac{1}{2}\,\lvert\mathcal{N}(0,1)\rvert.
\]
The multiplicative $\sqrt{X_i}$ makes the peer-exposure contribution interact with the unit's own
covariate. Observed outcomes are $Y_i = Z_i Y_i(1) + (1-Z_i) Y_i(0)$, where $Z_i$ is the realized
individual treatment indicator induced by the design.

\subsubsection*{Monte Carlo benchmark and estimators}

For Monte Carlo replication $b$, let $\mathcal{N}_b$ denote the sampled
units and let $\mathcal{A}_{ib}$ denote the realized experimental group
containing unit $i$. We compute the realized unit-level treatment effect
\[
\Delta_{ib}
=
Y_i(1,\mathcal{A}_{ib})-Y_i(0,\mathcal{A}_{ib})
\]
and record its sample average,
\[
\widetilde{\tau}_b
=
\frac{1}{N}\sum_{i\in\mathcal{N}_b}\Delta_{ib}.
\]
In the code, $\widetilde{\tau}_b$ is stored as
\texttt{ATEs[b] = mean(tau)}. Under interference, this quantity is
conditional on the group partition realized in replication $b$; it is
therefore not itself the population average marginalized effect.

We define the population-level Monte Carlo benchmark as
\[
\tau_{\mathrm{MC}}
=
\E_{\mathrm{MC}}\!\left[\widetilde{\tau}_b\right],
\]
where the expectation is taken over the sampling and group-formation
processes and, under the implemented data-generating process, over the
idiosyncratic treatment-effect disturbances. We approximate this
expectation using
\[
\widehat{\tau}_{\mathrm{MC}}
=
\frac{1}{B}\sum_{b=1}^{B}\widetilde{\tau}_b,
\]
with $B=2{,}000$. This quantity is stored in the code as
\texttt{popu\_ate = mean(result[["ATEs"]])} and is used as the benchmark
for evaluating bias and confidence-interval coverage.

The interpretation of $\tau_{\mathrm{MC}}$ depends on the design and
outcome specification. It corresponds to the population average
treatment effect under no interference, to the population average total
effect under interference with fixed groups, and to the population
average marginalized effect under interference with randomly formed
groups. Thus, although the same Monte Carlo calculation is used in all
four cases, the resulting benchmark represents a different causal
estimand across cases.

The treatment-effect estimator in each replication is the unweighted
difference in means,
\[
\widehat{\tau}_b
=
\overline{Y}_{1b}-\overline{Y}_{0b},
\]
computed as the coefficient on treatment in the OLS regression of
$Y_i$ on $Z_i$. Because all experimental groups have the common size
$M=4$ and treatment is assigned to equal numbers of groups, this
estimator coincides with the homogeneous-group version of the estimator
defined in the main text. Confidence-interval coverage is evaluated
against $\widehat{\tau}_{\mathrm{MC}}$.

For inference we compute three variance estimators on the same fitted regression:
\begin{itemize}
\item \textbf{HR2 (heteroskedasticity-robust).} The unit-level HC2 estimator, obtained from
\texttt{sandwich::vcovHC(lm\_fit, type = "HC2")}.
\item \textbf{CR2 (cluster-robust).} Group-clustered sandwich estimators obtained from
\texttt{clubSandwich::vcovCR(lm\_fit, cluster = group\_ind, type = "CR2")},
clustering on the realized experimental group. CR2 is the bias-reduced cluster-robust estimator.
\end{itemize}
Ninety-five percent confidence intervals based on HR2 and CR2 use the normal critical value $1.96$. In
addition, we form a small-sample CR2 interval using a Satterthwaite degrees-of-freedom adjustment via
\texttt{clubSandwich::conf\_int(lm\_fit, vcov = vcovCR(\ldots, "CR2"), test = "Satterthwaite")}. Coverage is evaluated against the population estimand
(the across-replication mean of $\tau$). As a robustness check the script also computes a group-aggregated
analysis---regressing group means $\bar Y_g$ on the group treatment indicator and applying HC2 at the
group level---which reproduces the cluster-level variance.

The reported quantities are the Monte Carlo mean of $\widehat{\tau}$, its true across-replication variance, the mean of each variance estimator, the empirical coverage of the corresponding confidence intervals, and the standard deviation of each variance estimator as a measure of stability. The intraclass correlations of $Y(0)$ and $Y(1)$ are computed using \texttt{Hmisc::deff()}, taking the design-effect routine's $\rho$ output for the realized experimental groups.

For the randomly formed-group scenarios, the script also implements the ICC-based interference test described in the main text. Within each treatment-by-group-size cell, it computes the one-way ANOVA statistic
\[
F_{z,m}
=
\frac{\operatorname{MSB}_{z,m}}
{\operatorname{MSW}_{z,m}}
\]
and its one-sided upper-tail $p$-value under the corresponding $F$ distribution. The cell-specific $p$-values are combined using Fisher's statistic,
\[
S_{\mathrm{ICC}}
=
-2\sum_{j=1}^{J}\log(p_j),
\]
which is compared with a $\chi^2_{2J}$ reference distribution. Under the homogeneous group-size design, there are two informative cells, corresponding to the treatment and control arms, so $J=2$. The rejection rate across Monte Carlo replications is reported in Table~\ref{tab:int_test}.

\subsubsection{Simulation study 2: covariate adjustment}
\label{app:sim2}

The covariate-adjustment study considers the random-group design with interference. We construct a finite population of $n=400{,}000$ units, initially partitioned into groups of common size $M=4$. In each of $2{,}000$ Monte Carlo replications, we draw $N=400$ individual units by simple random sampling without replacement and randomly partition them into $G_N=N/M=100$ experimental groups of size four. We then assign exactly half of the experimental groups to treatment. The pre-existing population groups are not used in sampling or experimental group formation, although their shared random effects remain components of the units' control potential outcomes. The population covariates, pre-existing-group random effects, and peer-effect coefficients defined below are generated once at the beginning of the Monte Carlo study and then treated as fixed. Across replications, we redraw only the individual-level sample, the random partition into experimental groups, and the group-level treatment assignment.

\subsubsection*{Population and potential outcomes}

Each population unit has two independently distributed covariates,
\[
X_{1i} \overset{\mathrm{iid}}{\sim} \mathrm{Uniform}(0,1),
\qquad
X_{2i} \overset{\mathrm{iid}}{\sim} \mathrm{Uniform}(0,1).
\]
The control potential outcome is
\[
Y_i(0)
=
\frac{1}{2}\zeta_{g(i)}
+
\frac{1}{2}X_{1i}
+
X_{2i}^{2}
+
X_{1i}X_{2i},
\qquad
\zeta_g \overset{\mathrm{iid}}{\sim}\mathcal{N}(0,1),
\]
where $\zeta_{g(i)}$ is shared by the members of unit $i$'s pre-existing population group.

To define the treatment effect, index unit $i$'s three experimental groupmates by
$j_{i1},j_{i2},j_{i3}$. The code draws
\[
\beta_q=\frac{1}{2}|B_q|,
\qquad
\gamma_{q\ell}=\frac{1}{2}|C_{q\ell}|,
\qquad
B_q,C_{q\ell}\overset{\mathrm{iid}}{\sim}\mathcal{N}(0,1),
\]
for $q,\ell\in\{1,2,3\}$. The same length-three vector
$\boldsymbol{\beta}$ is used for the linear terms in both covariates, and the same
$3\times3$ array $\boldsymbol{\gamma}$ is used for all three quadratic and cross-covariate terms. The unit-level treatment effect is
\[
\begin{aligned}
	\tau_i
	={}&
	X_{1i}+X_{2i}^{2} \\
	&+
	3\sqrt{X_{1i}}
	\Bigg[
	\sum_{q=1}^{3}
	\beta_q
	\left(
	X_{1,j_{iq}}+X_{2,j_{iq}}
	\right) \\
	&\hspace{2.7cm}
	+
	\sum_{q=1}^{3}\sum_{\ell=1}^{3}
	\gamma_{q\ell}
	\Big(
	X_{1,j_{iq}}X_{1,j_{i\ell}}
	+
	X_{2,j_{iq}}X_{2,j_{i\ell}}
	+
	X_{1,j_{iq}}X_{2,j_{i\ell}}
	\Big)
	\Bigg].
\end{aligned}
\]
Thus,
\[
Y_i(1)=Y_i(0)+\tau_i,
\qquad
Y_i=Z_iY_i(1)+(1-Z_i)Y_i(0).
\]

\subsubsection*{Covariate-adjusted estimators}

For each unit, we construct the leave-one-out peer means
\[
\bar X_{1,-i}
=
\frac{1}{3}\sum_{q=1}^{3}X_{1,j_{iq}},
\qquad
\bar X_{2,-i}
=
\frac{1}{3}\sum_{q=1}^{3}X_{2,j_{iq}}.
\]
We compare seven estimators:

\begin{enumerate}
	\item \textbf{No covariates:}
	\[
	Y_i \sim 1+Z_i.
	\]
	
	\item \textbf{Individual covariates:}
	\[
	Y_i \sim 1+Z_i+X_{1i}+X_{2i}.
	\]
	
	\item \textbf{Peer-mean covariates:}
	\[
	Y_i \sim 1+Z_i+\bar X_{1,-i}+\bar X_{2,-i}.
	\]
	
	\item \textbf{Individual and peer-mean covariates:}
	\[
	Y_i
	\sim
	1+Z_i+X_{1i}+X_{2i}
	+\bar X_{1,-i}+\bar X_{2,-i}.
	\]
	
	\item[5--7.] \textbf{Lin adjustment:} For each of the three covariate sets in specifications 2--4, we also apply Lin's centered, treatment-interacted regression adjustment \citep{lin2013agnostic}, implemented using \texttt{estimatr::lm\_lin}.
\end{enumerate}

For specifications 1--4, the point estimate is the coefficient on $Z_i$ from the corresponding OLS regression. For specifications 5--7, it is the treatment coefficient returned by \texttt{lm\_lin}. CR2 variance estimates are computed for all adjusted estimators, clustering on the realized experimental group.

The target displayed in the figure is
\[
\tau_{\mathrm{MC}}
=
\frac{1}{B}
\sum_{b=1}^{B}
\left(
\frac{1}{N}\sum_{i=1}^{N}\tau_{ib}
\right),
\qquad B=2{,}000,
\]
which is the Monte Carlo approximation to the PAME under random group formation. The figure plots the sampling distribution of each estimator and reports its empirical bias and root mean squared error relative to $\tau_{\mathrm{MC}}$. For estimator $k$, these quantities are
\[
\operatorname{Bias}\!\left(\widehat{\tau}^{(k)}\right)
=
\frac{1}{B}\sum_{b=1}^{B}\widehat{\tau}^{(k)}_b
-
\tau_{\mathrm{MC}},
\]
and
\[
\operatorname{RMSE}\!\left(\widehat{\tau}^{(k)}\right)
=
\left[
\frac{1}{B}\sum_{b=1}^{B}
\left(
\widehat{\tau}^{(k)}_b-\tau_{\mathrm{MC}}
\right)^2
\right]^{1/2}.
\]
Thus, the displayed RMSE is calculated from the empirical Monte Carlo distribution for every estimator rather than from its reported CR2 variance estimate.

\end{appendix}

\end{singlespace}

\clearpage
\bibliography{group-interaction}

@article{hajek1960limiting,
  title={Limiting distributions in simple random sampling from a finite population},
  author={H{\'a}jek, Jaroslav},
  journal={A Magyar Tudom{\'a}nyos Akad{\'e}mia Matematikai Kutat{\'o} Int{\'e}zet{\'e}nek k{\"o}zlemenyei},
  volume={5},
  number={3},
  pages={361--374},
  year={1960},
  publisher={Akad{\'e}miai Kiad{\'o}}
}

@article{ohlsson1989asymptotic,
  title={Asymptotic normality for two-stage sampling from a finite population},
  author={Ohlsson, Esbj{\"o}rn},
  journal={Probability theory and related fields},
  volume={81},
  number={3},
  pages={341--352},
  year={1989},
  publisher={Springer}
}

@book{polyanskiy2025information,
  title={Information theory: From coding to learning},
  author={Polyanskiy, Yury and Wu, Yihong},
  year={2025},
  publisher={Cambridge university press}
}

@article{sobel2006randomized,
  title={What do randomized studies of housing mobility demonstrate? Causal inference in the face of interference},
  author={Sobel, Michael E},
  journal={Journal of the American Statistical Association},
  volume={101},
  number={476},
  pages={1398--1407},
  year={2006},
  publisher={Taylor \& Francis}
}

@article{pustejovsky2018small,
  title={Small-sample methods for cluster-robust variance estimation and hypothesis testing in fixed effects models},
  author={Pustejovsky, James E and Tipton, Elizabeth},
  journal={Journal of Business \& Economic Statistics},
  volume={36},
  number={4},
  pages={672--683},
  year={2018},
  publisher={Taylor \& Francis}
}

@article{cox1958planning,
  title={Planning of experiments.},
  author={Cox, David Roxbee},
  year={1958},
  publisher={Wiley}
}

@article{iacovone2022improving,
  title={Improving management with individual and group-based consulting: Results from a randomized experiment in Colombia},
  author={Iacovone, Leonardo and Maloney, William and McKenzie, David},
  journal={The Review of Economic Studies},
  volume={89},
  number={1},
  pages={346--371},
  year={2022},
  publisher={Oxford University Press}
}

@article{imbens2016robust,
  title={Robust standard errors in small samples: Some practical advice},
  author={Imbens, Guido W and Kolesar, Michal},
  journal={Review of Economics and Statistics},
  volume={98},
  number={4},
  pages={701--712},
  year={2016},
  publisher={The MIT Press}
}

@article{pals2008individually,
  title={Individually randomized group treatment trials: A critical appraisal of frequently used design and analytic approaches (American Journal of Public Health (2008) 98 (1418-1424},
  author={Pals, SL and Murray, DM and Alfano, CM and Shadish, WR and Hannan, PJ and Baker, WL},
  journal={American Journal of Public Health},
  volume={98},
  number={12},
  year={2008},
  publisher={American Public Health Association}
}

@article{candel2025efficient,
  title={Efficient design of cluster randomized trials and individually randomized group treatment trials.},
  author={Candel, Math JJM and van Breukelen, Gerard JP},
  journal={Psychological Methods},
  year={2025},
  publisher={American Psychological Association}
}

@article{su2021model,
  title={Model-assisted analyses of cluster-randomized experiments},
  author={Su, Fangzhou and Ding, Peng},
  journal={Journal of the Royal Statistical Society Series B: Statistical Methodology},
  volume={83},
  number={5},
  pages={994--1015},
  year={2021},
  publisher={Oxford University Press}
}

@article{lohr2014partially,
  title={Partially Nested Randomized Controlled Trials in Education Research: A Guide to Design and Analysis. NCER 2014-2000.},
  author={Lohr, Sharon and Schochet, Peter Z and Sanders, Elizabeth},
  journal={National Center for Education Research},
  year={2014},
  publisher={ERIC}
}

@book{imbens2015causal,
  title={Causal inference in statistics, social, and biomedical sciences},
  author={Imbens, Guido W and Rubin, Donald B},
  year={2015},
  publisher={Cambridge university press}
}

@article{lin2013agnostic,
  title={Agnostic notes on regression adjustments to experimental data: Reexamining Freedman’s critique},
  author={Lin, Winston},
  journal={The Annals of Applied Statistics},
  pages={295--318},
  year={2013},
  volume={7},
  number={1}
}

@article{aronow2017estimating,
  title={Estimating Average Causal Effects Under General Interference, With Application To a Social Network Experiment},
  author={Aronow, Peter M and Samii, Cyrus},
  journal={The Annals of Applied Statistics},
  pages={1912--1947},
  year={2017},
  volume={11},
  number={4},
  publisher={JSTOR}
}

@article{li2019randomization,
  title={Randomization Inference for Peer Effects},
  author={Li, Xinran and Ding, Peng and Lin, Qian and Yang, Dawei and Liu, Jun S},
  journal={Journal of the American Statistical Association},
  volume={114},
  number={528},
  pages={1651--1664},
  year={2019},
  publisher={Taylor and Francis Ltd}
}

@article{bai2024primer,
  title={A Primer on the Analysis of Randomized Experiments and a Survey of some Recent Advances},
  author={Bai, Yuehao and Shaikh, Azeem M and Tabord-Meehan, Max},
  journal={arXiv preprint arXiv:2405.03910},
  year={2024}
}

@article{splawa1990application,
  title={On the application of probability theory to agricultural experiments. Essay on principles. Section 9.},
  author={Splawa-Neyman, Jerzy and Dabrowska, Dorota M and Speed, Terrence P},
  journal={Statistical Science},
  pages={465--472},
  year={1990},
  publisher={JSTOR}
}

@article{bai2022optimality,
  title={Optimality of matched-pair designs in randomized controlled trials},
  author={Bai, Yuehao},
  journal={American Economic Review},
  volume={112},
  number={12},
  pages={3911--3940},
  year={2022},
  publisher={American Economic Association 2014 Broadway, Suite 305, Nashville, TN 37203}
}

@article{hu2022average,
  title={Average direct and indirect causal effects under interference},
  author={Hu, Yuchen and Li, Shuangning and Wager, Stefan},
  journal={Biometrika},
  volume={109},
  number={4},
  pages={1165--1172},
  year={2022},
  publisher={Oxford University Press}
}

@article{savje2021average,
  title={Average treatment effects in the presence of unknown interference},
  author={S{\"a}vje, Fredrik and Aronow, Peter and Hudgens, Michael},
  journal={Annals of statistics},
  volume={49},
  number={2},
  pages={673},
  year={2021},
  publisher={NIH Public Access}
}

@article{aronow2021spillover,
  title={Spillover effects in experimental data},
  author={Aronow, Peter M and Eckles, Dean and Samii, Cyrus and Zonszein, Stephanie},
  journal={Advances in experimental political science},
  volume={289},
  pages={319},
  year={2021},
  publisher={Cambridge University Press Cambridge}
}

@article{abadie2023should,
  title={When should you adjust standard errors for clustering?},
  author={Abadie, Alberto and Athey, Susan and Imbens, Guido W and Wooldridge, Jeffrey M},
  journal={The Quarterly Journal of Economics},
  volume={138},
  number={1},
  pages={1--35},
  year={2023},
  publisher={Oxford University Press}
}

@article{hudgens2008toward,
  title={Toward causal inference with interference},
  author={Hudgens, Michael G and Halloran, M Elizabeth},
  journal={Journal of the American Statistical Association},
  volume={103},
  number={482},
  pages={832--842},
  year={2008},
  publisher={Taylor \& Francis}
}

@article{samii2012equivalencies,
  title={On equivalencies between design-based and regression-based variance estimators for randomized experiments},
  author={Samii, Cyrus and Aronow, Peter M},
  journal={Statistics \& Probability Letters},
  volume={82},
  number={2},
  pages={365--370},
  year={2012},
  publisher={Elsevier}
}

@article{mendelberg2014does,
  title={Does descriptive representation facilitate women's distinctive voice? How gender composition and decision rules affect deliberation},
  author={Mendelberg, Tali and Karpowitz, Christopher F and Goedert, Nicholas},
  journal={American journal of political science},
  volume={58},
  number={2},
  pages={291--306},
  year={2014},
  publisher={Wiley Online Library}
}

\end{document}